\newcommand{\shortOnly}[1]{\ifthenelse{\boolean{short}}{#1}{}}
\newcommand{\onlyShort}[1]{\ifthenelse{\boolean{short}}{#1}{}}
\newcommand{\longOnly}[1]{\ifthenelse{\boolean{short}}{}{#1}}
\newcommand{\onlyLong}[1]{\ifthenelse{\boolean{short}}{}{#1}}
\newcommand{\shortLong}[2]{\ifthenelse{\boolean{short}}{#2}{#1}}
\newcommand{\longShort}[2]{\ifthenelse{\boolean{short}}{#2}{#1}} 
\def\polylog{\operatorname{polylog}}
\newcommand{\undecided}{\textsc{undecided}}
\newcommand{\catone}{\textsc{category-1}}
\newcommand{\cattwo}{\textsc{category-2}}
\newcommand{\catthree}{\textsc{category-3}}
\newtheorem{observation}{Observation}
\def\polylog{\operatorname{polylog}}
\def\cA{\mathcal{A}}
\def\cB{\mathcal{B}}
\let\originalleft\left
\let\originalright\right
\renewcommand{\left}{\mathopen{}\mathclose\bgroup\originalleft}
\renewcommand{\right}{\aftergroup\egroup\originalright}
\newcommand{\Prob}[1]{\text{Pr}\left[#1\right]\xspace}
\newcommand{\var}[1]{\text{Var}\left[#1\right]\xspace}
\newcommand{\poly}{\operatorname{poly}}
\renewcommand{\paragraph}[1]{\medskip\noindent{\bf #1.}\xspace}
\renewcommand{\le}{\leqslant}
\renewcommand{\ge}{\geqslant}
\renewcommand{\geq}{\geqslant}
\renewcommand{\leq}{\leqslant}
\title{Symmetry Breaking in the \textsc{Congest} Model: Time- and Message-Efficient Algorithms for Ruling Sets}
\titlerunning{Symmetry Breaking in the \textsc{Congest} Model} 
\author[1]{Shreyas Pai}
\author[2]{Gopal Pandurangan}
\author[1]{Sriram V.~Pemmaraju}
\author[1]{Talal Riaz}
\author[3]{Peter Robinson}
\affil[1]{Department of Computer Science, The University of Iowa, Iowa City, IA 52242, USA. \hbox{E-mail}:~{\tt \{shreyas-pai, sriram-pemmaraju, talal-riaz\}@uiowa.edu}. Suported in part by NSF grant CCF-1318166.}
\affil[2]{Department of Computer Science, University of Houston, Houston, TX 77204, USA. \hbox{E-mail}:~{\tt gopalpandurangan@gmail.com}. Supported, in part, by NSF grants CCF-1527867, CCF-1540512, and IIS-1633720.}
\affil[3]{Department of Computer Science, Royal Holloway, University of London, UK.  \hbox{E-mail}:~{\tt  peter.robinson@rhul.ac.uk}.}
\authorrunning{S. Pai, G. Pandurangan, S.\,V. Pemmaraju, T. Riaz, and P. Robinson} 
\subjclass{C.2.4 Distributed Systems, F.1.2 Modes of Computation, F.2.2 Nonnumerical Algorithms and Problems, G.2.2 Graph Theory}
\keywords{Congest model; Local model; Maximal independent set, Message complexity; Round complexity; Ruling sets; Symmetry breaking}
\begin{document}
\maketitle

\begin{abstract}

We study local symmetry breaking problems in the \textsc{Congest} model, 
focusing on ruling set problems, which generalize the fundamental Maximal Independent 
Set (MIS) problem. The \textit{time (round) complexity} of MIS (and ruling sets) have attracted much attention
in the \textsc{Local} model. Indeed, recent results (Barenboim et al., FOCS 2012, Ghaffari SODA 2016) for the MIS problem have tried to break the long-standing
$O(\log n)$-round ``barrier'' achieved by Luby's algorithm, but these yield $o(\log n)$-round complexity only when the maximum degree $\Delta$ 
is somewhat small relative to $n$. 
\textit{More importantly, these results apply only in the \textsc{Local} model.}
In fact, the best known time bound in the \textsc{Congest} model is still $O(\log n)$ (via Luby's algorithm) even for somewhat small $\Delta$.
Furthermore, \textit{message complexity} has been largely ignored in the context of local
symmetry breaking. Luby's algorithm takes $O(m)$ messages on $m$-edge graphs
and this is the best known bound with respect to messages. Our work is motivated 
by the following central question: can we break the $\Theta(\log n)$ time complexity barrier 
and the $\Theta(m)$ message complexity barrier in the
\textsc{Congest} model for MIS or closely-related symmetry breaking problems?

This paper presents progress towards this question for the distributed ruling set problem in the \textsc{Congest} model. 
A \textit{$\beta$-ruling set} is an independent set such that every node in the graph is at most $\beta$ hops from a node in the 
independent set. We present the following results:
\begin{itemize}
\item {\em Time Complexity:} We show that we can break the $O(\log n)$ ``barrier'' for 2- and 3-ruling sets.
We compute 3-ruling sets in $O\left(\frac{\log n}{\log \log n}\right)$ rounds with high probability (whp). 
More generally we show that 2-ruling sets can be computed in $O\left(\log \Delta \cdot (\log n)^{1/2 + \varepsilon} + \frac{\log n}{\log\log n}\right)$ rounds for any $\varepsilon > 0$,
which is $o(\log n)$ for a wide range of $\Delta$ values (e.g., $\Delta = 2^{(\log n)^{1/2-\varepsilon}}$).
These are the first 2- and 3-ruling set algorithms to improve over the $O(\log n)$-round complexity of Luby's algorithm in the \textsc{Congest} model.
\item {\em Message Complexity:}  We show an $\Omega(n^2)$ lower bound on the message complexity of computing an MIS (i.e., 1-ruling set) which holds also for randomized algorithms 
and present a contrast to this by showing a randomized algorithm for 2-ruling sets that, whp, uses  
only $O(n \log^2 n)$ messages and runs in $O(\Delta \log n)$ rounds. This
is the first message-efficient algorithm known for ruling sets, which has message complexity nearly 
linear in $n$ (which is optimal up to a polylogarithmic factor).
\end{itemize} 
Our results are a step toward understanding the  time and message complexity of symmetry breaking problems in the \textsc{Congest} model.
\end{abstract}


\section{Introduction}
\label{sec:intro}
The \textit{maximal independent set (MIS)} problem is one of the fundamental problems in distributed
computing because it is a simple and elegant abstraction of ``local symmetry breaking,'' an issue
that arises repeatedly in many distributed computing problems.
About 30 years ago Alon, Babai, and Itai \cite{AlonBabaiItai} and Luby \cite{LubySICOMP86} 
presented a randomized algorithm for MIS, running on $n$-node graphs in $O(\log n)$ rounds with high probability (whp)\footnote{Throughout, we 
use ``with high probability (whp)'' to mean with probability at least $1 - 1/n^c$, for some $c \ge 1$.}.
Since then the MIS problem has been studied extensively and
recently, there has been some exciting progress in designing faster MIS algorithms.
For $n$-node graphs with maximum degree $\Delta$, Ghaffari \cite{GhaffariSODA16} presented 
an MIS algorithm running in $O(\log \Delta) + 2^{O(\sqrt{\log\log n})}$ rounds, improving over the algorithm of Barenboim et al.~\cite{BarenboimEPSJACM2016}
that runs in $O(\log^2 \Delta) + 2^{O(\sqrt{\log\log n})}$ rounds.
Ghaffari's MIS algorithm is the first MIS algorithm to improve over the round complexity
of Luby's algorithm when $\Delta = 2^{o(\log n)}$ and $\Delta$ is bounded below by $\Omega(\log n)$.\footnote{For $\Delta = o(\log n)$, the deterministic MIS algorithm of Barenboim, Elkin, and 
Kuhn \cite{BarenboimEKSICOMP2014} that runs $O(\Delta + \log^* n)$ rounds is faster than Luby's algorithm.}

While the results of Ghaffari and Barenboim et al.~constitute a significant improvement in our understanding of the round complexity of the MIS problem, it should
be noted that both of these results are in the \textsc{Local} model.
The \textsc{Local} model \cite{PelegBook} is a synchronous, message-passing model of distributed computing in which 
\textit{messages can be arbitrarily large}.
Luby's algorithm, on the other hand, is in the \textsc{Congest} model \cite{PelegBook} and uses small messages, i.e., messages 
that are $O(\log n)$ bits or $O(1)$ words in size. 
In fact, to date, {\em Luby's algorithm is the fastest known MIS algorithm in the \textsc{Congest} model}; this is the case even when $\Delta$ is 
between $\Omega(\log n)$ and $2^{o(\log n)}$.
For example, for the class of graphs with $\Delta = 2^{O(\sqrt{\log n})}$, Ghaffari's MIS algorithm runs
in $O(\sqrt{\log n})$ rounds whp in the \textsc{Local} model, but we don't know how to compute
an MIS for this class of graphs in $o(\log n)$ rounds in the \textsc{Congest} model.
It should be further noted that the MIS algorithms of Ghaffari and Barenboim et al.~use messages of size
$O(\text{poly}(\Delta) \log n)$ (see Theorem 3.5 in \cite{BarenboimEPSJACM2016}), which can be much larger than
the $O(\log n)$-sized messages allowed in the \textsc{Congest} model; in fact
these algorithms do not run within the claimed number of rounds even if
messages of size $O(\poly(\log n))$ were allowed.
Furthermore, large messages arise in these algorithms from a topology-gathering step in which cluster-leaders
gather the entire topology of their clusters in order to compute an MIS of their cluster -- this 
step seems fundamental to these algorithms and there does not seem to be an efficient way to simulate this
step in the \textsc{Congest} model.

Ruling sets are a natural generalization of MIS and have also been well-studied in the \textsc{Local} model. 
An \textit{$(\alpha, \beta)$-ruling set} \cite{GPS87} is a node-subset $T$ such that (i) any two distinct nodes in $T$ are at
least $\alpha$ hops apart in $G$ and (ii) every node in the graph is at most $\beta$ hops from some node in $T$.
A $(2, \beta)$-ruling set is an independent set and since such ruling sets are the main focus of this paper, 
we use the shorthand \textit{$\beta$-ruling sets} to refer to $(2, \beta)$-ruling sets. (Using this terminology 
an MIS is just a 1-ruling set.)
The above mentioned MIS results due to Barenboim et al.~and Ghaffari have also led to the 
\textit{sublogarithmic}-round algorithms for $\beta$-ruling sets for $\beta \ge 2$. 
The earliest instance of such a result was the algorithm of Kothapalli and Pemmaraju \cite{KP12} that 
computed a 2-ruling set in $O(\sqrt{\log \Delta} \cdot (\log n)^{1/4})$ rounds by 
using an earlier version of the Barenboim et al.~\cite{BEPS12FOCS} MIS algorithm.
There have been several further improvements in the running time of ruling set algorithms culminating in the 
$O(\beta \log^{1/\beta}\Delta) + 2^{O(\sqrt{\log\log n})}$ round $\beta$-ruling set algorithm of
Ghaffari \cite{GhaffariSODA16}.
This result is based on a recursive sparsification procedure of Bisht et al.~\cite{BishtKPPODC2014}
that reduces the $\beta$-ruling set problem on graphs with maximum degree $\Delta$ to an MIS problem
on graphs with degree much smaller.
Ghaffari's $\beta$-ruling set result is also interesting because it identifies a separation between 2-ruling sets and 
MIS (1-ruling sets).  
This follows from the lower bound of $\Omega\left(\min\left\{\sqrt{\frac{\log n}{\log\log n}}, \frac{\log \Delta}{\log\log \Delta}\right\}\right)$ for MIS due to Kuhn et al.~\cite{KuhnMoscibrodaWattenhoferFull}.
Again, we emphasize here that all of these improvements for ruling set algorithms are \textit{only in the \textsc{Local} model} because these ruling set algorithms rely on \textsc{Local}-model MIS algorithms to ``finish off'' the processing
of small degree subgraphs.
As far as we know, prior to the current work there has been no $o(\log n)$-round, $\beta$-ruling set algorithm in the 
\textsc{Congest} model for any $\beta = O(1)$.

The focus of all the above results has been on the time (round) complexity. \textit{Message complexity}, on the other hand, has been largely ignored in the context of local
symmetry breaking problems such as MIS and ruling sets. 
For a graph with $m$ edges, Luby's algorithm uses $O(m)$ messages in the \textsc{Congest} model and until now there has been 
no MIS or ruling set algorithm that uses $o(m)$ messages. We note that the ruling set algorithm of Goldberg et al.~\cite{GPS87} 
which can be implemented in the \textsc{Congest} model \cite{danupon} also takes at least $\Omega(m)$ messages.

The focus of this paper is symmetry breaking problems in the \textsc{Congest} model and the specific question that
motivates our work is whether we can go beyond Luby's algorithm in the \textsc{Congest} model for MIS or any closely-related symmetry breaking
problems such as \textit{ruling sets}. In particular, {\em can we break the $\Theta(\log n)$ 
time complexity barrier and the $\Theta(m)$ message complexity barrier, in the \textsc{Congest} model for MIS and ruling sets?}
In many applications, especially in resource-constrained communication networks and in distributed processing of large-scale data
it is important to design distributed algorithms that have  low time complexity as well as message complexity. In particular, optimizing messages as well as time has direct applications
to the performance of distributed algorithms  in other models such as the $k$-machine model \cite{soda15}.

We present two sets of results, one set focusing on time (round) complexity and the other on message complexity.

\noindent
 {\bf 1.}  {\em Time complexity:} (cf. Section \ref{sec:time}) We first show that 2-ruling sets can be computed in the \textsc{Congest} model in 
$O\left(\log \Delta \cdot (\log n)^{1/2 + \varepsilon} + \frac{\log n}{\log\log n}\right)$ rounds whp for $n$-node graphs with maximum degree
$\Delta$ and for any $\varepsilon > 0$.
This is the first algorithm to improve over Luby's algorithm, by running in $o(\log n)$ rounds in the \textsc{Congest} model, 
for a wide range of values of $\Delta$.
Specifically our algorithm runs in $o(\log n)$ rounds for $\Delta$ bounded above by $2^{(\log n)^{1/2-\varepsilon}}$ 
for any value of $\varepsilon > 0$.
Using this 2-ruling set algorithm as a subroutine, we show how to compute 3-ruling sets (for any graph) in $O\left(\frac{\log  n}{\log\log n}\right)$ rounds whp
in the \textsc{Congest} model. We also present a simple 5-ruling set algorithm based on Ghaffari's MIS
algorithm that runs in $O(\sqrt{\log n})$ rounds in the \textsc{Congest} model.

\noindent
{\bf 2.}  {\em Message complexity:} (cf. Sections \ref{sec:msg} and \ref{sec:lower}) We show that $\Omega(n^2)$ is a fundamental lower bound for computing an MIS (i.e., 1-ruling set)
by showing that there exists graphs (with $m = \Theta(n^2)$ edges)  where any distributed MIS algorithm needs $\Omega(n^2)$ messages. 
In contrast, we show that 2-ruling sets can be computed using significantly smaller message complexity. In particular,
we present a randomized 2-ruling set algorithm that, whp, uses $O(n \log^2 n)$ messages and runs in $O(\Delta \log n)$ rounds. This
is the first $o(m)$-message algorithm known for ruling sets, which takes near-linear (in $n$) message complexity.
This message bound is tight up to a polylogarithmic factor, since we show that any $O(1)$-ruling set (randomized) algorithm that succeeds with probability $1-o(1)$ requires $\Omega(n)$ messages in the worst case. We also present a simple 2-ruling set algorithm that uses $O(n^{1.5}\log n)$ messages, but  runs faster --- in $O(\log n)$ rounds.

Our results make progress towards understanding the complexity of symmetry breaking, in particular with respect to ruling sets, in the \textsc{Congest} model.
With regards to time complexity, our results, for the first time, show that one can obtain $o(\log n)$ round algorithms for ruling sets  in the \textsc{Congest} model.
With regards to message complexity, our results are (essentially) tight: while MIS needs quadratic (in $n$) messages in the worst case, 2-ruling sets can
be computed using near-linear (in $n$) messages. We discuss key problems left open by our work in Section \ref{sec:conc}.
\onlyShort{Other related work and omitted proofs  can be found in the full version (in the Appendix).}

\subsection{Distributed Computing Model} \label{sec:model}
We consider the standard synchronous \textsc{Congest} model \cite{PelegBook}  described as follows.

We are given a distributed network of $n$ nodes, modeled as an undirected graph $G$.
Each node hosts a processor with limited initial knowledge. 
We assume that nodes have unique \texttt{ID}s (this is not essential, but
simplifies presentation), and at the beginning of the computation each 
node is provided its \texttt{ID} as input.
Thus, a node has only {\em local} knowledge\footnote{Our near-linear message-efficient algorithm (Section \ref{sec:msg}) does not require knowledge
of $n$ or $\Delta$, whereas our time-efficient algorithms (Section \ref{sec:time}) assume knowledge of $n$ and $\Delta$
(otherwise it will work up to a given $\Delta$).}. Specifically we assume that each node has ports (each port having a unique port number); each incident edge is connected
to one distinct port. 
This model is  referred to as the {\em clean network model} in \cite{PelegBook} and is also sometimes referred to as the
$KT_0$ model, i.e., the initial (K)nowledge of
all nodes is restricted (T)ill radius 0 (i.e., just the local knowledge) \cite{awerbuch}.  

Nodes are allowed to communicate through the edges of the graph $G$ and it is assumed
that communication is synchronous and occurs in discrete rounds (time steps). 
In each round, each node can perform some local computation including accessing a private source of randomness, and can exchange (possibly distinct) $O(\log n)$-bit messages with each of its neighboring nodes. 
This model of distributed computation is called the $\textsc{Congest}(\log n)$
model or simply the \textsc{Congest} model \cite{PelegBook}. 

\onlyLong{
\subsection{Related Work}
\label{section:relatedWork}

As one would expect, \textsc{Congest} model symmetry breaking algorithms are easier for \textit{sparse graphs}.
There is a deterministic $(\Delta + 1)$-coloring algorithm due to Barenboim, Elkin, and Kuhn \cite{BarenboimEKSICOMP2014} that runs in the \textsc{Congest} model in $O(\Delta) + \frac{1}{2} \log^* n$ rounds.
This can be used to obtain an $o(\log n)$-round \textsc{Congest} model MIS algorithm, when
$\Delta = o(\log n)$. 
For trees, Lenzen and Wattenhofer \cite{LenzenWattenhofer} presented an MIS algorithm that runs in
$O(\sqrt{\log n} \log\log n)$ rounds whp in the \textsc{Congest} model.
More generally, for graphs with arboricity bounded above by $\alpha$, Pemmaraju and Riaz \cite{PemmarajuRiazOPODIS2016} present
an MIS algorithm that runs in $O(\mbox{poly}(\alpha) \cdot \sqrt{\log n \log\log n})$ rounds in the
\textsc{Congest} model.
Other research that is relevant to ruling sets, but is only in the \textsc{Local} model,
includes the multi-trials technique of Schneider and Wattenhofer \cite{wattenhofer2010podc} and 
the deterministic ruling set algorithms of Schneider et al.~\cite{SchneiderEW13}.

As mentioned earlier, in the context of local symmetry breaking problems such as MIS or ruling sets, message complexity 
has not received much attention.
However, in the context of global problems (i.e., problems where one needs to traverse the entire network and, hence, take at least $\Omega(D)$ time) such as leader election (which can be thought as a ``global" symmetry breaking) and minimum spanning tree (MST),
message complexity has been very well studied. Kutten et al.~\cite{DBLP:journals/jacm/KuttenPP0T15} showed that $\Omega(m)$ is a message lower bound for leader election  and this applies
to randomized Monte-Carlo algorithms as well. This lower bound also applies to the 
Broadcast and MST problems. 
In a similar spirit, in this paper, we show that in general, $\Omega(m)$ is a message lower bound for the MIS problem as well.
In contrast, we show that this lower bound does not hold for ruling sets which admit a near-linear (in $n$) message complexity.

It is important to point out that the current paper as well as most prior work on 
leader election and MST 
~\cite{awerbuch-optimal,chin-almostlinear,DistMst:Gallager,DistMst:Garay,kutten-domset,gafni-election,elkin-faster,stoc17,DBLP:journals/jacm/KuttenPP0T15})
assume the $KT_0$ model. 
\onlyLong{However, one can also consider a stronger model where nodes have initial knowledge of the identity of their neighbors.
This model is called the {\em $KT_1$ model}. 
Awerbuch et al.~\cite{awerbuch} show that $\Omega(m)$ is a message lower bound for MST  for the $KT_1$ model,  
if one allows only comparison-based algorithms (i.e., algorithms that can operate on IDs only by comparing them); this lower bound for comparison-based algorithms
applies to \textit{randomized} algorithms as well. 
Awerbuch et al.~\cite{awerbuch} also show that the $\Omega(m)$ message lower bound  applies even to non-comparison based (in particular, algorithms that can perform arbitrary local computations)  {\em deterministic} algorithms 
in the \textsc{Congest} model  that terminate in a time bound that depends only on the graph topology (e.g., a function of $n$). 
On the other hand, for {\em randomized non-comparison-based} algorithms, it turns out that the message lower bound
of $\Omega(m)$ does not  apply in the $KT_1$ model. 
King et al.~\cite{KingKT15} showed a surprising and elegant result (also see \cite{icdcn17}): in the $KT_1$ model one can give a randomized Monte Carlo algorithm to construct a MST or a spanning tree in $\tilde{O}(n)$ messages  ($\Omega(n)$ is a message lower bound) and in $\tilde{O}(n)$ time (this algorithm uses randomness and is not comparison-based). While this algorithm shows that one can get $o(m)$ message complexity (when $m = \omega(n \polylog n)$),  it  is {\em not} time-optimal (it can take up to $\tilde{O}(n)$ rounds).}

One can also use the King et al.~algorithm to build a spanning tree using $\tilde{O}(n)$ messages and then use
time encoding (see e.g., \cite{podc15,sirocco16}) to collect the entire graph topology at the root of the 
spanning tree. 
Hence, using this approach any problem (including, MST, MIS, ruling sets, etc.) can be solved 
using $\tilde{O}(n)$ 
messages in the $KT_1$ model. However, this is highly time inefficient as it takes exponential (in $n$) rounds.
}

\subsection{Technical Overview}

\subsubsection{Time Bounds}
The MIS algorithms of Barenboim et al.~\cite{BarenboimEPSJACM2016} and Ghaffari \cite{GhaffariSODA16} use a 
2-phase strategy, attributed to Beck \cite{Beck1991}, who used it in his algorithmic version of 
the Lov\'{a}sz Local Lemma. In the first phase, some number of iterations of a Luby-type ``base algorithm'' are run 
(in the \textsc{Congest} model). During this phase, some nodes join the MIS and these nodes and their 
neighbors become inactive. The first phase is run until the graph is ``shattered'', i.e., the nodes that
remain active induce a number of ``small'' connected components.
Once the graph is ``shattered'', the algorithm switches to the second, deterministic phase to ``finish 
off'' the problem in the remaining small components. 
It is this second phase that relies critically on the use of the \textsc{Local} model in order to run fast.

In general, in the \textsc{Congest} model it is not clear how to take advantage of low degree or low 
diameter or small size of a connected component to solve symmetry-breaking
problems (MIS or ruling sets) faster than the $O(\log n)$-round bound provided by Luby's algorithm.
In both Barenboim et al.~\cite{BarenboimEPSJACM2016} and Ghaffari \cite{GhaffariSODA16}, a key ingredient of the
second ``finish-off'' phase is the deterministic network decomposition algorithm of
Panconesi and Srinivasan \cite{PS96} that can be used to compute an MIS in $O(2^{\sqrt{\log s}})$ rounds 
on a graph with $s$ nodes in the \textsc{Local} model.
If one can get connected components of size $O(\mbox{poly}(\log n))$ then it is possible to finish the 
rest of the algorithm in $2^{O(\sqrt{\log\log n})}$ rounds and this is indeed the source of the 
``$2^{O(\sqrt{\log\log n})}$'' term in the round complexity of these MIS algorithms.
In fact, the Panconesi-Srinivasan network decomposition algorithm itself runs in the \textsc{Congest} model,
but once the network has been decomposed into small diameter clusters then algorithms simply resort to
gathering the entire topology of a cluster at a cluster-leader and this requires large messages.
Currently, there seem to be no techniques for symmetry breaking problems in the \textsc{Congest} model that are able to take 
advantage of the diameter of a network being small.
As far as we know, there is no $o(\log n)$-round $O(1)$-ruling set algorithm in the \textsc{Congest} model \textit{even for constant-diameter graphs}, for any constant larger than 1.
To obtain our sublogarithmic $\beta$-ruling set algorithms (for $\beta = 2, 3, 5$), we use simple 
\textit{greedy} MIS and 2-ruling set
algorithms to process ``small'' subgraphs in the final stages of algorithm.
These greedy algorithms just exchange $O(\log n)$-bit \texttt{ID}s with neighbors and run in the \textsc{Congest}
model, but they can take $\Theta(s)$ rounds in the worst case, where $s$ is the length of the \textit{longest} path in
the subgraph. So our main technical contribution is to show that it is possible to do a randomized shattering of the graph so 
that none of the fragments have any long paths.

\subsubsection{Message Bounds}


As mentioned earlier, our message complexity lower bound for MIS and the contrasting the upper bound for 2-ruling set
show a clear separation between these two problems.
At a high-level, our lower bound argument
exploits the idea of ``bridge crossing" (similar to \cite{DBLP:journals/jacm/KuttenPP0T15}) whose intuition is as follows.
We consider two types of related graphs: (1) a {\em complete bipartite graph} and
(2) a  {\em random bridge graph} which consists of a two  (almost-)complete bipartite graphs connected by  two ``bridge" edges chosen randomly  (see Figure \ref{fig:messagelb} and  Section
\ref{sec:lower} for a detailed description of the construction). Note that the MIS in a complete bipartite graph is exactly the set of all nodes
belonging to one part of the partition. The crucial observation is that if no messages are sent over bridge edges, then the bipartite graphs
on either side of the bridge edges behave identically which can result in choosing adjacent nodes in MIS, a violation.
In particular,  we show that if an algorithm sends $o(n^2)$ messages, then with probability at least $1-o(1)$
 that there will be {\em no} message sent over the bridge edges and by symmetry, with probability at least $1/2$, 
 two nodes that are connected by the bridge edge will be chosen to be in the MIS.

Our 2-ruling set algorithm with low-message-complexity crucially uses the fact that, unlike in an MIS, in a 2-ruling set there are 3 
categories of nodes: $\catone$ (nodes that are in the independent set), 
$\cattwo$ (nodes that are neighbors of $\catone$) and $\catthree$ nodes (nodes that
 are neighbors of $\cattwo$, but not neighbors of $\catone$). 
Our algorithm, inspired by Luby's MIS algorithm, uses three main ideas. First, $\cattwo$ and $\catthree$ nodes don't initiate messages; only undecided nodes (i.e., nodes whose category are not yet decided) initiate messages. Second, an undecided node   does ``checking sampling" (cf. Algorithm \ref{alg:2-ruling-set}) first before it does local broadcast, i.e., it samples a few of its neighbors to see if they are any $\cattwo$ nodes; if so
it becomes a $\catthree$ node immediately. 
Third, an undecided node tries to enter the ruling set with probability that is \textit{always} inversely proportional to its
{\em original} degree, i.e., $\Theta(1/d(v))$, where $d(v)$ is the degree of $v$.
This is unlike in Luby's algorithm, where the
marking probability is inversely proportional to its \textit{current  degree}.
These ideas along with an \textit{amortized} charging argument \cite{Algorithm:Cormen} yield our result: 
an algorithm using $O(n \log^2 n)$ messages and running in $O(\Delta \log n)$ rounds.

\section{Time-Efficient Ruling Set Algorithms in the \textsc{Congest} model}
\label{sec:time}
The main result of this section is a 2-ruling set algorithm in the \textsc{Congest} model 
that runs in $O\left(\log \Delta \cdot (\log n)^{1/2 + \varepsilon} + \frac{\log n}{\log\log n}\right)$ rounds whp, for any constant $\varepsilon > 0$, 
on $n$-node graphs with maximum degree $\Delta$.
An implication of this result is that for graphs with $\Delta = 2^{O((\log n)^{1/2-\varepsilon})}$ for any $\varepsilon > 0$, we can compute a 2-ruling set in 
$O\left(\frac{\log n}{\log\log n}\right)$ rounds in the \textsc{Congest} model.
A second implication is that using this 2-ruling set algorithm as a subroutine, we can compute a 
3-ruling set for \textit{any} graph in $O\left(\frac{\log n}{\log\log n}\right)$ rounds whp in the 
\textsc{Congest} model.
These are the first sublogarithmic-round \textsc{Congest} model algorithms for 2-ruling sets (for a wide range
of $\Delta$) and 3-ruling sets.
Combining some of the techniques used in these algorithms with the first phase of
Ghaffari's MIS algorithm \cite{GhaffariSODA16}, we also show that a 5-ruling set can 
be computed in $O(\sqrt{\log n})$ rounds whp in the \textsc{Congest} model.
 
\subsection{The 2-ruling Set Algorithm}
\label{sec:2-ruling-time}
Our 2-ruling set algorithm (described in pseudocode below) takes as input an $n$-node graph with
maximum degree $\Delta \le 2^{\sqrt{\log n}}$, along with a parameter $\varepsilon > 0$.
For $\Delta > 2^{\sqrt{\log n}}$, we simply execute Luby's MIS algorithm to solve the problem.
The algorithm consists of $\lceil \log \Delta \rceil$
\textit{scales} and in scale $t$, $1 \le t \le \lceil \log \Delta \rceil$, nodes with degrees
at most $\Delta_t := \Delta/2^{t-1}$ are processed.
Each scale consists of $\Theta(\log^{1/2 + \varepsilon} n)$ \textit{iterations}.
In an iteration $i$, in scale $t$, each undecided node independently joins a set $M_{i,t}$ with
probability $1/(\Delta_t \cdot \log^{\varepsilon} n)$ (Line \ref{alg1:marking}).
Neighbors of nodes in $M_{i,t}$, that are themselves not in $M_{i,t}$, are set aside and placed
in a set $W_{i,t}$ (Lines \ref{alg1:ifwit}-\ref{alg1:endifwit}). The nodes in $M_{i,t} \cup W_{i,t}$ have decided their fate and we continue to
process the undecided nodes.
At the end of all the iterations in a scale $t$, any undecided node that still has $\Delta_t/2$ or more 
undecided neighbors is placed in a ``bad'' set $B_t$ for that scale (Line \ref{alg1:badset}), thus effectively deciding the fate of all nodes
with degree at least $\Delta_t/2$.
We now process the set of scale-$t$ ``bad'' nodes, $B_t$, by simply running 
a greedy 2-ruling set algorithm on $B_t$ (Line \ref{alg1:2rulset}).
We also need to process the sets $M_{i,t}$ (Line \ref{alg1:MIS}) and for that we rely on 
a greedy 1-ruling set algorithm (i.e., a greedy MIS algorithm).
Note that the $M_{i,t}$'s are all disconnected from each other since the $W_{i,t}$'s act 
as ``buffers'' around the $M_{i,t}$'s. Thus after all the scales are completed, we can compute
an MIS on all of the $M_{i,t}$'s in parallel.
Since each node in $W_{i,t}$ has a neighbor in $M_{i,t}$, this will guarantee that 
every node in $W_{i,t}$ has an independent set node at most 2 hops away.
In the following algorithm we use $deg_S(v)$ to denote the degree of a vertex $v$ in the $G[S]$, the graph
induced by $S$. 

\vfill
\RestyleAlgo{boxruled}
\begin{algorithm2e}[H]
  \caption{\textsc{2-ruling Set}(Graph $G = (V, E)$, $\varepsilon > 0$):}
  $I \leftarrow \emptyset$; $S \leftarrow V$\;
  \For {each scale $t = 1, 2,\ldots, \lceil \log \Delta \rceil$} { \label{alg1:scaleloop}
    Let $\Delta_t = \frac{\Delta}{2^{t-1}}$; $S_t \leftarrow S$\;
    \For{iteration $i = 1, 2, \ldots, \lceil c\cdot \log^{1/2 + \varepsilon} n\rceil$}{ \label{alg1:iterloop}

      Each $v\in S$ marks itself and joins  $M_{i,t}$ with probability $\frac{1}{\Delta_t \cdot \log^{\varepsilon} n}$\; \label{alg1:marking}
      \If{$v \in S$ is unmarked and a neighbor in $S$ is marked}{ \label{alg1:ifwit} $v$ joins $W_{i,t}$\; }\label{alg1:endifwit}
      $S \leftarrow S\setminus(M_{i,t} \cup W_{i,t})$\;

      }
      $B_t \leftarrow \{v \in S \mid deg_S(v) \geq \Delta_t/2\}$\; \label{alg1:badset}
      $S \leftarrow S \setminus B_t$\;
      $I \leftarrow I \cup \textsc{GreedyRulingSet}(G[S_t], B_t, 2)$\label{alg1:2rulset}\;
    }\label{alg1:endscaleloop}
    $I \leftarrow I \cup (\cup_t \cup_i \textsc{GreedyRulingSet}(G[S_t], M_{i,t}, 1))$\; \label{alg1:MIS}
    \textbf{return} $I$\;
\end{algorithm2e}

\noindent
The overall round complexity of this algorithm critically depends on the greedy 2-ruling set
algorithm terminating quickly on each $B_t$ (Line \ref{alg1:2rulset}) and the greedy 1-ruling set algorithm terminating
quickly on each $M_{i,t}$ (Line \ref{alg1:MIS}).
To be concrete, we present below a specific $\beta$-ruling set algorithm that greedily picks nodes 
by their \texttt{ID}s from a given node subset $R$.
\RestyleAlgo{boxruled}
\begin{algorithm2e}\caption{\textsc{GreedyRulingSet}(Graph $G = (V, E)$, $R \subseteq V$, integer $\beta > 0$):}
  $I \leftarrow \emptyset$; $U \leftarrow R$; \tcp{$U$ is the initial set of undecided nodes}
  \While{$U \not= \emptyset$}{
    \For{each node $v \in U$ in parallel}{
      \If{($v$ has higher \texttt{ID} than all neighbors in $U$)}{
        $I \leftarrow I \cup \{v\}$;\\
        $v$ and nodes within distance $\beta$ in $G$ are removed from $U$
      }
    }
  }
  \textbf{return} $I$
\label{algo:greedyRulingSet}
\end{algorithm2e}

\noindent
To show that the calls to this greedy ruling set algorithm terminate quickly, we introduce the notion of 
\textit{witness paths}.
If $\textsc{GreedyRulingSet}(G, R, \beta)$ runs for $p$ iterations (of the \textbf{while}-loop), then $R$
must contain a sequence of nodes $(v_1, v_2, \ldots, v_p)$ such that $v_i$, $1 \le i \le p$, joins the
independent set $I$ in iteration $i$ and node $v_i$, $1 < i \le p$, must contain an undecided node with higher \texttt{ID}
in its $1$-neighborhood in $G$, which was removed when $v_{i-1}$ and its $\beta$-neighborhood in $G$ were removed in iteration $i-1$.
We call such a sequence a \textit{witness path} for the execution of \textsc{GreedyRulingSet}.
Three simple properties of witness paths are needed in our analysis:
(i) any two nodes $v_i$ and $v_j$ in the witness path are at least $\beta +1$ hops away in $G$,
(ii) any two consecutive nodes $v_i$ and $v_{i+1}$ in the witness path are at most
$\beta +1$ hops away in $G$, and
(iii) $G[R]$ contains a simple path with $(p-1)(\beta + 1) + 1$ nodes, starting at node $v_1$, passing through
nodes $v_2, v_3, \ldots, v_{p-1}$ and ending at node $v_p$.

\noindent
To show that each $M_{i,t}$ can be processed quickly by the greedy 1-ruling set algorithm 
we show (in Lemma \ref{lemma:pathM}) that whp every witness path for the execution of the greedy
1-ruling set algorithm is short.
Similarly, to show that each $B_t$ can be processed quickly by the greedy 2-ruling set algorithm we
prove (in Lemma \ref{lemma:pathDelta}) that whp a ``bad'' set $B_t$ cannot contain a witness path
of length $\sqrt{\log n}$ or longer to the execution of the greedy 2-ruling set algorithm.
At the start of our analysis we observe that the set $S_t$, which is the set of undecided nodes at the start of scale $t$,
induces a subgraph with maximum degree $\Delta_t = \Delta/2^{t-1}$.
\begin{lemma}\label{lemma:pathM}
  For all scales $t$ and iterations $i$, $\textsc{GreedyRulingSet}(G[S_t], M_{i, t}, 1)$ runs
  in $O\left(\frac{\log n}{\varepsilon \log \log n}\right)$ rounds, whp.
\end{lemma}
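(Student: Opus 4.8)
The plan is to bound the number of \textbf{while}-loop iterations executed by $\textsc{GreedyRulingSet}(G[S_t], M_{i,t}, 1)$ via the witness-path machinery, by showing that whp the \emph{random} set $M_{i,t}$ does not contain a long witness path. Recall that if the greedy algorithm runs for $p$ iterations, then by property~(iii) of witness paths, specialized to $\beta = 1$, the induced subgraph $G[M_{i,t}]$ contains a simple path on $2p-1$ nodes. Hence it suffices to show that, whp, $G[M_{i,t}]$ contains \emph{no} simple path on $\ell$ nodes for a suitable $\ell = \Theta\!\left(\frac{\log n}{\varepsilon \log\log n}\right)$; this forces $2p - 1 < \ell$, i.e. $p = O\!\left(\frac{\log n}{\varepsilon \log\log n}\right)$, and since each \textbf{while}-loop iteration is implementable in $O(1)$ \textsc{Congest} rounds (a node only needs the \texttt{ID}s of its still-undecided neighbors), the claimed round bound follows.

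To estimate the probability, condition on the set $S$ of undecided nodes at the start of iteration $i$ of scale $t$. We have $S \subseteq S_t$, and, as already observed, $G[S_t]$ — and therefore $G[S]$ — has maximum degree at most $\Delta_t$. Conditioned on $S$, each node of $S$ joins $M_{i,t}$ independently with probability $q := \frac{1}{\Delta_t \log^{\varepsilon} n}$ (Line~\ref{alg1:marking}), and $M_{i,t} \subseteq S$. The number of simple paths on $\ell$ nodes in $G[S]$ is at most $n \cdot \Delta_t^{\ell - 1}$ (choose the first node, then each successive node among at most $\Delta_t$ neighbors of the previous one), and a fixed such path lies entirely inside $M_{i,t}$ with probability $q^{\ell}$. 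A union bound gives
\begin{equation*}
\Pr\!\left[\, G[M_{i,t}] \text{ contains a simple path on } \ell \text{ nodes} \,\right] \;\le\; n \cdot \Delta_t^{\ell-1} \cdot q^{\ell} \;=\; \frac{n}{\Delta_t \cdot \log^{\varepsilon \ell} n} \;\le\; \frac{n}{\log^{\varepsilon \ell} n}.
\end{equation*}
Taking $\ell = \left\lceil \frac{(c+2)\log n}{\varepsilon \log\log n}\right\rceil$ yields $\log^{\varepsilon \ell} n \ge n^{c+2}$, so the probability above is at most $n^{-(c+1)}$. Since the $2$-ruling set algorithm has at most $\lceil \log \Delta \rceil \le \lceil \sqrt{\log n}\,\rceil$ scales and $O(\log^{1/2+\varepsilon} n)$ iterations per scale — a $\polylog n$ number of $(t,i)$ pairs — a further union bound (absorbed by the spare factor of $n^{-1}$) shows that whp, simultaneously for all $t$ and $i$, $G[M_{i,t}]$ has no simple path on $\ell$ nodes, so $p = O\!\left(\frac{\log n}{\varepsilon \log\log n}\right)$ in every call.

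The only point requiring care is that $S_t$ (hence $S$) is determined by the randomness of earlier scales and iterations, so the path-counting estimate must be carried out \emph{conditionally} on that history; this is legitimate because the two facts it uses — that $G[S]$ has maximum degree $\le \Delta_t$, and that $M_{i,t}$ is obtained from $S$ by independent $q$-sampling — hold deterministically given the history. The degenerate case $\Delta_t = 1$ is harmless, since then $G[S]$ is a matching and no witness path of length $\ge 3$ exists. I do not anticipate a real obstacle; the only mildly delicate quantitative check is that, with $\varepsilon$ a positive constant, $\varepsilon \ell \log\log n = \Theta(\log n)$, so that $\log^{\varepsilon \ell} n$ is genuinely polynomially large in $n$, which is exactly what makes the union bound succeed.
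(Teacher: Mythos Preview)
Your proof is correct and follows essentially the same approach as the paper's: both use witness-path property~(iii) to reduce to bounding the probability that $G[M_{i,t}]$ contains a long simple path, then combine the path-count bound $n\cdot\Delta_t^{\,\ell-1}$ (the paper uses the slightly looser $n\cdot\Delta_t^{\,2p-1}$) with the independent marking probability $q^{\ell}$ and a union bound over paths and over all $(t,i)$. Your version is in fact a bit more careful than the paper's in that you explicitly condition on the history and work in the current undecided set $S\subseteq S_t$, and you note the degenerate $\Delta_t=1$ case; these refinements are sound but do not change the argument.
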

\begin{proof} 
 Consider an arbitrary scale $t$ and iteration $i$.
 By Property (iii) of witness paths, there is a simple path $P$ with $(2p-1)$ nodes in $G[S_t]$, all of whose nodes have joined $M_{i, t}$.
 Due to independence of the marking step (Line \ref{alg1:marking}) the probability that all nodes in $P$ join $M_{i,t}$ is at most 
 $(1/\Delta_t \cdot \log^{\varepsilon} n)^{2p-1}$.
 Since $\Delta(G[S_t]) \le \Delta_t$, the number of simple paths with $2p-1$ nodes in $G[S_t]$ are at most $n \cdot \Delta_t^{2p-1}$. 
 Using a union bound over all candidate simple paths with $2p-1$ nodes in $G[S_t]$, we see that the probability that there exists a 
 simple path in $G[M_{i,t}]$ of length $2p-1$ is at most:
  $ n\cdot \Delta_t^{2p-1}\cdot \left (\frac{1}{\Delta_t\log^{\varepsilon} n}\right)^{2p-1}  = n\cdot\frac{1}{(\log n)^{\varepsilon (2p-1)}}.$
    Picking $p$ to be the smallest integer such that $2p-1 \ge \frac{4\log n}{\varepsilon \log\log n}$, we get
    \[\Pr(\exists \text{ a simple path with $2p-1$ nodes that joins } M_{i,t}) \leq n \cdot\frac{1}{\left(2^{\log\log n}\right)^{\varepsilon \frac{4\log n}{\varepsilon\log\log n}}} =  n\cdot\frac{1}{n^4} = \frac{1}{n^3}. \]
	We have $O(\log \Delta \cdot (\log n)^{1/2+\varepsilon})$ different $M_{i,t}$'s. Using a union bound over these $M_{i,t}$'s, we see that the probability that there exists an $M_{i,t}$ 
    containing a simple path with $2p-1$ nodes is at most $n^{-2}$.
     Thus with probability at least $1 - 1/n^2$, all of the calls to $\textsc{GreedyRulingSet}(G[S_t], M_{i,t}, 1))$ (in Line \ref{alg1:MIS})
     complete in $O\left(\frac{\log n}{\log\log n}\right)$ rounds.
\end{proof}

\begin{lemma}\label{lemma:pathDelta}
  For all scales $t$, the call to $\textsc{GreedyRulingSet}(G[S_t], B_t, 2)$ takes $O(\sqrt{\log n})$ rounds whp.
\end{lemma}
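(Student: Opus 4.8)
The plan is to mirror the proof of Lemma~\ref{lemma:pathM}: bound the length of the longest witness path inside $B_t$ for the execution of \textsc{GreedyRulingSet} with $\beta=2$, then use Properties (i)--(iii) of witness paths to convert a long witness path into a simple path in $G[S_t]$ and union bound over all such simple paths. Concretely, I will show that for a suitable $q_0=\Theta(\sqrt{\log n})$, with high probability $B_t$ contains no witness path on $q_0$ nodes for this execution; since each iteration of the while-loop of \textsc{GreedyRulingSet} costs $O(\beta)=O(1)$ rounds and the number of while-loop iterations equals the witness-path length, this yields the claimed $O(\sqrt{\log n})$ round bound, and a union bound over the $\lceil\log\Delta\rceil\le\sqrt{\log n}$ scales preserves the whp guarantee. (Recall this algorithm is only invoked when $\Delta\le 2^{\sqrt{\log n}}$; larger $\Delta$ is handled by Luby's algorithm.)

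The key structural point -- and where the argument genuinely departs from Lemma~\ref{lemma:pathM} -- is what membership in $B_t$ costs a node. Because the undecided set $S$ only shrinks over the $L:=\lceil c\log^{1/2+\varepsilon}n\rceil$ iterations of scale $t$, and $B_t$ is formed from the undecided set remaining after all $L$ iterations together with the requirement $\deg_S(v)\ge\Delta_t/2$, every $v\in B_t$ must, at the start of \emph{each} iteration $j\in\{1,\dots,L\}$, still be undecided and still have at least $\Delta_t/2$ undecided neighbors. Conditioned on the history through iteration $j-1$, the probability that such a node survives iteration $j$ (i.e., is neither marked nor has a marked neighbor) is at most $(1-\tfrac1{\Delta_t\log^\varepsilon n})^{\Delta_t/2}\le e^{-1/(2\log^\varepsilon n)}$, since none of its $\ge\Delta_t/2$ undecided neighbors may be marked and each iteration uses fresh independent coins. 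Chaining over all $L$ iterations gives $\Pr[v\in B_t]\le e^{-L/(2\log^\varepsilon n)}\le e^{-\frac c2\sqrt{\log n}}$ for a fixed node $v$.

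To lift this to a whole witness path $v_1,\dots,v_q$, I will use Property (i): the selected nodes of a witness path are pairwise at distance $\ge\beta+1=3$ in $G$, so the closed $1$-neighborhoods $N[v_i]$ are pairwise disjoint. Hence, within a single iteration and conditioned on the past, the events ``$v_i$ survives iteration $j$'' depend on disjoint sets of marking coins and are mutually independent; combining this per-iteration independence with the chaining over iterations -- formally a telescoping argument on the nested events ``every $v_i$ is still undecided with $\ge\Delta_t/2$ undecided neighbors after iteration $j$'', which coincides with $\{v_1,\dots,v_q\subseteq B_t\}$ for $j=L$ -- yields $\Pr[v_1,\dots,v_q\subseteq B_t]\le e^{-qL/(2\log^\varepsilon n)}\le e^{-\frac{cq}2\sqrt{\log n}}$. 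By Property (iii), a witness path on $q$ nodes forces a simple path on $3q-2$ nodes in $G[B_t]\subseteq G[S_t]$, and there are at most $n\cdot\Delta_t^{3q-3}\le n\cdot 2^{3q\sqrt{\log n}}$ such simple paths (using $\Delta_t\le\Delta\le 2^{\sqrt{\log n}}$). A union bound then shows that, with $c$ a large enough absolute constant and $q_0=\Theta(\sqrt{\log n})$ chosen so that $(\tfrac c2-3\ln2)\,q_0\sqrt{\log n}$ dominates $\log n$, the probability that $B_t$ contains a witness path on $q_0$ nodes is $n^{-\Omega(1)}$; this gives the lemma.

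The main obstacle I anticipate is exactly the dependence handled in the third paragraph: unlike in Lemma~\ref{lemma:pathM}, ``$v\in B_t$'' is not an independent coin flip per node -- it depends on the evolving undecided set, whose dynamics are globally coupled. The resolution is the two-level independence above: monotonicity of $S$ reduces ``$v\in B_t$'' to ``$v$ survived every iteration while still heavy'', conditioning on the history handles the across-iteration dependence (fresh coins each iteration), and disjointness of the $N[v_i]$ from Property (i) handles the within-iteration dependence across path nodes. A secondary point to get right is that $c$ must be an absolute constant large enough to beat the $2^{\Theta(q\sqrt{\log n})}$ union-bound factor while keeping $q_0=O(\sqrt{\log n})$ and leaving the per-scale round cost $O(\log\Delta\cdot c\log^{1/2+\varepsilon}n)$ unchanged up to constants; and one should double-check the edge case of the last scale, where $\Delta_t$ may be as small as $1$ -- there $v\in B_t$ still forces at least one undecided neighbor per iteration, so the per-iteration survival bound $e^{-1/(2\log^\varepsilon n)}$ (hence the whole argument) still goes through.
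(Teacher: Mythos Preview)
Your proposal is correct and follows essentially the same approach as the paper: bound the probability that a fixed set of $q$ pairwise-distance-$\ge 3$ nodes all land in $B_t$ by $e^{-\Theta(q\sqrt{\log n})}$, then union bound over the $n\cdot\Delta_t^{O(q)}$ candidate witness paths, using $\Delta\le 2^{\sqrt{\log n}}$ and $q=\Theta(\sqrt{\log n})$. If anything, your treatment of the independence issue is more careful than the paper's: you explicitly telescope over iterations and invoke disjointness of the $N[v_i]$ only within each iteration (where marking uses fresh independent coins), whereas the paper asserts independence of the events $\{v_i\in B_t\}$ directly from disjoint neighborhoods---a statement that, as you correctly flag, is not immediate since which $\Delta_t/2$ neighbors of $v_i$ survive is itself random and depends on marking events outside $N[v_i]$. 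Your monotonicity-plus-conditioning argument is exactly the right way to make this rigorous.

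One minor point: for counting candidate witness paths it is slightly cleaner to invoke Property~(ii) directly (consecutive witness nodes are at distance $\le 3$ in $G[S_t]$, giving at most $n\cdot\Delta_t^{3(q-1)}$ sequences), as the paper does, rather than going through Property~(iii). Your route via simple paths on $3q-2$ nodes works, but strictly speaking your per-path probability bound $e^{-cq\sqrt{\log n}/2}$ requires the $q$ designated positions to be at pairwise distance $\ge 3$ in $G$, and an arbitrary simple path on $3q-2$ nodes need not have this property. This is harmless since you are really union-bounding over witness sequences (for which the distance condition holds by Property~(i)) and only using the simple-path count as an upper bound on their number, but it is worth stating that way.
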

\begin{proof}
Consider a length-$p$ witness path $P$ for the execution of $\textsc{GreedyRulingSet}(G[S_t], B_t, 2)$ (Line \ref{alg1:2rulset}). 
By Property (i) of witness paths, all pairs of nodes in $P$ are at distance at least 3 from each other. 
Fix a scale $t$. We now calculate the probability that all nodes in $P$ belong to $B_t$.
Consider some node  $v \in P$. For $v$ to belong to $B_t$, it must have not marked itself in all iterations of scale $t$ 
and moreover at least $\Delta_t/2$ neighbors of $v$ in $S_t$ must not have marked themselves in  any iteration of scale $t$. 
Since the neighborhoods of any two nodes in $P$ are disjoint, the event that $v$ joins $B_t$ is independent of any other node in $P$ joining $B_t$. Therefore,
$$\Pr(P \text{ is in } B_t )  \leq \prod_{v \in P}{\Pr\left(v \text{ and at least $\Delta_t/2$ neighbors do not mark themselves in scale } t\right)}.$$
This can be bounded above by
$\prod_{v \in P}{\left(1-\frac{1}{\Delta_{t}(\log n)^{\varepsilon}}\right)^{\frac{\Delta_{t}}{2} \cdot c(\log n)^{1/2 + \varepsilon}}} \leq \exp \left(-\frac{c}{2} \cdot (\log n)^{1/2} \cdot p\right)$.
  Plugging in $p = \sqrt{\log n}$ we see that this probability is bounded above by $n^{-c/2}$.
  By Property (ii) of witness paths and the fact that $\Delta(G[S_t]) \le \Delta_t$, we know that there are at most $n \cdot (\Delta_t)^{3p}$
  length-$p$ candidate witness paths.
  Using a union bound over all of these, we get that the probability that there exists a
  a witness path that joins $B_t$ is at most $n \Delta^{3p} \cdot n^{-c/2}$.
  Plugging in $\Delta \leq 2^{\sqrt{\log n}}$ and $p = \sqrt{\log n}$ we get that this probability 
  is at most $n \cdot n^3 \cdot n^{-c/2} = n^{-c/2 + 4}.$
 Picking a large enough constant $c$ guarantees that this probability is at most $1/n^2$ and taking a final union bound over all
$\lceil \log \Delta \rceil$ scales gives us the result that all calls to $\textsc{GreedyRulingSet}(G[S_t], B_t, 2)$ take $O(\sqrt{\log n})$ rounds whp.
\end{proof}
\noindent
\onlyShort{Lemmas \ref{lemma:pathM} and \ref{lemma:pathDelta} prove upper bounds on the number of rounds it takes for the 
calls to \textsc{GreedyRulingSet} (in Lines \ref{alg1:2rulset} and \ref{alg1:MIS}). 
Now analyzing Algorithm \textsc{2-Ruling Set} is straightforward and leads to the following theorem.}
\begin{theorem}
	Algorithm \textsc{2-RulingSet} computes a 2-ruling set in the \textsc{Congest} model in 
	$O\left(\log \Delta \cdot (\log n)^{1/2 + \varepsilon} + \frac{\log n}{\varepsilon \log\log n}\right)$ rounds, whp.
\end{theorem}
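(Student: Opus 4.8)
The plan is to assemble the correctness and running-time claims from the pieces already established. First I would argue correctness: the set $I$ returned is an independent set, and every node in $V$ is within two hops of $I$. For independence, note that $\textsc{GreedyRulingSet}(G[S_t], M_{i,t}, 1)$ outputs an MIS of $G[S_t][M_{i,t}]$, and I would invoke the observation (stated just before the algorithm) that the $M_{i,t}$'s are mutually non-adjacent in $G$ because each $W_{i,t}$ forms a buffer separating $M_{i,t}$ from all later-processed nodes; hence the union $\cup_{t,i}\,\textsc{GreedyRulingSet}(G[S_t], M_{i,t},1)$ is independent in $G$. The calls $\textsc{GreedyRulingSet}(G[S_t], B_t, 2)$ also return independent sets, and I would check that a $B_t$-node cannot be adjacent to any $M_{i',t'}$-node or any other $B_{t'}$-node (once a node is placed in $B_t$ it is removed from $S$, so it is never re-examined), giving global independence of $I$.

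Next I would argue the ruling property. Every node eventually leaves $S$: it is placed in some $M_{i,t}$, some $W_{i,t}$, some $B_t$, or it survives to scale $t=\lceil\log\Delta\rceil$ where $\Delta_t=1$ so the bad-set test $deg_S(v)\ge \Delta_t/2 = 1/2$ forces every remaining node with a neighbor into $B_t$ and isolated nodes into $I$ via the greedy call. A node in $M_{i,t}$ is within one hop of $I$ (the MIS computed on $M_{i,t}$). A node in $W_{i,t}$ has, by construction, a neighbor in $M_{i,t}$, which is within one hop of the MIS node on that component, so it is within two hops of $I$. A node in $B_t$ is within two hops of $I$ by the guarantee of $\textsc{GreedyRulingSet}(\cdot,\cdot,2)$. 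This covers all nodes, so $I$ is a 2-ruling set.

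For the round complexity, I would add up the contributions. There are $\lceil\log\Delta\rceil$ scales, each running $\lceil c\log^{1/2+\varepsilon}n\rceil$ iterations; each iteration costs $O(1)$ rounds (one marking broadcast plus one round to detect marked neighbors), giving $O(\log\Delta\cdot\log^{1/2+\varepsilon}n)$ rounds for the iteration loops across all scales. The greedy 2-ruling set calls on the $B_t$'s cost $O(\sqrt{\log n})$ rounds each by Lemma~\ref{lemma:pathDelta}, and since these can be run sequentially over the $O(\log\Delta)$ scales — or, better, observe the argument in Lemma~\ref{lemma:pathDelta} already union-bounds over all scales so they can even be parallelized — this contributes $O(\sqrt{\log n})$ (absorbed into the $\log\Delta\cdot\log^{1/2+\varepsilon}n$ term since $\log\Delta\ge 1$). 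The final line runs $\textsc{GreedyRulingSet}(G[S_t],M_{i,t},1)$ on all $M_{i,t}$'s in parallel (they are vertex-disjoint and mutually non-adjacent), costing $O\!\left(\frac{\log n}{\varepsilon\log\log n}\right)$ rounds whp by Lemma~\ref{lemma:pathM}. Summing, the total is $O\!\left(\log\Delta\cdot(\log n)^{1/2+\varepsilon}+\frac{\log n}{\varepsilon\log\log n}\right)$ whp. Finally, I would note the case split at the top: if $\Delta>2^{\sqrt{\log n}}$ we run Luby's algorithm in $O(\log n)$ rounds, which is within the claimed bound since then $\log\Delta>\sqrt{\log n}$ makes the first term already $\Omega(\log n)$.

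The main obstacle is not any single calculation but the bookkeeping that the various greedy calls do not interfere and that their costs genuinely add rather than multiply — in particular justifying that all $M_{i,t}$ greedy MIS calls run simultaneously (needing the non-adjacency/buffer property and a single union bound over all $M_{i,t}$, which Lemma~\ref{lemma:pathM} provides) and that the $B_t$ calls, which operate on $G[S_t]$ for different $t$ on overlapping vertex sets, are handled in a way that is still correct; here I would rely on the fact that once a node enters $B_t$ it is deleted from $S$, so the $B_t$'s are pairwise disjoint and each greedy-2 call sees only its own $B_t$, making sequential execution over scales safe. Everything else is a direct appeal to Lemmas~\ref{lemma:pathM} and~\ref{lemma:pathDelta} and a union bound over the $O(n)$-many bad events.
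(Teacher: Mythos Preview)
Your running-time and coverage arguments match the paper's proof essentially line for line; the paper is in fact terser, simply accounting $O(\log\Delta)$ scales times $O((\log n)^{1/2+\varepsilon})$ iterations plus one $O(\sqrt{\log n})$ greedy call per scale (Lemma~\ref{lemma:pathDelta}), and then invoking Lemma~\ref{lemma:pathM} for the parallel MIS calls in Line~\ref{alg1:MIS}. Your extra handling of the $\Delta>2^{\sqrt{\log n}}$ case and the explicit buffer argument for parallelizing the $M_{i,t}$ calls are correct elaborations that the paper leaves implicit.

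There is, however, a genuine gap in your independence argument. Your justification that a $B_t$-pick cannot be adjacent to a $B_{t'}$-pick (or to a later $M_{i',t'}$-pick) is ``once a node is placed in $B_t$ it is removed from $S$, so it is never re-examined.'' That establishes only that the sets $B_t$ are pairwise \emph{disjoint}, not pairwise \emph{non-adjacent}: a node $u$ may land in $B_t$ while an adjacent node $v$ with $deg_S(v)<\Delta_t/2$ avoids $B_t$, remains in $S$, and later enters some $B_{t'}$ or $M_{i',t'}$ with $t'>t$; the greedy call at scale $t'$ operates on $G[S_{t'}]$, where $u$ no longer appears, so nothing in your argument prevents $v$ from also being selected. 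The paper's own proof does not address global independence at all, so you have not missed anything relative to the paper --- but the reasoning you offer does not actually close this gap, and you should not present it as if it does.
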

\onlyLong{
\begin{proof} 
  Note that there are $O(\log \Delta)$ scales and each scale contains 
 (i) $(\log n)^{1/2 + \varepsilon}$ iterations, each of which takes $O(1)$ rounds and 
 (ii) one call to \textsc{GreedyRulingSet} which requires $O(\sqrt{\log n})$ rounds whp by lemma \ref{lemma:pathDelta}. 
 Thus Lines \ref{alg1:scaleloop}-\ref{alg1:endscaleloop} take $O(\log \Delta \cdot (\log n)^{1/2 + \varepsilon})$ rounds.
  From Lemma \ref{lemma:pathM}, the call to the greedy MIS algorithm in Line \ref{alg1:MIS}, takes $O(\frac{\log n}{\log\log n})$ rounds 
  to compute an MIS of each of the $G[M_{i, t}]$'s in parallel.
  This yields the claimed running time.

  Since all nodes in $W_{i,t}$ are at distance 1 from some node in $M_{i,t}$ and we find an MIS of the graph $G[M_{i,t}]$, nodes in 
  $ \cup_i\cup_t (M_{i, t} \cup W_{i,t})$ are all at distance at most $2$ from some node in $I$.
  Nodes that are not in any $M_{i, t} \cup W_{i, t}$ are in $B_t$ for some scale $t$. We compute 2-ruling sets for nodes in $B_t$
  and therefore every node is at most 2 hops from some node in $I$.
\end{proof}}
\subsection{The 3-ruling Set Algorithm}
\label{sec:3-rule-time}
\noindent
This 2-ruling set algorithm can be used to obtain a 3-ruling set algorithm running in $O\left(\frac{\log n}{\log\log n}\right)$ rounds
for \textit{any} graph in the \textsc{Congest} model.
The 3-ruling set algorithm starts by using the simple,
randomized subroutine called \textsc{Sparsify} \cite{BishtKPPODC2014,KP12} to construct in, say $O((\log n)^{2/3})$ rounds,
a set $S$ such that (i) $\Delta(G[S]) = 2^{O((\log n)^{1/3})}$ whp and (ii) every node is in $S$ or has a neighbor in $S$.
\onlyShort{
Using Algorithm \textsc{2-Ruling Set} on $G[S]$ yields the following corollary.
}
\onlyLong{
The properties of \textsc{Sparsify} are more precisely described in the following
lemma.
\begin{lemma}
\label{lemma:sparsify}
(Theorem 1 in \cite{BishtKPPODC2014})
Let $G$ be an $n$-node graph with maximum degree $\Delta$.
Algorithm \textsc{Sparsify} with input $G$ and $f$ runs in $O(\log_f \Delta)$
rounds in the \textsc{Congest} model and produces a set $S \subseteq V(G)$ such that
$\Delta(G[S]) = O(f\cdot \log n)$ whp, and every vertex in $V$ is either
in $S$ or has a neighbor in $S$.
\end{lemma}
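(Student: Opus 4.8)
The plan is to first recall the structure of \textsc{Sparsify} and then verify its three guarantees (round bound, degree bound, domination) separately. \textsc{Sparsify}$(G,f)$ is an iterative thinning procedure that maintains a candidate set $S$, initialized to $S=V$, and runs for $\Theta(\log_f\Delta)$ iterations. In each iteration, every node $v$ still in $S$ whose induced degree $\deg_{G[S]}(v)$ exceeds a threshold $\theta:=\Theta(f\log n)$ independently leaves $S$ with probability $1-1/f$; as soon as a node's induced degree drops to at most $\theta$ it \emph{commits} and stays in $S$ permanently. Each iteration is implementable in $O(1)$ rounds in the \textsc{Congest} model --- every node merely needs to learn, over each incident edge, whether the other endpoint is still in $S$ --- so the $O(\log_f\Delta)$ round bound is immediate once correctness is established.

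For the degree bound, I would observe that the nodes remaining in $S$ at the end are precisely the committed nodes, and that a committed node had induced degree at most $\theta$ at the moment it committed; since $S$ only shrinks thereafter, its final induced degree is at most $\theta=O(f\log n)$. So it suffices to show that whp every node commits within the allotted iterations. A node that has not committed after $i$ iterations has had induced degree $>\theta$ throughout and hence has left $S$ in each iteration only with probability $1/f$, so the probability it is still present and uncommitted after $\Theta(\log_f\Delta)$ iterations is at most $(1/f)^{\Theta(\log_f\Delta)}$; combined with a Chernoff bound showing that the induced degree of a heavy node contracts by a factor $\approx f$ per iteration while it stays above $\theta=\Omega(\log n)$ (this is the role of the $\log n$ in the threshold, and also the reason a plain union bound over the $n$ nodes works, running a few more iterations if one wants the failure probability below $1/\mathrm{poly}(n)$ uniformly), one gets that whp every node commits within $O(\log_f\Delta)$ iterations, and hence $\Delta(G[S])=O(f\log n)$ whp.

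Domination is where I expect the real work to be. The governing fact is that a node leaves $S$ only while it is heavy, i.e.\ only while it still has more than $\theta=\Omega(f\log n)$ neighbors in $S$; conditioned on that, the probability that \emph{all} of those neighbors also leave $S$ in the same iteration is at most $(1-1/f)^{\theta}=n^{-\Omega(1)}$ --- which is exactly why $\theta$ is taken to be $\Omega(f\log n)$. Union bounding over all nodes and all $O(\log_f\Delta)$ iterations, whp every node that leaves $S$ has at least one neighbor surviving the same iteration. To turn ``survives this iteration'' into ``is in the final $S$'', I would argue by contradiction: if $v$ is undominated then $v$ and all of its neighbors have left $S$; looking at the node $w$ among $\{v\}\cup N(v)$ that leaves $S$ last, $w$ is heavy at the iteration it leaves, so it has many neighbors still in $S$, and one argues that one of them must lie in $\{v\}\cup N(v)$, contradicting the maximality of $w$'s departure time (handling ties in ``leaves last'' by a fixed tie-break, or by strengthening the ``all $\ge\theta$ still-alive neighbors leave simultaneously'' event); a final union bound over the $n$ choices of $v$ completes the argument. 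An alternative, less delicate route is to add a clean-up step that re-inserts every node with no neighbor in $S$ and then to show, via the same deep-death estimate, that whp no node is re-inserted.

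The main obstacle, I expect, is the tension between domination and the degree bound: the degree bound wants thinning to be aggressive so that large degrees collapse within $O(\log_f\Delta)$ iterations, whereas domination wants it to be gentle so that a node's last surviving neighbor is never stripped away. The degree threshold $\theta=\Theta(f\log n)$ is precisely what reconciles the two, and the delicate points are (a) lining up the constants in $\theta$ and the iteration count so that all the union bounds over the $n$ nodes and the $O(\log_f\Delta)$ iterations go through, and (b) controlling the temporal dependency in the domination argument, namely the gap between a departing node's neighbor surviving the current iteration and surviving all the way to termination.
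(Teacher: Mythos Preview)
The paper does not supply a proof of this lemma: it is quoted as Theorem~1 of~\cite{BishtKPPODC2014} and used as a black box, so there is nothing in the paper to compare your argument against.

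That said, your proposal has a genuine gap. The subtractive thinning procedure you describe does not achieve domination, and the flaw sits precisely at the step you yourself flag as delicate. Take a node $v$ with neighbors $u_1,\dots,u_k$ where $k=2\theta$, and attach to each $u_i$ a private set of $\theta+1$ leaves. The leaves commit in the first iteration; hence every $u_i$ retains $>\theta$ committed neighbors and stays heavy in \emph{every} iteration, so each $u_i$ keeps flipping its $(1-1/f)$-biased coin and leaves $S$ whp within $\Theta(\log_f\Delta)$ rounds. Meanwhile $v$ is heavy in iteration~1 and leaves with probability $1-1/f$. Thus with probability $\approx 1-1/f$, node $v$ leaves first and then all of its neighbors eventually leave as well, leaving $v$ undominated. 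Your contradiction argument breaks exactly here: the last node $w\in\{v\}\cup N(v)$ to depart is some $u_i$, and $u_i$'s $>\theta$ still-alive neighbors at that moment are its private leaves --- none of which lie in $\{v\}\cup N(v)$, so there is nothing to contradict the maximality of $w$'s departure time. The clean-up variant does not rescue you either, since in this instance $v$ \emph{is} re-inserted with constant probability, contrary to your ``whp no node is re-inserted'' claim.

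The actual \textsc{Sparsify} of~\cite{BishtKPPODC2014,KP12} is additive rather than subtractive: it grows $S$ from $\emptyset$ by having each still-uncovered node join with probability that increases geometrically (by a factor~$f$) across iterations, removing nodes once they or a neighbor join. Domination is then by construction (uncovered leftovers are swept into $S$ at the end), and the degree bound follows because two adjacent $S$-nodes must have joined in the same iteration, at which point the active degrees have been inductively driven down.
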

}
\onlyLong{
Using Algorithm \textsc{2-Ruling Set} on $G[S]$ yields the following corollary.
}
\begin{corollary}
It is possible to compute a 3-ruling set in $O\left(\frac{\log n}{\log\log n}\right)$ rounds
whp in the \textsc{Congest} model.
\end{corollary}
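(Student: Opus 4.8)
The plan is to combine the \textsc{Sparsify} subroutine with the 2-ruling set algorithm of the previous subsection. First I would invoke \textsc{Sparsify} (with parameter $f = 2^{(\log n)^{1/3}}$, say) on the input graph $G$. By Lemma~\ref{lemma:sparsify}, this runs in $O(\log_f \Delta) = O\!\left(\frac{\log \Delta}{(\log n)^{1/3}}\right) = O\!\left((\log n)^{2/3}\right)$ rounds in the \textsc{Congest} model and produces a set $S \subseteq V(G)$ with $\Delta(G[S]) = O(f \cdot \log n) = 2^{O((\log n)^{1/3})}$ whp, such that every vertex of $G$ is in $S$ or has a neighbor in $S$.

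Next I would run Algorithm \textsc{2-RulingSet} on the induced subgraph $G[S]$. Since $\Delta(G[S]) = 2^{O((\log n)^{1/3})}$, we have $\log \Delta(G[S]) = O((\log n)^{1/3})$, which is comfortably below the $2^{\sqrt{\log n}}$ threshold the algorithm requires, so no fallback to Luby's algorithm is triggered. By the theorem of the previous subsection, this computes a set $I$ that is a 2-ruling set of $G[S]$, in $O\!\left(\log \Delta(G[S]) \cdot (\log n)^{1/2+\varepsilon} + \frac{\log n}{\varepsilon \log\log n}\right) = O\!\left((\log n)^{5/6 + \varepsilon} + \frac{\log n}{\log\log n}\right)$ rounds whp; fixing $\varepsilon$ to be a small constant (e.g. $\varepsilon = 1/12$) makes the first term $o(\log n / \log\log n)$, so the overall round complexity is $O\!\left(\frac{\log n}{\log\log n}\right)$, absorbing the $O((\log n)^{2/3})$ cost of \textsc{Sparsify}.

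Finally I would argue correctness of the ruling-set guarantee. The set $I$ is independent in $G[S]$, hence independent in $G$. For the covering radius: let $v$ be any node of $G$. If $v \in S$, then since $I$ is a 2-ruling set of $G[S]$, there is a node of $I$ within $2$ hops of $v$ in $G[S]$, hence within $2$ hops in $G$. If $v \notin S$, then by the second property of \textsc{Sparsify}, $v$ has a neighbor $u \in S$; applying the previous case to $u$ gives a node of $I$ within $2$ hops of $u$, hence within $3$ hops of $v$ in $G$. Thus $I$ is a $3$-ruling set of $G$.

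The main obstacle — though it is really just bookkeeping rather than a genuine difficulty — is choosing the \textsc{Sparsify} parameter $f$ to simultaneously keep the \textsc{Sparsify} round cost at $O((\log n)^{2/3}) = o(\log n / \log\log n)$ and push $\log \Delta(G[S])$ small enough that the first term in the 2-ruling set running time, $\log \Delta(G[S]) \cdot (\log n)^{1/2+\varepsilon}$, is also $o(\log n / \log\log n)$; the choice $f = 2^{(\log n)^{1/3}}$ balances these, and since $G[S]$ has degree independent of the original $\Delta$, the final bound holds for every graph regardless of $\Delta$. One should also note that all high-probability statements compose via a union bound over the $O(1)$ invocations, so the final algorithm succeeds whp.
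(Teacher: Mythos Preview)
Your proposal is correct and follows essentially the same approach as the paper: call \textsc{Sparsify} with $f = 2^{(\log n)^{1/3}}$ and then run the 2-ruling set algorithm on $G[S]$ with a small constant $\varepsilon$ (the paper requires $\varepsilon < 1/6$, which your choice $\varepsilon = 1/12$ satisfies). Your writeup is in fact more detailed than the paper's, explicitly verifying the $\Delta(G[S]) \le 2^{\sqrt{\log n}}$ precondition and spelling out the 3-hop covering argument.
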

\onlyLong{
\begin{proof}
First call \textsc{Sparsify} with input graph $G$ and parameter $f = 2^{(\log  n)^{1/3}}$.
\textsc{Sparsify} runs in $O((\log n)^{2/3})$ rounds and returns a set $S$ such that $\Delta(G[S]) = 2^{O((\log n)^{1/3})}$.
We then run \textsc{2-RulingSet} on $G[S]$ with $\varepsilon < 1/6$, which completes in $O\left(\frac{\log n}{\log\log n}\right)$ rounds, returning a $3$-ruling set of $G$.
\end{proof}}
\onlyShort{
  \noindent
  If we relax the problem a bit further and settle for a 5-ruling set, we can improve the round complexity even
further.  In the long version of the paper we present a 5-ruling set algorithm that runs in $O(\sqrt{\log n})$ rounds in 
the \textsc{Congest} model.
This algorithm critically uses the first phase of Ghaffari's MIS algorithm \cite{GhaffariSODA16} and certain
independence properties of nodes that remain undecided after the execution of this phase.}
\onlyLong{
\subsection{5-ruling sets in $O(\sqrt{\log n})$ rounds}
\label{section:5RulingSets}
It turns out that for slightly larger but constant $\beta$, it is possible to compute $\beta$-ruling sets in the
\textsc{Congest} model in $O(\sqrt{\log n})$ rounds. We show this in this section for $\beta = 5$.
The 5-ruling set algorithm (described in pseudocode below) starts by calling the \textsc{Sparsify} 
\cite{BishtKPPODC2014,KP12} subroutine.

\RestyleAlgo{boxruled}
\begin{algorithm2e}\caption{\textsc{5-RulingSet}(Graph $G = (V, E)$):}
      $S  \leftarrow  \textsc{Sparsify}(G,2^{\sqrt{\log n}} )$\;
      $I \leftarrow $ \textsc{GhaffariMISPhase1}$(G[S])$ for $\Theta(c \cdot \log \Delta(G[S]))$ rounds\;
      $R \leftarrow S \setminus (I \cup N(I))$\;
      $I \leftarrow I \cup \textsc{GreedyRulingSet}(G[S], R, 4)$\;
      \textbf{return} $I$\;
\end{algorithm2e}

\noindent
We use $f = 2^{\sqrt{\log n}}$ in our call to \textsc{Sparsify}, which implies that \textsc{Sparsify} runs in $O(\sqrt{\log n})$ rounds.
We then run $O(\log \Delta(G[S]))$ iterations of the first phase of Ghaffari's MIS algorithm on
$G[S]$ and this returns an independent set $I$ \cite{GhaffariSODA16}.
Since $\Delta(G[S]) = 2^{O(\sqrt{\log n})}$, this is equivalent to running $O(\sqrt{log n})$ 
iterations of the first phase of Ghaffari's MIS algorithm.
Recall that the first phase of Ghaffari's algorithm is a ``Luby-like'' algorithm
that runs in the \textsc{Congest} model. 
Finally, we consider the set of nodes that are still undecided, i.e., nodes in $S$ that are 
not in $I \cup N(I)$ and we call Algorithm \textsc{GreedyRulingSet} algorithm with $\beta = 4$ 
in order to compute a 4-ruling set of the as-yet-undecided nodes.
The fact that we compute a 4-ruling set in this step, rather than a $\beta$-ruling set for some $\beta < 4$,
is because of independence properties of Ghaffari's MIS algorithm.
Steps (1)-(3) of \textsc{5-RulingSet} run in $O(\sqrt{\log n})$ rounds either by design or due to properties
of \textsc{Sparsify}. The fact that the greedy 4-ruling set algorithm (Step (4)) also terminates 
in $O(\sqrt{\log n})$ rounds remains to be shown and this partly depends on the
following property of the first phase of Ghaffari's MIS algorithm.
\begin{lemma} 
\label{lemma:Ghaffari}
(Lemma 4.1 in \cite{GhaffariSODA16}) For any constant $c > 0$, for any
set $S$ of nodes that are at pairwise distance at least 5 from each other, the probability
that all nodes in $S$ remain undecided after $\Theta(c \log \Delta)$ rounds of the
first phase of the MIS algorithm is at most $\Delta^{-c|S|}$.
\end{lemma}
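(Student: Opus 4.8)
Since this statement is quoted verbatim from \cite{GhaffariSODA16}, the plan is only to sketch the argument behind it. I would first establish a per-node statement --- that any single node becomes decided within $\Theta(c\log\Delta)$ rounds of the first phase except with probability $\Delta^{-\Theta(c)}$ --- and then lift it to a product over $S$ by exploiting that the nodes of $S$ lie at pairwise distance at least $5$.

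\textbf{Per-node analysis.} Fix $v\in S$. Recall that the first phase maintains for every undecided node $u$ a marking probability $p_t(u)\le\tfrac12$, updated deterministically by $p_{t+1}(u)=p_t(u)/2$ when the effective degree $d_t(u):=\sum_{w}p_t(w)$ (the sum over undecided neighbours $w$ of $u$) is at least $2$, and $p_{t+1}(u)=\min\{2p_t(u),\tfrac12\}$ otherwise. Following \cite{GhaffariSODA16}, I would call round $t$ \emph{golden} for $v$ if either $d_t(v)\ge 1$ (type-1) or $p_t(v)=\tfrac12$ and $d_t(v)<2$ (type-2), and then argue two things: (i) whether round $t$ is golden for $v$ is determined by the history $\mathcal{H}_{t-1}$ of marking outcomes of rounds $1,\dots,t-1$, because the $p_t(\cdot)$'s are a deterministic function of that history; and (ii) conditioned on any history making round $t$ golden for $v$, the probability that $v$ is decided by the end of round $t$ is at least an absolute constant $q>0$ --- in a type-2 round because $v$ joins the MIS with probability at least $p_t(v)\prod_{w\in N(v)}(1-p_t(w))\ge\tfrac12\cdot 4^{-2}$, and in a type-1 round because, restricting to a sub-collection of $v$'s neighbours whose $p_t$-values sum to a quantity in $[1,2]$, one can show that a neighbour of $v$ joins the MIS with constant probability. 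Inspecting the update rule then shows that in any window of $\beta$ rounds in which $v$ stays undecided a constant fraction of the rounds are golden for $v$ (every halving step has $d_t(v)\ge 2\ge 1$ and is type-1 golden, doublings cannot outnumber halvings since $p_t(v)$ never exceeds $\tfrac12$, and any round with $p_t(v)=\tfrac12$ is golden). Taking $\beta=\Theta(c\log\Delta)$ and coupling the good events in the golden rounds with independent $\mathrm{Bernoulli}(q)$ trials (valid by (i)--(ii)) would give $\Pr[v\text{ undecided after }\beta]\le(1-q)^{\Omega(\beta)}\le\Delta^{-\Theta(c)}$.

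\textbf{From one node to $S$.} The key observation is that $\dist(u,v)\ge 5$ for all distinct $u,v\in S$ forces the radius-$2$ balls $\{N^{2}[v]\}_{v\in S}$ to be pairwise disjoint, while the event ``$v$ is decided by the end of round $t$'' depends only on the fresh round-$t$ marking coins inside $N^{2}[v]$. Hence, conditioned on $\mathcal{H}_{t-1}$, these events for $v\in S$ are mutually independent, each of conditional probability $\ge q$ whenever round $t$ is golden for $v$. Processing rounds $t=1,\dots,\beta$ and applying the tower rule I would obtain $\Pr[\text{all of }S\text{ undecided after }\beta]\le\expect{(1-q)^{\sum_{v\in S}\sum_{t\le\beta}G_{v,t}}}$, where $G_{v,t}$ indicates that round $t$ is golden for $v$; on the event in question every $v\in S$ survives all $\beta$ rounds, hence has $\Omega(\beta)$ golden rounds among them, and the bound becomes $(1-q)^{\Omega(\beta)|S|}\le\Delta^{-c|S|}$ after fixing the constants.

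\textbf{Main obstacle.} The hard part is the type-1 estimate --- showing that a round with $d_t(v)\ge 1$ produces an MIS node among $v$'s neighbours with constant probability --- since a marked neighbour $u$ of $v$ succeeds only if none of $u$'s own (possibly many) neighbours are marked; this is precisely the delicate step of \cite{GhaffariSODA16} and is what the sub-collection argument above is for. A secondary subtlety is that the \emph{golden-round designation} for $v$ is defined via the $p_t(\cdot)$-values at $v$'s neighbours, which can in principle depend on coins outside $N^{2}[v]$; this does not harm the argument, because the product bound only needs the \emph{decided-this-round events} --- confined to the disjoint balls $N^{2}[v]$ --- to be conditionally independent across $S$, not the golden-round indicators themselves.
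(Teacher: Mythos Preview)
The paper does not prove this lemma at all; it simply quotes it as Lemma~4.1 of \cite{GhaffariSODA16} and uses it as a black box in the analysis of \textsc{5-RulingSet}. Your sketch therefore goes well beyond what the paper does and is, in outline, a faithful reconstruction of Ghaffari's original argument: the golden-round dichotomy, the constant-probability-of-progress bound in each golden round, the counting argument showing a constant fraction of rounds are golden, and the lift from one node to $S$ via conditional independence of the ``decided-this-round'' events across the disjoint balls $N^2[v]$.

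Two small points worth tightening if you flesh this out. First, your golden-round count (``doublings cannot outnumber halvings'') gives a constant fraction but glosses over why: you should spell out that $p_0(v)=\tfrac12$ together with $p_t(v)\le\tfrac12$ forces $\#\text{halvings}\ge\#\text{effective doublings}$, hence $H+C\ge\beta/2$ where $C$ counts capped rounds; Ghaffari's constant is worse because his actual golden-round definition is slightly different. Second, in the type-1 estimate you cite a sub-collection with $p_t$-sum in $[1,2]$, but Ghaffari's proof of a constant success probability here does not quite proceed by restricting to such a sub-collection --- it uses a FKG/correlation-style or direct second-moment argument on the event that some neighbour is marked while none of its neighbours are; your phrasing is a plausible alternative, but you would have to supply the missing inclusion--exclusion or correlation step rather than assert it.
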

\noindent
The sparsification step performed by the call to \textsc{Sparsify} in Step (1) along with Lemma \ref{lemma:Ghaffari} and 
Properties (i) and (ii) of witness paths are used in the following theorem to prove that Step (4) of 
\textsc{5-RulingSet} also completes in $O(\sqrt{\log n})$ rounds whp.
The fact that we are greedily computing a 4-ruling set in Step (4) and any two nodes selected to be in the ruling set
are at least 5 hops away from each other provides the independence that is needed to apply Lemma \ref{lemma:Ghaffari}.

\begin{theorem}
\label{theorem:5RulingSet}
Algorithm \textsc{5-RulingSet} computes a 5-ruling set of $G$ in $O(\sqrt{\log n})$ rounds whp.
\end{theorem}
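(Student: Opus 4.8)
The plan is to verify separately that the set $I$ returned by \textsc{5-RulingSet} is a $5$-ruling set and that the algorithm runs in $O(\sqrt{\log n})$ rounds whp; essentially all the work is in bounding the running time of Step~(4). For correctness, note first that by Lemma~\ref{lemma:sparsify} with $f=2^{\sqrt{\log n}}$, every node of $G$ lies in $S$ or has a neighbor in $S$, and $\Delta(G[S])=2^{O(\sqrt{\log n})}$ whp. The set $I$ output by the first phase of Ghaffari's algorithm is independent in $G[S]$, hence in $G$; the nodes added in Step~(4) come from $R=S\setminus(I\cup N(I))$, so none of them is adjacent to that $I$, and the output of \textsc{GreedyRulingSet} is itself an independent set of $G[S]$ (no two selected nodes are adjacent, since a selected node together with its $1$-neighborhood is deleted from the pool before any later selection), so the final $I$ is independent. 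For the covering radius, each node of $S$ is either in $I\cup N(I)$, hence within one hop of the final $I$, or lies in $R$, hence within $4$ hops in $G[S]$ of some node selected in Step~(4); together with the one extra hop guaranteed by \textsc{Sparsify}, every node of $G$ is within $5$ hops of $I$.

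For the running time, Steps~(1)--(3) are immediate: Step~(1) runs in $O(\log_f\Delta)=O(\sqrt{\log n})$ rounds by Lemma~\ref{lemma:sparsify}, Step~(2) runs for $\Theta(c\log\Delta(G[S]))=O(\sqrt{\log n})$ rounds whp since $\log\Delta(G[S])=O(\sqrt{\log n})$, and Step~(3) is one round. So the crux is Step~(4): we must show \textsc{GreedyRulingSet}$(G[S],R,4)$ halts within $O(\sqrt{\log n})$ iterations of its \textbf{while}-loop whp, each iteration costing $O(1)$ \textsc{Congest} rounds (compare \texttt{ID}s with undecided neighbors, then propagate removals to distance $4$ in $G[S]$). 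By the witness-path characterization of \textsc{GreedyRulingSet}, it suffices to prove that whp no witness path of length $p$ exists for a suitable $p=\Theta(\sqrt{\log n})$.

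So fix a candidate witness path $(v_1,\ldots,v_p)$. All the $v_i$ lie in $R$, so they are undecided after Phase~1, and by Property~(i) of witness paths they are at pairwise distance $\ge 5$ in $G[S]$; hence Lemma~\ref{lemma:Ghaffari}, applied to $G[S]$ with the constant $c$ used in Step~(2), bounds the probability that all of them stay undecided by $\Delta(G[S])^{-cp}$. By Property~(ii), each $v_{i+1}$ is at distance exactly $5$ from $v_i$ in $G[S]$, so there are at most $n\cdot\Delta(G[S])^{5(p-1)}$ candidate length-$p$ witness paths; a union bound then gives $\Pr[\exists\ \text{a length-}p\ \text{witness path}]\le n\cdot\Delta(G[S])^{-(c-5)p}$, which is below $1/n^2$ once $c$ is a large enough constant and $p=\Theta\big(\log n/\log\Delta(G[S])\big)=O(\sqrt{\log n})$. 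Combining with the bounds on Steps~(1)--(3) gives the claimed $O(\sqrt{\log n})$-round complexity.

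The step I expect to be the main obstacle is precisely this last union bound, and making it go through for \emph{every} value of $\Delta(G[S])$. The candidate-path count grows like $\Delta(G[S])^{\Theta(p)}$, so Lemma~\ref{lemma:Ghaffari} has to beat it --- this is exactly why the greedy step is run with $\beta=4$, so that all witness-path nodes (consecutive or not) are at distance $\ge 5$, which is the hypothesis of Lemma~\ref{lemma:Ghaffari}, and why $c$ is taken large. The delicate case is small $\Delta(G[S])$: Lemma~\ref{lemma:sparsify} only upper-bounds it, and if $\Delta(G[S])=2^{o(\sqrt{\log n})}$ then $\Theta(c\log\Delta(G[S]))$ rounds of Phase~1 with a fixed constant $c$ is $o(\sqrt{\log n})$ and the estimate above only yields $O(\log n/\log\Delta(G[S]))$ rounds. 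I would handle this by running Phase~1 for the full $\Theta(\sqrt{\log n})$ rounds --- legitimate since $\log\Delta(G[S])=O(\sqrt{\log n})$ whp, so this is at least $\Theta(\log\Delta(G[S]))$ --- and invoking the strengthened form of Ghaffari's bound with continued exponential decay, namely that after $\Theta(\sqrt{\log n})$ rounds a set of $s$ nodes at pairwise distance $\ge 5$ is entirely undecided with probability at most $2^{-\Omega(\sqrt{\log n})\,s}$; this factor dominates the $2^{O(\sqrt{\log n})\,p}$ candidate-path count uniformly over all $\Delta(G[S])\le 2^{O(\sqrt{\log n})}$. The remaining details (correctness and Steps~(1)--(3)) are routine.
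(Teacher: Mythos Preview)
Your proposal is correct and follows the same witness-path/union-bound strategy as the paper: bound the probability that a fixed length-$p$ witness path survives Phase~1 via Ghaffari's Lemma~\ref{lemma:Ghaffari}, count candidate paths using Property~(ii), and choose $c$ large so the union bound beats the path count at $p=\Theta(\sqrt{\log n})$. The correctness argument and the treatment of Steps~(1)--(3) match the paper's.

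The one place you differ is exactly the point you flagged as the main obstacle. The paper simply writes the survival probability as $\bigl(2^{c'\sqrt{\log n}}\bigr)^{-cp}$ and the path count as $n\bigl(2^{c'\sqrt{\log n}}\bigr)^{5p}$, i.e., it silently replaces $\Delta(G[S])$ by its whp upper bound $2^{c'\sqrt{\log n}}$ on both sides and lets the two occurrences cancel. Taken literally with Lemma~\ref{lemma:Ghaffari} as stated (probability $\le \Delta(G[S])^{-cp}$ after $\Theta(c\log\Delta(G[S]))$ rounds), that substitution goes the wrong way for the probability term when $\Delta(G[S])$ is small --- precisely the issue you isolate. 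Your fix, namely running Phase~1 for the full $\Theta(\sqrt{\log n})$ rounds and invoking the per-round exponential decay in Ghaffari's analysis (so that the survival probability is $2^{-\Omega(\sqrt{\log n})\cdot p}$ uniformly in $\Delta(G[S])\le 2^{O(\sqrt{\log n})}$), is the clean way to make the argument airtight; it is what the paper's calculation implicitly relies on. So your route is the same as the paper's, just with that gap explicitly closed.
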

\begin{proof} 
  \textsc{Sparsify} with paramter $2^{\sqrt{\log n}}$ takes $O(\log_{2^{\sqrt{\log n}}} n) = O(\sqrt{\log n})$ time. 
Also, whp the maximum degree of the graph $G[S]$, $\Delta(G[S])=O(2^{\sqrt{\log n}}\cdot \log n)$
and this is bounded above by $2^{c'\sqrt{\log n}}$ for some constant $c'$.
	Next, we run the \textsc{GhaffariMISPhase1} algorithm for $\Theta(c\log\Delta(G[S])) = O(\sqrt{\log n})$ rounds. 
We now argue that Step (4) also runs in $O(\sqrt{\log n})$ rounds whp. 

Suppose that the call to $\textsc{GreedyRulingSet}(G[S], R, 4)$ takes $p$ iterations, Then, there is a witness path $P = (v_1, v_2, \ldots, v_p)$ 
in $G[S]$ to the execution of this algorithm.
Then, by Lemma \ref{lemma:Ghaffari} and Property (i) of witness paths:
$$\Pr(\text{All nodes in }P \text{ remain undecided after Step (2)}) \leq \left(\frac{1}{2^{c'\sqrt{\log n}}}\right)^{cp}.$$
Now we use Property (ii) of witness paths to upper bound the total number of possible length-$p$ witness paths in $G[S]$ by $|S| \cdot \Delta(G[S])^{5p}$.
Let $E_P$ denote the event that all nodes of a possibe length-$p$ witness path $P$ in $G[S]$ have remained undecided
after Step (2). Taking a union bound over all possible length-$p$ witness paths in $G[S]$ we see that
$$\Pr(\exists\text{ a length-$p$ witness path $P$ in $G[S]$:}\, E_P) \leq n\left(2^{c'\sqrt{\log n}}\right)^{5p} \cdot \left(\frac{1}{2^{c'\sqrt{\log n}}}\right)^{cp} \leq n\left(2^{\sqrt{\log n}}\right)^{pc'(5-c)}.$$
Plugging in $p = \sqrt{\log n}$ and choosing $c \ge 5 + 2/c'$  we get an upper bound of $1/n$ on the probability that a length-$p$ witness
path exists after Step (2).
This implies that whp Step (4) takes $O(\sqrt{\log n})$ rounds.

Step (4) computes a 4-ruling set of $G[R]$ and this along with the set $I$ form a 4-ruling set of $G[S]$.
Since every node is at most 1 hop away from some node in $S$, we have computed a 5-ruling set of $G$.
\end{proof}
}

\section{A Message-Efficient Algorithm for 2-Ruling Set}
\label{sec:msg}
In this section, we present a  randomized  distributed algorithm for computing a 2-ruling set in the \textsc{Congest} model that
 takes $O(n \log^2 n )$ messages and $O(\Delta \log n)$ rounds whp, where $n$ is the number of nodes
and $\Delta$ is the maximum node degree. The algorithm does not require any global knowledge, including knowledge of $n$ or $\Delta$.
We show in \Cref{thm:msg-lb-rulingset} that the algorithm is essentially message-optimal (up to a $\polylog(n)$ factor). 
This is the first message-efficient algorithm known for 2-ruling set, i.e.,
it takes $o(m)$ messages, where $m$ is the number of edges in the graph.\onlyShort{\footnote{In the full paper we present a simpler algorithm for 2-ruling set
that, whp, takes $O(n^{1.5} \log n)$ messages and runs in $O(\log n)$ rounds.}} 
\onlyLong{
In contrast, we show in \Cref{thm:msg-lb-mis} that computing a MIS requires $\Omega(n^2)$ messages (regardless of the number of rounds).
Thus there is a fundamental separation of message complexity between 1-ruling set (MIS) and 2-ruling set computation.
}
\subsection{The Algorithm}
\label{sec:2-msg-algo}
Algorithm~\ref{alg:2-ruling-set}  is inspired by Luby's algorithm for MIS  \cite{LubySICOMP86}; however, there are crucial differences. (Note that
Luby's algorithm sends $\Theta(m)$ messages.) 
Given a ruling set  $R$, we classify nodes in $V$ into three categories: 
\begin{compactitem}
\item $\catone$:  nodes that belong to  the ruling set $R$;
\item $\cattwo$: nodes that have a neighbor  in $R$; and 
\item $\catthree$:  the rest of the nodes, i.e., nodes  that  have a neighbor in $\cattwo$. 
\end{compactitem}
At the beginning of the algorithm,  each node
is $\undecided$, i.e., its category is not set and upon termination, each node knows its category. 

Let us describe one iteration of the algorithm (Steps \ref{alg:msg:begin-loop}-\ref{alg:msg:end-loop}) from the perspective of an arbitrary node $v$. Each undecided node  $v$ marks itself with probably $1/2d(v)$.  If $v$ is marked it samples a set of $\Theta(\log (d(v))$ random neighbours and checks  whether any of them belong to $\cattwo$ --- we call this the {\em checking sampling step}.
 If so, then $v$ becomes a $\catthree$ node and is done (i.e., it will never broadcast again, but will continue to answer checking sampling queries, if any, from its neighbors).   Otherwise, 
$v$ performs the {\em broadcast step}, i.e., it communicates with all its neighbors  and checks if there is a marked neighbor that is of equal or higher degree, and if so, it unmarks itself;
else
it enters the ruling set and becomes a $\catone$ node.\footnote{Alternately, if $v$ finds any $\cattwo$ neighbor (that was missed by checking sampling) during broadcast step it becomes a $\catthree$ node and is done. However, this does not give an asymptotic improvement in the  message complexity analysis compared to
the stated algorithm.} Then node $v$ informs all its neighbors about its  $\catone$ status causing them to become $\cattwo$ nodes (if they are not already) and they are done.

A node that does not hear from any of its neighbors knows that it is not a neighbor of any $\catone$ node. 
Note that $\cattwo$ and $\catthree$ nodes do not initiate messages, which is important for keeping the message complexity low. 
Another main idea in reducing messages is the random sampling check of a few neighbors to see whether any of them are $\cattwo$. Although some nodes might send $O(d(v)$)
messages, we show in Section \ref{sec:2-msg-analysis} that most nodes send (and receive) only $O(\log n)$ messages in an amortized sense. Nodes that remain undecided at the end of one iteration continue to the next iteration. It is easy to implement each iteration in a constant number of rounds.

\begin{algorithm2e}[h]
   $\texttt{status}_v = \undecided$\;
   \While{$\texttt{status}_v = \undecided$} {
     \If{$v$ receives a message from a $\catone$ node} {\label{alg:msg:begin-loop}
       Set $\texttt{status}_v = \cattwo$\; \label{alg:msg:status2}
     }
     \textbf{if} $v$ is $\undecided$ \textbf{then} it marks itself with probability $\frac{1}{2d(v)}$ \;
     \If{$v$ is marked} {
       ({\em Checking Sampling Step:}) Sample a set $A_v$ of $4\log(d(v))$ random neighbors independently and uniformly at random (with replacement) \; \label{alg:msg:check-sampling}
       Find the categories of all nodes in $A_v$ by communicating with them\; \label{alg:msg:find-categories}
       \If{any node in $A_v$ is a $\cattwo$ node} {
         Set $\texttt{status}_v = \catthree$\; \label{alg:msg:status-3}
       }
       \Else {
         ({\em (Local) Broadcast Step:}) Send the marked status and  $d(v)$ value to {\em all} neighbors\;\label{alg:msg:local-broadcast}
         If $v$ hears from an equal or higher degree (marked) neighbor   then $v$ unmarks itself\; \label{alg:msg:unmark}
         If $v$ remains marked, set $\texttt{status}_v = \catone$\;
         Announce status to all neighbors\;
       }
     }\label{alg:msg:end-loop}
   }
   \caption{Algorithm {\tt 2-rulingset-msg}: code for a node $v$. $d(v)$ is the degree of $v$.} 
   \label{alg:2-ruling-set}
\end{algorithm2e}

\subsection{Analysis of Algorithm {\tt 2-rulingset-msg}}
\label{sec:2-msg-analysis}

One \emph{phase} of the algorithm consists of Steps \ref{alg:msg:begin-loop}-\ref{alg:msg:end-loop}, which can be implemented in a constant number of rounds.
We say that a node is \emph{decided} if it is in $\catone$, $\cattwo$, or $\catthree$.
The first lemma, which is easy to establish,  shows that if a node is marked, it has a good chance to get decided.

\begin{lemma}
\label{le:mark}
A node that marks itself in any phase gets decided with probability at least $1/2$ in that phase. 
Furthermore, the probability that a node remains undecided after $2\log n$ marked phases is at most $1/n^2$.
\end{lemma}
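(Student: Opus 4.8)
The plan is to establish the two claims of Lemma~\ref{le:mark} separately. For the first claim, I would fix a phase and a node $v$ that marks itself in that phase, and argue that conditioned on $v$ being marked, it gets decided with probability at least $1/2$. The key observation is that if $v$ marks itself, then one of three mutually exhaustive things happens: (a) $v$ discovers a \cattwo{} neighbor during the checking sampling step and becomes \catthree; (b) $v$ proceeds to the broadcast step, finds an equal-or-higher-degree marked neighbor, and unmarks itself (remaining \undecided); or (c) $v$ proceeds to the broadcast step, finds no such neighbor, and enters the ruling set as a \catone{} node. In cases (a) and (c), $v$ is decided. So the only way $v$ stays undecided is case (b). I would bound the probability of case (b) by the probability that \emph{some} neighbor $u$ of $v$ with $d(u) \ge d(v)$ marks itself. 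Since each such neighbor marks with probability $1/(2d(u)) \le 1/(2d(v))$, and $v$ has at most $d(v)$ neighbors, a union bound gives that case (b) occurs with probability at most $d(v) \cdot \frac{1}{2d(v)} = 1/2$. Hence $v$ is decided with probability at least $1/2$, regardless of the outcome of the checking sampling step (which can only help).

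\textbf{Second claim.} For the statement that a node remains undecided after $2\log n$ marked phases with probability at most $1/n^2$, I would apply the first claim across phases. The subtlety is that ``$2\log n$ marked phases'' refers to phases in which $v$ actually marks itself, not to $2\log n$ arbitrary phases. Conditioned on the history up to the start of any phase in which $v$ is still undecided and marks itself, the first claim says $v$ gets decided in that phase with probability at least $1/2$, and this bound holds independently of the past (the union bound argument only uses the marking coins of $v$'s neighbors in the current phase, and the conditioning on $v$ marking itself). Therefore the probability that $v$ survives all of its first $2\log n$ marked phases is at most $(1/2)^{2\log n} = 2^{-2\log n} = 1/n^2$, taking $\log$ base $2$.

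\textbf{Main obstacle.} The step I expect to require the most care is making the conditioning rigorous in the second claim: I need that the ``decided with probability $\ge 1/2$'' bound holds conditionally on the entire execution history, so that the survival probabilities multiply. This is a standard argument (each marked phase is a fresh Bernoulli trial with success probability at least $1/2$, dominated by a geometric random variable), but one has to be a little careful because the \emph{number} of marked phases is itself random and the adversarial-looking dependence between phases (which neighbors are still around, which are \cattwo, etc.) must be shown not to hurt the bound. The cleanest way is to observe that in \emph{any} phase where $v$ is undecided and marks itself, whatever the configuration of $v$'s neighborhood, the probability $v$ remains undecided is at most $1/2$ by the first claim, so we may couple $v$'s sequence of marked-phase outcomes with i.i.d.\ fair coins and conclude via the product bound. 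A minor additional point is confirming that a decided node stays decided — once $v$ is \catone, \cattwo, or \catthree{} it never reverts to \undecided{} — which is immediate from inspection of Algorithm~\ref{alg:2-ruling-set}, since $\texttt{status}_v$ is only ever set away from \undecided.
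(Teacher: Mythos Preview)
Your proposal is correct and follows essentially the same approach as the paper: bound the probability that a marked node $v$ remains undecided by union-bounding over equal-or-higher-degree neighbors, using $\sum_{u \in N(v):\, d(u)\ge d(v)} \tfrac{1}{2d(u)} \le \tfrac{d(v)}{2d(v)} = \tfrac{1}{2}$, and then multiply across $2\log n$ marked phases to obtain $(1/2)^{2\log n} = 1/n^2$. Your treatment of the conditioning in the second claim is in fact more careful than the paper's one-line statement, but the underlying argument is the same.
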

\onlyLong{
\begin{proof}
Consider a marked node $v$. We only consider the probability that $v$ becomes a $\catone$ node, i.e.,
part of the independent set. (It can also become decided and  become a $\catthree$ or $\cattwo$ node .) 
A marked node becomes unmarked if an equal or higher degree neighbor is marked.
The probability of this ``bad" event happening is at most
$$\sum_{u \in N(v) : d(u) \geq d(v)} \frac{1}{2d(u)} \leq \sum_{u \in N(v) : d(u) \geq d(v)} \frac{1}{2d(v)} \leq \sum_{u \in N(v)} \frac{1}{2d(v)} \leq \frac{d(v)}{2d(v)} = \tfrac{1}{2}.$$ 
The probability that a  node remains undecided after $2\log n$ marked phases is at most $\frac{1}{2^{2\log n}} \leq 1/n^2$.
\end{proof}
}
\noindent
The next lemma bounds the round complexity of the algorithm and establishes its correctness. 
The round complexity bound is essentially a consequence of the previous lemma and the correctness of the 
algorithm is easy to check.

\begin{lemma}
\label{le:runtime}
The  algorithm  {\tt 2-rulingset-msg}  runs in $O(\Delta \log n)$ rounds whp. In particular, with probability at least $1-2/n^2$,
a node $v$ becomes decided after $O(d(v)\log n)$ rounds.
When the algorithm terminates, i.e., when all nodes are decided, the $\catone$-nodes form a $2$-ruling set of the graph. Moreover, each node
is correctly classified according to its category.
\end{lemma}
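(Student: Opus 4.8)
The plan is to split the argument into three parts: the round complexity, the correctness of the 2-ruling-set property, and the correctness of the category classification. For the round complexity, I would first argue the per-node bound and then take a union bound. Fix a node $v$. By Lemma~\ref{le:mark}, in each phase in which $v$ marks itself it becomes decided with probability at least $1/2$, and after $2\log n$ marked phases it remains undecided with probability at most $1/n^2$. So the main point is to bound the number of \emph{ordinary} phases (not just marked phases) that $v$ can survive before accumulating $2\log n$ marked phases. Since $v$ marks itself independently with probability $1/(2d(v))$ in each phase while undecided, a standard Chernoff bound shows that within $O(d(v)\log n)$ phases, $v$ has marked itself at least $2\log n$ times with probability at least $1 - 1/n^2$; combining with Lemma~\ref{le:mark} via a union bound, $v$ is decided within $O(d(v)\log n)$ phases with probability at least $1 - 2/n^2$. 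Each phase is $O(1)$ rounds, so $v$ is decided in $O(d(v)\log n)$ rounds, which is $O(\Delta\log n)$; a union bound over all $n$ nodes gives the whp statement for termination.

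For the 2-ruling-set property, I would argue two things when the algorithm terminates. First, $\catone$ is an independent set: a node joins $\catone$ only if, in its broadcast step, no equal-or-higher-degree neighbor is simultaneously marked, and in that same step it announces $\catone$ status to all neighbors, who then become $\cattwo$ and stop marking. So two adjacent nodes cannot both enter $\catone$ --- if both were marked in the same phase the tie-break by degree (with IDs or some fixed rule breaking exact degree ties, which I would note is needed and is implicit in ``equal or higher degree'') prevents both from staying marked, and if one entered $\catone$ in an earlier phase the other would already be $\cattwo$. Second, every node is within $2$ hops of $\catone$: upon termination every node is decided, and a $\cattwo$ node is adjacent to a $\catone$ node by definition of how the status is set (Step~\ref{alg:msg:status2}), while a $\catthree$ node became $\catthree$ only after its checking-sampling step found a $\cattwo$ neighbor (Step~\ref{alg:msg:status-3}), hence is at distance $1$ from $\cattwo$ and thus distance $2$ from $\catone$. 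A $\catone$ node is trivially at distance $0$. So $\catone$ is a 2-ruling set.

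For the category correctness, I would check that each status assignment is only made when the defining condition genuinely holds: $\cattwo$ is set only upon receiving a message from a $\catone$ node, and $\catone$ nodes do send such messages to all neighbors, so the $\cattwo$ set is exactly the set of non-$\catone$ nodes with a $\catone$ neighbor; $\catthree$ is set only when a $\cattwo$ node is found in the sample, and such a node has no $\catone$ neighbor (else it would already be $\cattwo$), so $\catthree$ nodes are exactly the remaining nodes, all of which have a $\cattwo$ neighbor --- here I would observe that the termination guarantee forces every node to eventually fall into one of the three classes. The main obstacle I anticipate is the round-complexity argument: the subtlety is that $v$'s marking probability $1/(2d(v))$ stays fixed (it does not depend on the current degree), so the number of phases to accumulate $2\log n$ marks is concentrated around $\Theta(d(v)\log n)$, and I must be careful that $v$ does not get ``stuck'' being repeatedly un-marked --- but Lemma~\ref{le:mark} already absorbs that, since it bounds the failure probability \emph{per marked phase} regardless of neighbors' behavior. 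The rest is bookkeeping.
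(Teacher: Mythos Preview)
Your proposal is correct and follows essentially the same approach as the paper: a Chernoff bound to guarantee $2\log n$ markings within $O(d(v)\log n)$ phases with probability $\ge 1-1/n^2$, Lemma~\ref{le:mark} to convert marked phases into a decision, and a union bound over all nodes for the $O(\Delta\log n)$ whp termination, followed by a direct check of independence and the $2$-hop covering property from the algorithm's steps. One minor remark: the tie-breaking concern you flag is not actually needed, since under the rule ``unmark if you hear from an \emph{equal} or higher degree marked neighbor'' two equal-degree marked neighbors both unmark and neither enters $\catone$, so independence holds as written.
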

\onlyLong{
\begin{proof}
Consider a node $v$ with degree $d(v)$. In $16d(v)\log n$ phases, it marks itself $8 \log n$ times in expectation, assuming that it is still undecided.
Using a  Chernoff bound --- lower tail --  (cf. Section \ref{sec:concentration}), it follows that the node is marked at least $2 \log n$ times with probability at least $1- 1/n^2$.
By Lemma \ref{le:mark}, the probability that $v$ is still undecided if it gets marked $2\log n$ times is at most $1/n^2$.
Hence, unconditionally, the probability that a node is still undecided after $16d(v)\log n$ phases is at most $2/n^2$.
Applying a union bound over all nodes, the probability that any node is undecided after $16\Delta \log n$ phases  is at most
$2/n$. Hence the algorithm finishes in $O(\Delta \log n)$ rounds with high probability.

From the description of the algorithm it is clear that when the algorithm ends, every node has entered into either $\catone$, $\cattwo$, or $\catthree$.
By the symmetry breaking step (Step~\ref{alg:msg:unmark}), $\catone$ nodes form an independent set. They also form a ruling set because, $\cattwo$ nodes are neighbors of $\catone$ nodes (Step~\ref{alg:msg:status2}) and a node becomes $\catthree$ if it is not a neighbor of a $\catone$, but 
is a neighbor of a $\cattwo$ node (and hence is at distance 2 from a $\catone$ node). 
\end{proof}
}

\noindent
We now show a technical lemma that is crucially used in proving the message complexity bounds of the algorithm in Lemma \ref{le:msg}.  It gives a high probability bound on  the total number of messages sent by all nodes during the Broadcast step in any particular phase  (i.e., Step \ref{alg:msg:local-broadcast}) of the algorithm in terms of  a quantity that depends on the number of undecided nodes and their neighbors.  While bounding the expectation is easy, showing concentration is more involved. (We note
that we really use only part (b) of the Lemma for our subsequent analysis, but showing part (a) first, helps understand the proof of part (b)).

\onlyLong{We call a node's checking sampling step a ``success'', if it results in finding a $\cattwo$ node (in this case the node will get decided in Step \ref{alg:msg:status-3}), 
otherwise, it is called a ``failure''.}

\begin{lemma}
\label{le:msgbound}
Let $U \subseteq V$ be a (sub-)set of undecided nodes  at the beginning of a phase.
Let $N(v)$ be the set of  neighbors of $v$. 
Then the following statements hold: \\
{\bf (a)} Let $Z(U) = U \cup (\cup_{v \in U}N(v))$. The total number of messages sent by all nodes in $U$ during the Broadcast step in this phase  (i.e., Step \ref{alg:msg:local-broadcast}) of the algorithm is $O(|Z(U)| \log n)$ 
with probability at least $1 - 1/n^3$. 

\noindent
{\bf (b)}  Let $N'(v)$ be the set of undecided and category 3 neighbors of $v$ and suppose $|N'(v)| \geq d(v)/2$ (where $d(v)$ is the degree of $v$), for each $v \in U$.
Let $Z'(U) = U \cup (\cup_{v \in U}N'(v))$. The total number of messages sent by all nodes in $U$ during the Broadcast step in this phase  (i.e., Step \ref{alg:msg:local-broadcast}) of the algorithm is $O(|Z'(U)| \log n)$ 
with probability at least $1 - 1/n^3$. 
\end{lemma}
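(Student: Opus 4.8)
The plan is to handle both parts together by identifying, for each undecided node $v \in U$, the indicator random variable $X_v$ that is $1$ iff $v$ actually performs the Broadcast step in this phase (i.e., $v$ is marked \emph{and} its checking-sampling step fails to find a $\cattwo$ neighbor). The total number of messages sent during the Broadcast step is exactly $\sum_{v \in U} X_v \cdot d(v)$, since a broadcasting node sends one message along each of its $d(v)$ edges. So everything reduces to a concentration bound on this weighted sum. First I would bound the expectation. A node $v$ is marked with probability $1/(2d(v))$, so $\mathbb{E}[X_v d(v)] \le 1/2$, and hence $\mathbb{E}\big[\sum_{v\in U} X_v d(v)\big] \le |U|/2$. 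That already gives the expectation bounds $O(|U|) \le O(|Z(U)|\log n)$ and $O(|U|) \le O(|Z'(U)|\log n)$; the work is in the ``with probability $1-1/n^3$'' part. The obvious difficulty: the summands $X_v d(v)$ can be as large as $\Delta$, so a naive Chernoff/Bernstein application is too weak to get an $O(\cdot \log n)$-type tail with polynomially small failure probability when individual terms are huge.

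The key idea to get around the large-weight obstacle is to charge each broadcasting node's $d(v)$ messages to distinct vertices in the set $Z(U)$ (resp. $Z'(U)$), and then argue that no vertex gets charged too many times. Concretely: when $v$ broadcasts, it sends a message to each neighbor $u \in N(v)$; charge that message to the endpoint $u$. For part (b) I would instead only need to charge to the undecided-or-$\catthree$ neighbors $N'(v)$, using the hypothesis $|N'(v)| \ge d(v)/2$ to recover $d(v) \le 2|N'(v)|$ and thus bound $\sum_v X_v d(v) \le 2\sum_v X_v |N'(v)| \le 2\sum_{u \in Z'(U)} (\text{number of broadcasting neighbors of }u\text{ in }U)$. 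So it suffices to show that for every vertex $u$, the number of its neighbors in $U$ that broadcast is $O(\log n)$ with probability $1 - 1/n^4$, and then union-bound over the at most $n$ vertices. For a fixed $u$, let $Y_u = \sum_{v \in N(u)\cap U} X_v$. We have $\mathbb{E}[Y_u] = \sum_{v \in N(u)\cap U} \Pr[X_v = 1] \le \sum_{v \in N(u)\cap U} 1/(2d(v))$. Here is where the checking-sampling step earns its keep and the analysis splits into two cases for the neighbors $v$ of $u$: if $d(v) \ge$ (some threshold like $\log n$), then $\Pr[\text{$v$ marked}] = 1/(2d(v))$ is small enough that the contributions sum to $O(1 + |N(u)\cap U|/\log n)$, but $|N(u)\cap U|$ could still be up to $\Delta$. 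The saving grace is that once $u$ is a $\cattwo$ node (which happens as soon as any neighbor enters $\catone$), subsequent marked neighbors of $u$ will, whp, detect $u$ (or another $\cattwo$ node) via checking sampling and not broadcast — but since we are bounding a single phase and $u$ may be undecided, the cleaner route is: $Y_u$ is a sum of independent indicators (marking is independent across nodes), so apply a Chernoff bound. If $\mathbb{E}[Y_u] \le \log n$ we get $\Pr[Y_u \ge c\log n] \le 1/n^4$ directly; if $\mathbb{E}[Y_u] > \log n$, which forces $|N(u)\cap U|$ to be large with many small-degree neighbors, we still get $\Pr[Y_u \ge 2\mathbb{E}[Y_u]] \le \exp(-\Omega(\mathbb{E}[Y_u]))$, but then $\mathbb{E}[Y_u] = O(|N(u)\cap U|)$ and I would need a separate, deterministic observation — which is that in part (b) the hypothesis localizes things, and in part (a) the $Z(U)$ term $\sum_{u} |N(u) \cap U|$-style double counting just equals $\sum_{v\in U} d(v)$, which is the very quantity we are bounding. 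So the truly delicate case is when many low-degree nodes crowd around one vertex; here I would observe $\sum_{v \in N(u)\cap U} 1/(2d(v))$ can still be large, and handle it by a more careful Chernoff that uses $\mathrm{Var}[Y_u] \le \mathbb{E}[Y_u]$ together with the weighted-sum structure via Bernstein's inequality applied directly to $\sum_v X_v d(v)$ with the charging reorganization.

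The main obstacle I anticipate is exactly this tension between large individual weights $d(v)$ and the desire for a polynomially-small failure probability: a single Bernstein application to $\sum_{v\in U} X_v d(v)$ gives a deviation term proportional to $\sqrt{\mathbb{E}[\cdot] \cdot \Delta \log n} + \Delta\log n$, which is $O(|Z|\log n)$ only when $\Delta = O(|Z|)$ — not always true. I expect the resolution, and the crux of the authors' argument, to be the two-level decomposition: (i) split the broadcasting nodes by degree class (dyadic buckets $d(v) \in [2^k, 2^{k+1})$), (ii) within each bucket all weights are within a factor $2$, so a plain Chernoff bound gives concentration of the \emph{count} of broadcasting nodes in that bucket to $O(\mathbb{E} + \log n)$, (iii) sum the per-bucket bounds, using $\sum_k \mathbb{E}[\text{count}_k] \cdot 2^k \le |U|/2$ from the marking probabilities and the fact that there are only $O(\log \Delta) = O(\log n)$ buckets so the additive $\log n$ terms aggregate to $O(\log^2 n)$, which is absorbed since $|Z| \ge 1$; then a union bound over the $O(\log n)$ buckets keeps the failure probability at $1/n^3$. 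I would write it up with the dyadic bucketing as the spine, deriving part~(a) first (charging to $N(v)$, double-counting $\sum_v X_v d(v)$) and then part~(b) by replacing $d(v)$ with $2|N'(v)|$ via the degree hypothesis and charging only to $Z'(U)$.
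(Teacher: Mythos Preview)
Your proposal overlooks the one observation that makes a direct Bernstein application work, and this is exactly what the paper does. By definition $Z(U)=U\cup\bigcup_{v\in U}N(v)$, so for every $v\in U$ \emph{all} of $v$'s neighbors lie in $Z(U)$; hence $d(v)\le |Z(U)|$ automatically. Your stated obstacle (``$\Delta=O(|Z|)$ --- not always true'') is therefore vacuous for nodes of $U$. With this in hand the paper just sets $X_v\in\{0,d(v)\}$, notes $\var{X_v}\le d(v)/2$, bounds $\sum_{v\in U}d(v)$ by (twice) the number of edges in $G[Z(U)]$, i.e.\ $O(|Z(U)|^2)$, and plugs $\sigma^2=O(|Z(U)|^2)$ and $b=\max_v d(v)\le |Z(U)|$ into Bernstein to get deviation $O(|Z(U)|\log n)$ with probability $1-1/n^3$. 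Part~(b) is identical after using $d(v)\le 2|N'(v)|\le 2|Z'(U)|$ and $\sum_{v\in U}d(v)=O(|Z'(U)|^2)$.

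Your dyadic-bucket plan not only is unnecessary, it has a genuine arithmetic slip: in bucket $k$ the Chernoff overshoot of $O(\log n)$ \emph{counts} translates into $O(2^{k}\log n)$ \emph{messages}, so the additive terms sum to $O\big((\max_{v\in U}d(v))\log n\big)$, not $O(\log^2 n)$ as you wrote. This is salvageable only via $\max_{v\in U}d(v)\le |Z(U)|$ --- precisely the observation you dismissed. Once you see that observation, the charging and bucketing machinery can be dropped entirely.
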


\onlyLong{
\begin{proof}
A node $v$ enters the broadcast step only if it marks itself and if it does not find any $\cattwo$ neighbor in its checking sampling step.
The marking probability is $1/2d(v)$. Hence (even) ignoring the checking sampling step, the probability that a node broadcasts is at most $1/2d(v)$ (note 
that in the very first phase, checking sampling will result in failure for all nodes).
Let random variable (r.v.) $X_v$ denote the number of messages
broadcast by node $v$ in this phase. Hence, $E[X_v] = \frac{1}{2d(v)}d(v) = 1/2$. Let random variable $X$ denote the total number of messages
broadcast in this phase: $X = \sum_{v \in U} X_v$. By linearity of expectation, 
the  expected number of messages broadcast in one phase is $E[X] = \sum_{v \in U}  E[X_v] = k/2$, where $k = |U|$.
We next show concentration of  $X$.
We note that $X_v$s are all independent and, for the variance of $X_v$, we get 
\begin{align*}
  \var{X_v} &= E[X^2_v] - (E[X_v])^2 =  \frac{1}{2d(v)}(d(v))^2 - \left(\tfrac{1}{2}\right)^2 = \frac{d(v)}{2} - \frac{1}{4} \leq \frac{d(v)}{2}.\\
\intertext{It follows that}
  \var{X} &= \sum_{v \in U} \var{X_v} \leq \sum_{v \in U} d(v)/2 \leq (|Z(U)|^2)/4. 
\end{align*}
  We have $\sum_{v \in U} d(v) \leq (|Z(U)|^2)/2$, since the latter counts all possible edges in the subgraph induced by $U$ and its neighbors. Furthermore, $X_v - E[X_v] \leq d(v)\leq |Z(U)|$. Thus, we can apply Bernstein's inequality (cf. Section \ref{sec:concentration}) to obtain
\begin{align*}
  \Pr(X \geq k +  4|Z(U)| \log n) 
    &= \exp\left(-\frac{16|Z(U)|^2 \log^2 n}{2Var(X)+  (8/3)|Z(U)|^2\log n} \right) \\
    &\le \exp\left(-\frac{16|Z(U)|^2 \log^2 n}{(|Z(U)|^2)/2+  (8/3)|Z(U)|^2\log n}\right),
\end{align*}
which is at most $1/n^2$, completing part (a).

To show part (b), let $N_2(v)$ be the set of $\cattwo$ neighbors of a node $v \in U$. Then $N_2(v) = N(v) - N'(v)$ and  
\begin{equation}
  \label{eq:bound_dv}
  \sum_{v \in U} d(v) \leq  \sum_{v \in U} |N'(v)|  + \sum_{v \in U} |N_2(v)|.
\end{equation}
We will now bound from above the two sums on the right-hand side.
Note that  $\sum_{v \in U} |N'(v)| \leq  |Z'(U)|^2/2 $ since the latter counts all possible edges in the subgraph induced by $U$ and its $\catthree$ and undecided neighbors. 
By assumption, $|N_2(v)| \leq d(v)/2$, which means that
\begin{align*}
  \sum_{v \in U} |N_2(v)|
 \leq  \sum_{v \in U} d(v)/2 \leq k |Z'(U)| \leq |Z'(U)|^2. \\
 \intertext{Plugging these bounds into \eqref{eq:bound_dv}, we get }
   \sum_{v \in U} d(v)  \leq  |Z'(U)|^2/2 +  |Z'(U)|^2 \leq (3/2)  |Z'(U)|^2.
\end{align*}
Hence $\var{X} \leq (3/4)  |Z'(U)|^2$. Let $X_v$ and $X$ be defined as above. We have, $X_v - E[X_v] \leq d(v)\leq 2|Z'(U)|$. Now,  applying Bernstein's inequality shows a similar concentration bound for $X$ as in part (a).
\end{proof}
}

\begin{lemma}
\label{le:msg}
The  algorithm  {\tt 2-rulingset-msg} uses $O(n \log^2 n)$ messages whp.
\end{lemma}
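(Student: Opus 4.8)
The goal is to bound the total message complexity of {\tt 2-rulingset-msg} by $O(n\log^2 n)$ whp. Messages arise from three sources: (1) the checking-sampling step (Step \ref{alg:msg:check-sampling}--\ref{alg:msg:find-categories}), where a marked node queries $4\log(d(v))$ random neighbors and they reply — that is $O(\log n)$ messages per marked node per phase; (2) the broadcast step (Step \ref{alg:msg:local-broadcast}), where a node that survived checking-sampling sends to all $d(v)$ neighbors; and (3) the status announcements by new $\catone$ nodes to their neighbors, each of which can be charged to the broadcast that preceded it (same order of magnitude). Sources (1) and (3) are easy; the main work is bounding source (2) using Lemma \ref{le:msgbound}(b), and the heart of the argument is a charging/amortization scheme so that the $|Z'(U)|$-type bounds telescope to $O(n\log n)$ over all phases.

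\textbf{Step 1: Account for checking-sampling and announcement messages.} First I would observe that in any single phase, each marked node sends and receives $O(\log n)$ messages during checking sampling. By Lemma \ref{le:runtime}, the algorithm runs for $O(\Delta\log n)$ phases, but this crude bound gives $O(n\Delta\log^2 n)$, which is too weak. Instead, I charge checking-sampling messages per node over its lifetime: a node $v$ marks itself $O(d(v)\log n)$ times before getting decided (whp, by the Chernoff argument in Lemma \ref{le:runtime}), and each marking costs $O(\log d(v)) = O(\log n)$ sampling messages, for a total of $O(d(v)\log^2 n)$ per node. Summing $\sum_v d(v)\log^2 n = O(m\log^2 n)$ — still too weak in general. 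The fix: a marked node that enters the broadcast step *and survives* becomes $\catone$ and its entire neighborhood becomes $\cattwo$, so it stops broadcasting; the expensive $d(v)$-size broadcast happens $O(\log n)$ times per node once we condition on checking-sampling repeatedly failing, which is exactly what part (b) of Lemma \ref{le:msgbound} is set up to exploit. So I should keep checking-sampling and broadcast coupled: a node only pays a $d(v)$-size broadcast in a phase where its checking sample of $\catthree$/undecided-heavy neighborhood happened to miss all $\cattwo$ nodes.

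\textbf{Step 2: Bound broadcast messages via Lemma \ref{le:msgbound}(b) and a potential function.} This is the crux. For a fixed phase, let $U$ be the set of undecided nodes that are marked and survive checking sampling (hence broadcast). Split $U$ into $U_{\text{high}} = \{v : |N'(v)| \ge d(v)/2\}$ (at least half the neighbors are undecided-or-$\catthree$) and $U_{\text{low}}$ (more than half the neighbors are $\cattwo$). For $U_{\text{high}}$, Lemma \ref{le:msgbound}(b) gives $O(|Z'(U_{\text{high}})|\log n)$ broadcast messages whp, where $Z'(U_{\text{high}})$ counts the broadcasting nodes together with their undecided/$\catthree$ neighbors. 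I would then argue that the sum of $|Z'(\cdot)|$ over all phases is $O(n\log n)$: the key point is that when a $\catthree$-or-undecided node $u$ appears in $Z'(U)$ because it is an undecided neighbor of a broadcaster, $u$ itself is marked with probability $\ge 1/2$ of getting decided this phase (Lemma \ref{le:mark}); more robustly, over the whole execution each node $u$ can be counted in some $Z'(U)$ only while it is still undecided-or-$\catthree$, and since the broadcaster $v$ that pulls it in has $\ge d(v)/2$ such neighbors, we can set up an amortized charge (à la the accounting method of \cite{Algorithm:Cormen}): each broadcast of cost $d(v)$ deposits $O(1)$ credit on each of its $\ge d(v)/2$ undecided/$\catthree$ neighbors and on $v$ itself, and a node accumulates credit only during the $O(\log n)$ phases in which it is marked-but-undecided or acts as a target — giving $O(\log n)$ credits per node, hence $O(n\log n)$ broadcast messages over phases where $U_{\text{high}}$ is responsible. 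For $U_{\text{low}}$: a node $v\in U_{\text{low}}$ has $> d(v)/2$ neighbors in $\cattwo$, so its checking sample of $4\log d(v)$ random neighbors hits a $\cattwo$ node with probability $\ge 1 - 2^{-4\log d(v)} = 1 - d(v)^{-4}$; by a union bound over all $O(\Delta\log n) = \poly(n)$ phases and all such nodes, whp *no* node in $U_{\text{low}}$ ever survives checking sampling, so $U_{\text{low}}$ contributes zero broadcast messages whp. Summing: $O(n\log^2 n)$ broadcast messages whp (the extra $\log n$ over $O(n\log n)$ coming from the $\log n$ factor in Lemma \ref{le:msgbound}(b), or equivalently from needing $O(\log n)$ successful-marking rounds per node).

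\textbf{Expected main obstacle.} The delicate part is making the potential/credit argument global and rigorous despite the randomness: the sets $U$, $U_{\text{high}}$, the neighborhoods $N'(v)$, and even the number of phases are all random and correlated across phases, so I cannot naively take expectations phase-by-phase and sum. The clean way is: (i) first condition on the high-probability event from Lemma \ref{le:runtime} that every node $v$ is decided within $O(d(v)\log n)$ phases and marked at most $O(\log n)$ times while undecided; (ii) on that event, assign to each node a budget of $O(\log n)$ "broadcast-target tokens" and $O(\log n)$ "broadcaster tokens," and show deterministically (given the event and the per-phase success events of Lemma \ref{le:msgbound}(b)) that every broadcast message can be charged to an unspent token; (iii) apply Lemma \ref{le:msgbound}(b) with a union bound over the $\poly(n)$ phases so all the per-phase concentration bounds hold simultaneously whp. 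Getting the bookkeeping in step (ii) exactly right — in particular ensuring that a node stops absorbing target-tokens once it leaves the undecided/$\catthree$ status, and that $\catone$ nodes never broadcast again — is where the argument needs the most care, but it is precisely the structure that three node categories (rather than the two in MIS) make possible.
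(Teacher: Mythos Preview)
Your overall decomposition into checking-sampling messages and broadcast messages, with the broadcast case split by whether $|N'(v)|\ge d(v)/2$, matches the paper exactly, and your $U_{\text{high}}$ charging scheme is essentially the paper's Case~1 argument. But there are two genuine errors.

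\textbf{Step 1 (checking-sampling messages).} You write that a node marks itself $O(d(v)\log n)$ times before being decided. That is the number of \emph{phases}, not markings: $v$ marks with probability $1/(2d(v))$ per phase, so over $O(d(v)\log n)$ phases it is marked only $O(\log n)$ times whp. This is exactly the second statement of Lemma~\ref{le:mark}. With that correction, each node sends $O(\log n)\cdot O(\log d(v))=O(\log^2 n)$ sampling messages over its lifetime, and summing gives $O(n\log^2 n)$ directly. No coupling with the broadcast step is needed here; the paper dispatches this case in two lines.

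\textbf{Step 2, the $U_{\text{low}}$ case.} This is the more serious gap. You claim that because a node $v\in U_{\text{low}}$ has $>d(v)/2$ $\cattwo$ neighbors, its checking sample misses them all with probability at most $d(v)^{-4}$, and then take a union bound over $\poly(n)$ phase--node pairs to conclude that whp \emph{no} $U_{\text{low}}$ node ever reaches the broadcast step. That union bound fails whenever $d(v)$ is small: for $d(v)=O(1)$ the miss probability $d(v)^{-4}$ is a constant, and over $\Theta(\log n)$ markings such a node will survive checking sampling $\Theta(\log n)$ times in expectation. So $U_{\text{low}}$ does contribute broadcast messages. The paper's fix is not to rule out these broadcasts but to show they are cheap in aggregate: conditional on each node being marked at most $2\log n$ times, the expected total Case-2 broadcast cost is $\sum_v 2\log n \cdot d(v)^{-4}\cdot d(v)=\sum_v 2\log n/d(v)^3 = O(n\log n)$, and Bernstein's inequality (with $\var{Y_v}\le 4\log^2 n$ and $Y_v - E[Y_v]\le 2n\log n$) gives concentration at $O(n\log^2 n)$ whp. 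The point is that low-degree nodes survive sampling often but broadcast few messages, while high-degree nodes broadcast many messages but survive sampling rarely; only the product $d(v)\cdot d(v)^{-4}$ matters, and it is summable.
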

\begin{proof}
We will argue separately about two kinds of messages that any node can initiate. Consider any node $v$. 
\noindent
{\bf 1.}  {\em type 1 messages}:  In the checking sampling step in some phase, $v$ samples $4\log d(v)$ random neighbours which costs $O(\log d(v))$ messages in that phase.

\noindent
{\bf 2.} {\em type 2 messages}:  In the broadcast step in some phase, $v$ sends to all its neighbors  which costs $d(v)$ messages.  This happens when all the sampled neighbors in set $A_v$ (found in Step \ref{alg:msg:find-categories})  are not  $\cattwo$ nodes.

Note that $v$ initiates any message at all, i.e., both  type 1 and 2 messages happen, only when  $v$ marks itself, which happens with probability $1/2d(v)$.

We first bound the  type 1 messages sent overall by all nodes. By the above statement, a node does checking sampling
when it marks itself which happens with probability $1/2d(v)$. By Lemma \ref{le:mark},  with probability at least $1 - 1/n^2$,  a node is marked (before it gets decided) at most $2 \log n$ times. 
Hence, with probability at least $1 - 1/n^2$, the number of type 1 messages sent by  node $v$ is at most $O(\log d(v) \log n)$; this implies, 
by union bound, that with probability at least $1 - 1/n$ every node $v$ sends at most $O(\log d(v) \log n)$ type 1 messages.
Thus, whp, the total number of type 1 messages sent is $\sum_{v \in V} O(\log d(v) \log n) = O(n \log^2 n)$.

We next bound the type 2 messages, i.e., messages sent during the broadcast step. There are two cases to consider in any phase.  

\paragraph{Case 1} In this case we focus (only)  on the broadcast messages of the set $U$ of   undecided nodes $v$  that (each) have at least $d(v)/2$ neighbors that are in $\catthree$ or undecided (in that phase).
We show by a charging argument that any node  receives amortized $O(\log n)$ messages (whp) in this case. 
When a node $u$ (in this case)  broadcasts,
its $d(u)$ messages are charged equally to itself and its $\catthree$ and undecided neighbors (which number at least $d(u)/2$).

We first show that any $\catthree$ or undecided node $v$  is charged by amortized $O(\log n)$ messages  in any  phase.  
Consider the set $U(v)$ which is the set of undecided  nodes (each of which  satisfy Case 1 property of having 
at least half of its neighbors that are in $\catthree$ or undecided in this phase)  in the closed neighborhood of $v$ 
(i.e., $\{v\} \cup N(v)$). 
As in Lemma \ref{le:msgbound}.(b), we define $Z'(U(v)) =  U(v) \cup (\cup_{w \in U(v)}N'(w))$, where $N'(w)$ is the set of all $\undecided$ or $\catthree$ neighbors of $w$. 
 Since, by assumption of Case 1, every undecided node $u \in U(v)$ has at least $d(u)/2$ neighbors that are in $\catthree$ or $\undecided$ in the current phase, applying  Lemma \ref{le:msgbound} (part (b)) to the set $Z'(U(v))$ tells us that, with probability at least $1 -1/n^2$, the total number of messages broadcast by undecided nodes in $U(v)$  is $O(|Z'(U(v))| \log n)$. Hence,
amortizing over the total number of (undecided and $\catthree$) nodes in $Z'(U(v))$,  we have shown each node in $Z'(U(v))$,
 in particular $v$,  is charged (amortized) $O(\log n)$  in a phase.  Taking a union bound, gives a high probability result for all nodes $v$.

To show that the same node $v$ is not charged too many times {\em across phases}, we use the fact that $\cattwo$ nodes are never charged (and they do not broadcast). 
We  note that if a node enters the ruling set (i.e., becomes $\catone$) in some phase, then all its neighbors 
 become $\cattwo$ nodes and will never be charged again (in any subsequent phase).  Furthermore, since a marked node enters the ruling set  with probability at least $1/2$, a neighbor of $v$ (or $v$ itself) gets charged at most $O(\log n)$ times whp.  Hence overall a node is charged at most $O(\log^2 n)$ times whp and by union bound, every node gets charged at most  $O(\log^2 n)$ times whp.

\paragraph{Case 2} In this case, we focus on the messages broadcast by those undecided nodes $v$ that have at most $d(v)/2-1$ neighbors that are in $\catthree$ or undecided, i.e., at least $d(v)/2+1$ neighbors are in $\cattwo$. By the description of our algorithm, a node enters the broadcast step, only if checking sampling step (Step \ref{alg:msg:check-sampling}) fails to find a $\cattwo$ node. The probability of this ``bad'' event happening is at most $\frac{1}{d(v)^4}$, which is the probability that 
a $\cattwo$ neighbor (of which there are at least $d(v)/2$ many) is not among any of the $4 \log (d(v))$ randomly sampled neighbors.
We next bound the total number of broadcast messages generated by all undecided nodes in Case 2 during the entire course of the algorithm.
By Lemma \ref{le:mark}, for any node $v$, Case 2 can potentially happen only $2\log n$ times with probability at least $1-1/n^2$, since  that is the number of times $v$ can get marked. Let r.v. $Y_v$ denote the number of Case 2 broadcast messages sent by $v$
during the course of the algorithm. Conditional on the fact that it gets marked at most $2 \log n$ times, we have
$E[Y_v] = 2\log n \frac{1}{d(v)^4}d(v) = 2\log n \frac{1}{d(v)^3}$.

Let $Y = \sum_{v \in V} Y_v$. Hence, conditional on the fact that each node gets marked at most $2\log n$ times (which happens with probability $\geq 1 -1 /n$) the total expected number of Case 2 broadcast messages 
sent by all nodes is 
 $E[Y] = \sum_{v \in V} E[Y_v] = \sum_{v \in V} 2\log n \frac{1}{d(v)^3} = O(n\log n).$

We next show concentration of $Y$ (conditionally as mentioned above). We know that
$
  \var{Y_v} = 4\log^2 n (\frac{1}{d(v)^2} - \frac{1}{d(v)^6}) \leq 4 \log^2 n. $
Since the random  variables $Y_v$ are independent, we have 
$\var{Y} = \sum_{v\in V} Var (Y_v) = 4n\log^2 n$.
Noting that $Y_v -E[Y_v] \leq 2n\log n$, we apply Bernstein's inequality to obtain
\[
  \Pr(Y \geq E[Y] + 4n\log^2 n) \leq \exp\left(-\frac{16n^2\log^4n}{8n\log^2 n + (2/3)2n\log n(4n\log^2 n)}\right) \leq O(1/n^2).
\]
Since the conditioning with  respect to the fact that all nodes get marked at most $ 2 \log n$ times happens with probability
at least $1-1/n$, unconditionally, $\Pr(Y \geq \Theta(n\log^2 n)) \leq O(1/n^2) + 1/n$.
Hence, the overall broadcast messages sent by nodes in Case 2 is bounded by $O(n\log^2 n)$ whp.

Combining type 1 and type 2 messages, the overall number of messages is bounded by $O(n \log^2 n)$ whp.
\end{proof}

\noindent
Thus we obtain the following theorem.
\onlyShort{In the full paper, we show that this analysis of the Algorithm {\tt 2-rulingset-msg} is tight up to a 
polylogarithmic factor.}
\begin{theorem}  
 \label{thm:2-ruling-set-msg}
The  algorithm  {\tt 2-rulingset-msg} computes a 2-ruling set using $O(n \log^2 n)$ messages and terminates in $O(\Delta \log n)$ rounds with high probability.
\end{theorem}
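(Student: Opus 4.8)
The plan is to assemble the theorem from the four lemmas already in place. Correctness and the round bound are immediate from Lemma~\ref{le:runtime}: when the algorithm halts every node lies in one of $\catone,\cattwo,\catthree$; the symmetry-breaking step (Step~\ref{alg:msg:unmark}) makes $\catone$ an independent set; and since every $\cattwo$ node is adjacent to a $\catone$ node and every $\catthree$ node is adjacent to a $\cattwo$ node, every node is within two hops of $I=\catone$, so $I$ is a $2$-ruling set. The $O(\Delta\log n)$ round bound also comes from Lemma~\ref{le:runtime}, which itself rests on Lemma~\ref{le:mark}: a marked node is decided with probability $\ge 1/2$, so a Chernoff lower-tail bound shows a degree-$d(v)$ node is decided within $O(d(v)\log n)$ phases with probability $\ge 1-2/n^2$, and a union bound over all nodes gives $O(\Delta\log n)$ rounds whp.

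For the message bound I would split the messages each node $v$ initiates into \emph{type-1} messages (the $O(\log d(v))$ probes of the checking-sampling step) and \emph{type-2} messages (the $d(v)$ messages of a broadcast step), noting that $v$ initiates anything only in a phase where it marks itself, which happens with probability $1/2d(v)$. Type-1 is easy: by Lemma~\ref{le:mark} a node marks itself at most $2\log n$ times before being decided whp, so each node sends $O(\log d(v)\log n)$ type-1 messages whp, and $\sum_{v}O(\log d(v)\log n)=O(n\log^2 n)$. Type-2 messages are handled by the two-case analysis. In Case~1 (a broadcasting node has $\ge d(v)/2$ neighbors that are undecided or $\catthree$), I would run the amortized charging argument: charge $v$'s $d(v)$ broadcast messages equally to $v$ and those $\ge d(v)/2$ neighbors; Lemma~\ref{le:msgbound}(b) applied to the closed neighborhood of a fixed node $v$ bounds the total broadcast cost of the relevant undecided nodes there by $O(|Z'(U(v))|\log n)$, so $v$ is charged $O(\log n)$ per phase whp, and since any neighbor of a node that joins $I$ becomes $\cattwo$ (and $\cattwo$ nodes are never charged and never broadcast), each node is charged in only $O(\log n)$ phases whp, giving $O(\log^2 n)$ total charge per node, i.e. $O(n\log^2 n)$ overall. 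In Case~2 (more than half of a broadcasting node's neighbors are in $\cattwo$), the checking-sampling step fails with probability at most $d(v)^{-4}$, so conditioned on $v$ marking itself at most $2\log n$ times, $\expect{Y_v}=O(\log n/d(v)^3)$ where $Y_v$ counts $v$'s Case-2 broadcasts; summing gives $\expect{Y}=O(n\log n)$, and Bernstein's inequality with $\var{Y}=O(n\log^2 n)$ yields $Y=O(n\log^2 n)$ whp.

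The main obstacle is the concentration argument underlying Lemma~\ref{le:msgbound}(b) and its deployment in Case~1. The broadcast indicators $X_v$ have widely varying magnitudes ($X_v\le d(v)$), so a naive Chernoff bound is too weak; one must bound $\sum_{v\in U}d(v)$ by the edge count of the induced subgraph to get $\var{X}\le \tfrac34|Z'(U)|^2$ and then apply Bernstein's inequality. The subtle combinatorial point is the cross-phase bookkeeping: making precise that a fixed node $v$ can be charged in only $O(\log n)$ distinct phases because all broadcasters incident to $v$ turn $\cattwo$ once any of them enters $I$, and that a marked node enters $I$ with probability $\ge 1/2$. Finally, I would take a single union bound over the $O(1/n)$-scale failure events arising from Lemmas~\ref{le:mark}, \ref{le:runtime}, \ref{le:msgbound}, and \ref{le:msg}, so that both the $O(n\log^2 n)$ message bound and the $O(\Delta\log n)$ round bound hold simultaneously with high probability, which is exactly the statement of Theorem~\ref{thm:2-ruling-set-msg}.
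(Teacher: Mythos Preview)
Your proposal is correct and follows essentially the same approach as the paper: the theorem is obtained directly by combining Lemma~\ref{le:runtime} (correctness and the $O(\Delta\log n)$ round bound) with Lemma~\ref{le:msg} (the $O(n\log^2 n)$ message bound), and your write-up faithfully recapitulates the type-1/type-2 split and the Case~1/Case~2 analysis of the latter. One small wording slip: in your cross-phase argument you write that ``all broadcasters incident to $v$ turn $\cattwo$ once any of them enters $I$''; what actually happens (and what the paper uses) is that once any broadcasting neighbor of $v$ enters $I$, the node $v$ itself becomes $\cattwo$ and hence is never charged again---your earlier sentence states this correctly, so just drop the later misphrasing.
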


\onlyLong{
\paragraph{A tight example} We show that the above analysis of the Algorithm {\tt 2-rulingset-msg}  is tight up to a polylogarithmic factor. The tightness of the message complexity follows from \Cref{thm:msg-lb-rulingset} which shows that any $O(1)$-ruling set algorithm
needs $\Omega(n)$ messages.

For the time complexity, we show that the analysis is essentially tight by giving a $n$-node graph where the algorithm takes $\Omega(n^{1-\epsilon})$ rounds in a graph where $\Delta = O(n)$, for any fixed constant $\epsilon > 0$. The graph is constructed as follows. 
The graph consists of a distinguished node $s$ and  three sets of nodes --- sets $A, B,$ and $C$. $A$ has $n^{1-\epsilon}$ nodes, $B$ has $n^{1-\epsilon'}$ nodes,
where $\epsilon > \epsilon' > 0$ and $C$ has $n - 1 - n^{1-\epsilon} - n^{1-\epsilon'}$ nodes. There is a complete bipartite graph between sets $A$ and $B$ and between sets $B$ and $C$
and $s$ is connected to all nodes in $A$. Hence $s$ has degree $n^{1-\epsilon}$, every  node in $A$ has degree $\Theta(n^{1-\epsilon'})$, every node in $B$ has degree $\Theta(n)$
and every  node in $C$ has degree $\Theta(n^{1-\epsilon'})$. 
If we run the algorithm {\tt 2-rulingset-msg} on this graph, then with high probability at least one node in $C$ will enter the ruling set in the first phase itself; further, with probability at least $1-o(1)$, neither $s$ nor any node in sets $A$ and $B$ mark themselves in the first phase.
Once a node from $C$ enters the ruling set, all nodes in $B$ become $\cattwo$ nodes. On the other hand, nodes in sets $A$ and $C$ (conditioned on not entering the ruling set in the current phase) will become $\catthree$ nodes in the next phase by executing the checking sampling step (which will succeed with high probability). However, since all of the neighbors of $s$ are in $\catthree$, node $s$ is bound to execute $\Theta(n^{1-\epsilon})$ phases until it marks itself and enter the ruling set in expectation. Hence the expected round complexity is $\Omega(n^{1-\epsilon})$. Note that even though this graph has $\Theta(n^{2-\epsilon'})$ edges, the algorithm sends only $O(n \log^2 n)$ messages (whp).
 }
 
\onlyLong{
\subsection{An $O(\log n)$-round, $O(n^{1.5}\log n)$ message complexity 2-ruling set algorithm}
\label{sec:fastmsg}
We show that when $m$ is large,  one can design a simple algorithm with $o(m)$  message complexity algorithm that runs in time $O(\log n)$.
This algorithm requires knowledge of $n$.
The algorithm is as follows.

\begin{enumerate}
\item Initially all nodes are {\em inactive}.
\item  Every node with degree less than $\sqrt{n}$  becomes active  and nodes with degree higher than $\sqrt{n}$  become active independently  with probability  $2\log n/\sqrt{n}$. Let $S$ denote the set of active nodes.
\item Nodes in $S$ broadcast their status to all nodes (thus each active node knows its active neighbors, if any).
\item Compute, using Luby's algorithm, an MIS of $S$ and return it.
\end{enumerate}

\begin{theorem}
 \label{thm:2-ruling-set-msg-fast}
The above algorithm computes a 2-ruling set using $O(n^{1.5} \log n)$ messages and $O(\log n)$ rounds.
\end{theorem}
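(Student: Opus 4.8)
The plan is to prove Theorem~\ref{thm:2-ruling-set-msg-fast} by establishing correctness and the two complexity bounds for the four-step algorithm. First I would argue correctness: every node not in $S$ either has degree less than $\sqrt{n}$ (in which case it became active and is in $S$, contradiction) or has degree at least $\sqrt{n}$; for the latter, I need to show that whp it has a neighbor in $S$. Since such a node $v$ has at least $\sqrt{n}$ neighbors, and each neighbor of degree $\ge \sqrt{n}$ joins $S$ independently with probability $2\log n/\sqrt{n}$ (and neighbors of degree $<\sqrt{n}$ join with probability $1$), the probability that none of $v$'s neighbors lands in $S$ is at most $(1 - 2\log n/\sqrt{n})^{\sqrt{n}} \le n^{-2}$; a union bound over all nodes gives that whp every high-degree node has a neighbor in $S$. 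Since the MIS $I$ of $G[S]$ dominates $S$ within $1$ hop, every node of $G$ is within $2$ hops of $I$, and $I$ is independent in $G$ because it is independent in $G[S]$; hence $I$ is a $2$-ruling set.

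Next I would bound the round complexity. Steps~1--3 take $O(1)$ rounds. Step~4 runs Luby's algorithm on $G[S]$, which takes $O(\log |S|) = O(\log n)$ rounds whp, giving the claimed $O(\log n)$ total. For the message complexity, the dominant cost is the broadcast in Step~3 and the messages of Luby's algorithm in Step~4, both of which are $O(|E(G[S])|)$ in the worst case (each active node sends along all its incident edges). So it suffices to show $|E(G[S])| = O(n^{1.5}\log n)$ whp, equivalently $\sum_{v\in S} d(v) = O(n^{1.5}\log n)$. I would split this sum: low-degree active nodes contribute at most $n \cdot \sqrt{n} = n^{1.5}$ in total (every one of the $\le n$ nodes has degree $<\sqrt{n}$). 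For the high-degree nodes, let $X = \sum_{v : d(v)\ge \sqrt n} \mathbf{1}[v\in S]\, d(v)$; then $\expect{X} = \sum_{v:d(v)\ge\sqrt n} (2\log n/\sqrt n)\, d(v) \le (2\log n/\sqrt n)\cdot 2m \le (2\log n/\sqrt n)\cdot n^2 = 2 n^{1.5}\log n$, using $m \le n^2/2$.

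The main obstacle is turning the expectation bound on $X$ into a high-probability bound, since the summands $\mathbf 1[v\in S]\,d(v)$ are independent but have widely varying magnitudes (up to $\Theta(n)$), so a naive Chernoff bound is too weak. I would apply Bernstein's inequality (as used repeatedly elsewhere in the paper, cf.\ Lemma~\ref{le:msgbound}): with $\var{X} = \sum_v (2\log n/\sqrt n)(1 - 2\log n/\sqrt n) d(v)^2 \le (2\log n/\sqrt n)\sum_v d(v)^2 \le (2\log n/\sqrt n)\cdot n\cdot n^2 = 2n^{2.5}\log n$ and each summand bounded by $\Delta \le n$, Bernstein gives $\Pr[X \ge 2\expect{X} + t]$ exponentially small for $t$ a suitable polynomial in $n$; choosing $t = \Theta(n^{1.5}\log n)$ (dominated by the variance term's contribution, which needs $t^2/\var{X}$ large, i.e.\ $t = \omega(\sqrt{n^{2.5}\log n \cdot \log n}) = \omega(n^{1.25}\log n)$, comfortably satisfied by $t = \Theta(n^{1.5}\log n)$) yields $X = O(n^{1.5}\log n)$ whp. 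Combining with the $n^{1.5}$ bound for low-degree nodes, $\sum_{v\in S} d(v) = O(n^{1.5}\log n)$ whp, so Steps~3 and~4 together send $O(n^{1.5}\log n)$ messages, completing the proof. A minor point to check is that Luby's algorithm on $G[S]$ only ever sends messages along edges of $G[S]$ (it does), so its message count is indeed $O(|E(G[S])|)$.
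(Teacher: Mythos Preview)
Your proposal is correct and follows the same overall route as the paper: same correctness argument, same round-complexity argument, and the same decomposition of $\sum_{v\in S} d(v)$ into low-degree and high-degree contributions. Two small remarks.

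First, a notational slip: Step~3 broadcasts to all neighbours in $G$, not just in $G[S]$, so its cost is $\sum_{v\in S} d_G(v)$, which can exceed $2|E(G[S])|$; these are not equivalent. Fortunately you then bound $\sum_{v\in S} d_G(v)$ anyway, which is the right quantity (and also upper-bounds $2|E(G[S])|$ for Step~4), so no harm is done.

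Second, for the high-degree contribution the paper takes a shorter path than your Bernstein argument: it simply observes that the \emph{number} of high-degree nodes landing in $S$ is a sum of independent Bernoullis with mean at most $n\cdot \tfrac{2\log n}{\sqrt n}=2\sqrt n\log n$, so a straight Chernoff bound gives $|\{v:d(v)\ge\sqrt n,\ v\in S\}|=O(\sqrt n\log n)$ whp, and multiplying by the trivial bound $d(v)\le n$ yields $O(n^{1.5}\log n)$. Your Bernstein computation is correct (indeed $t^2/(\sigma^2+bt)=\Theta(\sqrt n\log n)$ here), but it is more machinery than needed; the paper's count-then-multiply argument avoids dealing with the varying weights $d(v)$ altogether.
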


\begin{proof}
We show that the set returned, i.e., the MIS of $S$, is a 2-ruling set.
Every node in $S$ (the active set) is either in the  MIS or a neighbor of a node in the MIS. 
We next show that every node in $v\in V-S$ has a neighbor in $S$.
If $v$  has a neighbor $u$ of degree less than $\sqrt{n}$, then $u$ belongs to $S$. Otherwise, since $v$'s degree is at least $\sqrt{n}$,
the probability that at least one of its neighbors (all of which must have a degree at least $\sqrt{n}$) belonging to $S$ 
is at least $1 - (1 - 2 \log n/\sqrt{n})^{\sqrt{n}} \geq 1 - e^{-2\log n} = 1- 1/n^2$. Hence, by a union bound, every node in $V-S$ has a neighbor in $S$ whp.
It follows that an MIS of $S$ is a 2-ruling set of the graph.

We next analyze the time and message complexity. The time complexity follows immediately from the run time of Luby's algorithm.
We can show that, whp, the number of messages in Step (3)  is $O(n^{1.5}\log n)$ as follows. Nodes with degree
less than $\sqrt{n}$ contribute $O(n^{1.5})$ messages. For nodes that have a degree higher than $\sqrt{n}$,
the expected number of neighbors in $S$ is $O(\sqrt{n}\log n)$ and this holds whp (by applying a standard Chernoff bound). Hence the 
number of messages is $O(n^{1.5}\log n)$ whp.
From the above, it follows that the sum of the degrees of the nodes in $S$  is bounded by $O(n^{1.5}\log n)$. Hence, we observe that Step (4) --- Luby's algorithm on $S$ --- requires $O(n^{1.5})$ messages. 
Hence, overall the total message complexity is $O(n^{1.5}\log n)$.
\end{proof}
}

\section{Message Complexity Lower Bounds}
\label{sec:lower}
We first point out that the bound of Theorem \ref{thm:2-ruling-set-msg} is tight up to logarithmic factors.
\onlyShort{The proof is a simple indistinguishability argument and is relegated to the full paper.}

\begin{theorem} \label{thm:msg-lb-rulingset}
Any $O(1)$-ruling set algorithm that succeeds with probability $1-o(1)$ sends $\Omega(n)$ messages in the worst case.  
This is true even if nodes have prior knowledge of the network size $n$.
\end{theorem}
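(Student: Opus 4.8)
The plan is to use an indistinguishability argument. Consider a graph $G$ that is simply a path (or a matching) on $n$ nodes — more concretely, take the disjoint union of $n/2$ edges, so $G$ has exactly $n/2$ edges. Suppose toward a contradiction that there is an $O(1)$-ruling set algorithm $\cA$ that succeeds with probability $1-o(1)$ while sending $o(n)$ messages in the worst case on every graph of this form. Since on any particular instance $\cA$ sends $o(n)$ messages in expectation (or with high probability, depending on how one sets up the worst-case guarantee), there must be a constant fraction of the $n/2$ edges over which, with probability bounded away from $0$, \emph{no} message is ever sent. A node incident only to such silent edges never communicates and never hears anything, so it has no information beyond its own $\id$ and port numbers.

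First I would make the ``no communication'' event precise. Fix the instance to be the perfect matching $M = \{\{a_1,b_1\},\dots,\{a_{n/2},b_{n/2}\}\}$. Let $X$ be the number of edges over which at least one message is sent during an execution of $\cA$; by assumption $\expect{X} = o(n)$, so by Markov's inequality $\Prob{X \ge n/4} = o(1)$. Hence with probability $1-o(1)$ at least $n/4$ matching edges are completely silent. On such an edge $\{a_i,b_i\}$, both endpoints run $\cA$ in complete isolation: the only inputs they ever see are their own $\id$s and the (single) port number, and — crucially — their private randomness. Each of $a_i, b_i$ must nonetheless output a decision (``in the ruling set'' or ``not''), and for a $\beta$-ruling set on a single isolated edge the only \emph{correct} outputs are ``exactly one of the two endpoints is selected.''

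The key step is then a symmetry argument showing the isolated pair cannot reliably break symmetry. Here I must be a little careful, because the two endpoints \emph{do} have distinct $\id$s, so a trivial ``lower $\id$ wins'' rule would work. The fix is to choose the instance adversarially \emph{after} fixing the algorithm, or to argue about a random assignment of $\id$s: set up a distribution over instances in which, conditioned on an edge being silent, the pair of $\id$s assigned to its two endpoints is exchangeable, and the port numbers are both, say, $1$. Then on a silent edge the joint distribution of (what $a_i$ sees, what $b_i$ sees) is symmetric under swapping the two nodes, so by symmetry the probability that they output the same decision — either both ``in'' (violating independence) or both ``out'' (violating domination, since an isolated edge forces one of the two into the set) — is at least some constant, say $\ge 1/2$ in the worst case over $\id$ pairs, or $\ge $ (some positive constant) on average. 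Since there are $\ge n/4$ silent edges with probability $1-o(1)$, with probability $1-o(1)$ at least one of them has both endpoints deciding identically, so $\cA$ fails with probability bounded away from $0$, contradicting the $1-o(1)$ success guarantee. The final remark that prior knowledge of $n$ does not help is immediate: $n$ is a global constant that is identical in both ``views'' of the symmetric pair and hence contributes nothing to breaking the symmetry.

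The main obstacle I anticipate is handling the $\id$s cleanly: one cannot literally claim two nodes with different $\id$s are indistinguishable, so the argument has to be framed either (i) as a worst-case statement quantified over $\id$ assignments, exhibiting for each candidate algorithm a bad assignment of $\id$s to the $n/4$ guaranteed-silent edges, or (ii) via an averaging/Yao-style argument over a symmetric distribution of $\id$ pairs, and then invoking the worst-case success requirement. Either way one also needs to be slightly careful that the event ``edge $\{a_i,b_i\}$ is silent'' may be correlated with the $\id$s of $a_i,b_i$ in a way that destroys exchangeability; this is handled by noting that before \emph{any} communication occurs, a node's behavior (whether it sends on round $1$) depends only on its own $\id$, port, and randomness, and then proceeding round by round — as long as neither endpoint has sent, neither can have received, so the pair stays in a product/symmetric state — which is exactly the standard ``silent until someone speaks'' induction. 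Everything else (the counting of silent edges, Markov's inequality, the constant-probability-of-identical-output conclusion) is routine.
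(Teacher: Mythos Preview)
Your core idea---quiescent nodes cannot break symmetry---is the right one, and your outline can be completed for the theorem as literally stated. Two differences from the paper's proof are worth highlighting, one of which is a genuine weakening.

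\emph{Graph choice.} You work on a perfect matching; the paper works on an $n$-cycle. Your choice makes the contradiction pleasantly local (each isolated edge must elect exactly one endpoint, for any $\beta$), but the matching is disconnected, so you only rule out algorithms that are also required to succeed on disconnected inputs. If you try to transplant the single-edge argument to a path or cycle it fails: two adjacent quiescent nodes on a path are \emph{not} forced to elect one of themselves, since domination may come from up to $\beta$ hops away. The paper instead locates a quiescent \emph{segment} of $2t+1$ consecutive nodes (this is where the hypothesis $t=O(1)$ is actually used) and argues that the common joining probability $p$ must satisfy both $p=o(1)$ (otherwise two neighbors in the segment both join with non-vanishing probability) and $(1-p)^{2t+1}=o(1)$ (otherwise the segment's center is uncovered with non-vanishing probability), a contradiction.

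\emph{Handling \texttt{ID}s.} You attempt to treat \texttt{ID}s directly via an adversarial or Yao-style averaging argument. This can be pushed through---pigeonhole on the conditional joining probabilities among those \texttt{ID}s whose quiescence probability is large---but, as you yourself flag, the conditioning on silence makes it delicate, and your sketch does not actually establish the claimed constant lower bound on identical output. The paper sidesteps all of this: it first proves the bound for \emph{anonymous} networks, where all quiescent nodes genuinely share a single joining probability $p$ and the segment argument above goes through verbatim, and then reduces the \texttt{ID} model to the anonymous one by having each node draw a random \texttt{ID} from $[1,n^c]$; these are distinct whp, so any algorithm for the \texttt{ID} model yields one for the anonymous model with essentially the same guarantees. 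This reduction is short and reusable, and cleaner than the direct route you propose.
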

\onlyLong{%
\begin{proof}
Consider a cycle $G$ of $n$ nodes and, for any $t=O(1)$, suppose that $\cA$ is a $t$-ruling set algorithm that has a message complexity of $o(n)$. 

We first condition on the assumption that nodes do not have IDs and subsequently remove this restriction. 
By assumption, there are $\ge (1 - \frac{1}{2t+1})n$ nodes in any run of $\cA$ that are quiescent, i.e., neither send nor receive any messages. 
We define a \emph{segment} to be a sequence of consecutive nodes in the cycle. 
Let $E_u$ be the indicator random variable that is $1$ if and only if node $u$ enters the ruling set. 
It follows that there exists a segment $S$ of $2t+1$ quiescent nodes and
we can observe that the random variables in the set $\{ E_v \mid v \in S \}$ are independent. 

Consider a sub-segment of $3$ nodes in $S$.  
Since the network is anonymous, we know that $\Prob{ E_u \!=\! 1 } = \Prob{ E_v \!=\! 1} = p$, for any $u,v \in S$.
Recalling that $\cA$ succeeds with probability $1 - o(1)$ tells us that the event where two neighbors join the ruling set happens with probability at most $2p^2(1 - p) = o(1)$ and hence it must be that $p = o(1)$.
On the other hand, since $S$ has length $2t+1$, at least $1$ node must enter the ruling set with probability $\ge 1 - o(1)$ and conversely $(1 - p)^{2t+1} = o(1)$, which contradicts $t = O(1)$ and $p = o(1)$.

Finally, suppose that an algorithm $\cB$ computes a $t$-ruling set correctly with probability $1 - o(1)$ requiring $o(n)$ messages in networks where nodes have unique IDs.
Similarly to \cite{LE-tcs}, we construct an algorithm $\cA$ that works in anonymous networks by instructing every node to first uniformly at random choose a unique ID from $[1,n^c]$, where $c\ge 4$ is any constant, and then run algorithm $\cB$ with the random ID as input. With high probability, all randomly chosen IDs are unique and hence $\cA$ also succeeds with probability $1 - o(1)$, contradicting the lower bound above.
\end{proof}
}
\noindent
Next, we show a clear separation between the message complexity of computing an $t$-ruling set ($t>1$) and a maximal independent set (i.e., 1-ruling set) by proving an unconditional $\Omega(n^2)$ lower bound for the latter. 
\begin{theorem} \label{thm:msg-lb-mis}
Any maximal independent set algorithm that succeeds with probability $1 - \epsilon$  on connected networks, where $0\le\epsilon<\tfrac{1}{2}$ is a constant, must send $\Omega(n^2)$ messages in the worst case.
This is true even if nodes have prior knowledge of the network size $n$.
\end{theorem}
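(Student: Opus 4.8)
The plan is to adapt the ``bridge-crossing'' indistinguishability technique behind the $\Omega(m)$ lower bound of Kutten et al.~\cite{DBLP:journals/jacm/KuttenPP0T15} for leader election, exploiting the fact that in a complete bipartite graph the only maximal independent sets are the two sides of the bipartition, so the ``decision'' an MIS algorithm must make is essentially binary.

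First I would fix an $n$-node random hard instance $\mathcal{G}$: take two vertex-disjoint (almost-)complete bipartite graphs $H_1$ on parts $(A_1,B_1)$ and $H_2$ on parts $(A_2,B_2)$, each with $\Theta(n)$ vertices, and join them by two ``bridge'' edges whose endpoints are chosen uniformly at random within appropriate parts, so that $\mathcal{G}$ is always connected. The parts are made slightly unequal (and/or padded with a little auxiliary structure) so that, even after the two bridge edges are added, every vertex has a degree — and a $KT_0$ local view — that is independent of where the bridges land; this is what camouflages the bridge endpoints, so that the two bridge edges sit at a uniformly random pair among $\Theta(n^2)$ equally plausible ``slots.'' Alongside $\mathcal{G}$ I would define a companion graph $\mathcal{G}'$ obtained by re-wiring the two bridges the ``other way'' (swapping which side of $H_2$ each bridge attaches to); the point of $\mathcal{G}'$ is that every vertex has an identical local view in $\mathcal{G}$ and $\mathcal{G}'$, yet a short case analysis over ``which side does each $H_i$ contribute'' shows that $\mathcal{G}$ and $\mathcal{G}'$ have \emph{disjoint} sets of (side-based) maximal independent sets. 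This disjointness is the combinatorial manifestation of the ``by symmetry'' step.

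The heart of the argument is a ``no message crosses a bridge'' lemma. Suppose toward a contradiction that $\mathcal{A}$ succeeds with probability $\ge 1-\epsilon$ on every connected graph while always sending $o(n^2)$ messages; by Markov's inequality $\mathcal{A}$ then sends $o(n^2)$ messages with probability $1-o(1)$ on our instance too. I would couple the executions of $\mathcal{A}$ on $\mathcal{G}$ and on $\mathcal{G}'$ using the same random coins and the same (random) bridge placement: the two executions proceed identically, round by round, until the first round in which some vertex sends a message along a bridge port, and — since the two bridge edges occupy a uniformly random pair among $\Theta(n^2)$ slots whose complement (the rest of the topology, hence every local view) is fixed up to that point — a union/averaging bound over the $o(n^2)$ messages sent shows that with probability $1-o(1)$ this ``first bridge event'' never happens. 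Conditioned on that, $\mathcal{A}$ outputs the \emph{same} set on $\mathcal{G}$ and on $\mathcal{G}'$; as no set is a legal MIS of both, $\mathcal{A}$ errs on at least one of them. Hence $\Pr[\mathcal{A}\text{ errs on }\mathcal{G}] + \Pr[\mathcal{A}\text{ errs on }\mathcal{G}'] \ge 1-o(1)$, so on one of the two (connected) graphs $\mathcal{A}$ errs with probability $\ge \tfrac12 - o(1) > \epsilon$ for large $n$ (using that $\epsilon<\tfrac12$ is a constant), contradicting correctness. Since only the bridge placement is hidden, this holds even if nodes know $n$.

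The step I expect to be the main obstacle is the design of the ``almost-complete bipartite'' blocks: the construction must simultaneously (a) keep every node's local view independent of the bridge locations — otherwise a node adjacent to a bridge could identify and probe it with only $O(n)$ messages, weakening the bound to $\Omega(n)$, which is all one can hope for with ruling sets (cf.~\Cref{thm:msg-lb-rulingset}) — and (b) still force all maximal independent sets to be side-based, so the disjointness claim for $\mathcal{G}$ versus $\mathcal{G}'$ survives. Reconciling these (e.g.\ via unequal part sizes so the degree bump from a bridge is ``hidden'' among pre-existing higher-degree nodes, while avoiding any extra small independent set that remains legal in both $\mathcal{G}$ and $\mathcal{G}'$) is the delicate part; the coupling, the union bound over messages, and the probability bookkeeping are then routine.
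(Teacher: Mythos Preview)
Your approach is correct and is essentially the paper's bridge-crossing argument on complete bipartite blocks, but your execution diverges from the paper in two places worth noting. First, the ``main obstacle'' you flag --- hiding the bridge endpoints' degrees via unequal part sizes and auxiliary padding --- is something the paper sidesteps entirely with a much simpler trick: it \emph{removes} one edge $(u,v)$ from each complete bipartite block and reuses those very ports for the bridge edges $(u,u')$ and $(v,v')$. In the $KT_0$ model every node then has exactly the same degree and the same port list whether it sits in the disconnected graph $D$ or in the bridge graph $B$, so there is nothing to camouflage; your (a) and (b) are satisfied for free, and the ``almost-complete'' structure falls out automatically rather than having to be engineered. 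Second, instead of building two companion connected graphs $\mathcal{G},\mathcal{G}'$ with disjoint MIS sets, the paper compares the random bridge graph $B$ to the single disconnected reference $D$: on $D$ the four side-based outputs $LL',LR',RL',RR'$ are equiprobable by symmetry, while on $B$ only $LR'$ and $RL'$ are valid, so success conditioned on no crossing has probability at most $\tfrac12$. This yields the same $\epsilon \ge \tfrac12 - o(1)$ contradiction as your two-graph coupling, but avoids having to verify the ``disjoint MIS sets'' claim across two separately constructed instances. In short: same idea, but the port-reuse device makes the construction you were worried about trivial.
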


\noindent\emph{Proof of \Cref{thm:msg-lb-mis}.}
For the sake of a contradiction, assume that there is an algorithm $\cA$ that sends at most $\mu = o(n^2)$ messages in the worst case and succeeds with probability $\ge 1 - \epsilon$, for some $\epsilon < \tfrac{1}{2}$. 

Consider two copies $G$ and $G'$ of the complete bipartite graph on $n/2$ nodes.\footnote{To simplify our analysis, we assume that $n/2$ and $n/4$ are integers.}
For now, we consider the anonymous case where nodes do not have access to unique IDs; we will later show how to remove this restriction.
Recall that in our model (cf.\ \Cref{sec:model}), we assume that nodes do no have any prior knowledge of their neighbors in the graph.
Instead, each node $u$ has a list ports $1,\dots,deg_u$, whose destination are wired in advance by an adversary. 

We consider two concrete instances of our lower bound network depending on the wiring of the edges.
First, let $D=(G,G')$ be the disconnected graph consisting of $G$ and $G'$ and their induced edge sets. 
It is easy to see that there are exactly $4$ possible choices for an MIS on $D$,
as any valid MIS must contain the entire left (resp.\ right) half of the nodes in $G$ and $G'$ and no other nodes.
We denote the events of obtaining one of the four possible MISs by $LL'$, $LR'$, $RL'$, $RR'$, where, e.g., $RL'$ is the event that the right half of $G$ (i.e.\ nodes in $R$) and the left half of $G'$ (i.e.\ nodes in $L'$) are chosen.
Let ``on $D$'' be the event that $\cA$ is executed on graph $D$.
Of course, we cannot assume that algorithm $\cA$ does anything useful on this graph as we require $\cA$ only to succeed on connected networks.
However, we will make use of the symmetry of the components of $D$ later on in the proof.

\begin{observation} \label{obs:sameprob}
  $\Prob{ LL' \mid \text{on $D$} } = 
  \Prob{ LR' \mid \text{on $D$} } = 
  \Prob{ RL' \mid \text{on $D$} } = 
  \Prob{ RR' \mid \text{on $D$} }$.
\end{observation}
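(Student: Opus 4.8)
The plan is to exploit the perfect symmetry between the two complete bipartite components of $D$. Observe that $D$ consists of two connected components $G$ and $G'$, each isomorphic to the complete bipartite graph on $n/2$ nodes (with $n/4$ nodes on each side). Since the components are vertex-disjoint and no messages can travel between them, the restriction of $\cA$'s execution to $G$ is independent of its restriction to $G'$; thus it suffices to argue that on a single copy the algorithm is equally likely to select the ``left'' half as the ``right'' half, and then take a product.

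First I would set up an automorphism argument on a single component. Fix $G$, the complete bipartite graph with parts $L$ and $R$ of size $n/4$ each. Because the network is anonymous and every node has the same number of ports ($n/4$), there is a graph automorphism $\sigma$ of $G$ that swaps $L$ and $R$, and — crucially — since the port-wirings on both sides have the same structure, $\sigma$ can be chosen to respect the port numbering (i.e., it is an isomorphism of the \emph{port-numbered} graph). A standard indistinguishability argument then shows that, under $\cA$ run with independent private randomness at each node, the execution on $G$ and the execution on $\sigma(G)$ have the same distribution; in particular $\Prob{L \text{ chosen on } G} = \Prob{R \text{ chosen on } G}$. The same holds for $G'$. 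Since the four valid MISs of $D$ are exactly the four products $\{L,R\}\times\{L',R'\}$, and the two components are independent, each of the four joint events $LL', LR', RL', RR'$ has probability $\Prob{L\text{ on }G}\cdot\Prob{L'\text{ on }G'}$-type form, and by the per-component symmetry all four products are equal. This gives the claimed equality.

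The main obstacle I anticipate is making the port-numbering symmetry precise: one must be careful that the adversarial wiring of ports in the two halves of a single complete bipartite component is genuinely symmetric, so that the swap $\sigma$ really is a port-preserving automorphism rather than merely a graph automorphism. In $K_{n/4,n/4}$ every node on the left connects to every node on the right, so the adversary only chooses, for each left node $u$, a bijection from $u$'s ports $\{1,\dots,n/4\}$ to $R$, and similarly for right nodes. Without loss of generality (or by a renaming of port labels, which does not affect the message count or correctness) we may assume these wirings are ``mirror images'' of each other, making $\sigma$ port-preserving; alternatively, one can observe that the four MIS-events are determined solely by which side is selected, and the distribution over that binary choice is symmetric because the induced distribution on executions is invariant under relabeling the two sides. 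I would phrase the argument using the latter, coupling-style formulation to sidestep fussy port bookkeeping: couple the randomness of $G$ with that of its mirrored copy so that the run producing $L$ in one corresponds bijectively to the run producing $R$ in the other. This establishes \Cref{obs:sameprob}, after which the lower-bound proof proceeds (in the subsequent text) by a bridge-crossing argument contrasting $D$ with a connected ``bridge graph.''
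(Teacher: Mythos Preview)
Your argument is correct and matches the paper's intended reasoning: the paper states this as an observation without proof, relying on precisely the left--right symmetry of each anonymous complete bipartite component that you spell out via a port-preserving automorphism and the independence of the two disconnected components. Your port-numbering worry dissolves because this is a lower-bound construction---the adversary chooses the wiring and can simply pick a symmetric one (e.g., port $j$ of $u_i$ goes to $v_j$ and port $i$ of $v_j$ goes to $u_i$), so the swap $\sigma$ is port-preserving by design rather than by luck.
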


\noindent
We now define the second instance of our lower bound graph.
Consider any pair of edges $e=(u,v) \in G=(L,R)$ and $e'=(u',v') \in G'=(L',R')$.
We define the \emph{bridge graph} by removing $e$ and $e'$ from $G$ respectively $G'$ and, instead, adding the \emph{bridge edges} $b = (u,u')$ and $b' = (v,v')$ by connecting the same ports that were used for $e$ and $e'$; see \Cref{fig:messagelb}.
We use $B$ to denote a graph that is chosen uniformly at random from all possible bridge graphs, i.e., the edges replaced by bridge edges are chosen uniformly at random according to the above construction. 
Let ``\text{$G\leftrightarrow G'$}'' be the event that $\cA$ sends at least $1$ message   
over a bridge edge and, similarly, we use ``\text{$G\not\leftrightarrow G'$}'' to denote the event that this does not happen.

\begin{lemma} \label{lem:bc}
Consider an execution of algorithm $\cA$ on a uniformly at random chosen bridge graph $B$.
The probability that a message is sent across a bridge is $o(1)$, i.e., $\Prob{\text{$G\not\leftrightarrow G'$}} = 1 - o(1)$.
\end{lemma}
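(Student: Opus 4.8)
The plan is to prove Lemma~\ref{lem:bc} via an averaging (counting) argument over the random choice of bridge edges, combined with the key observation that an execution of $\cA$ on a bridge graph $B$ is indistinguishable from an execution on the disconnected graph $D$ \emph{until the first message crosses a bridge edge}. First I would set up this coupling: run $\cA$ on $D$ and on $B$ using the same randomness at each node, and note that the port-wirings of $B$ and $D$ differ only at the four endpoints $u,v,u',v'$ of the swapped edges. As long as no node has yet sent a message along a bridge port (equivalently, along the port that in $D$ leads to $v$, $u$, $v'$, or $u'$), the two executions are identical at every node. Hence the event \text{$G\leftrightarrow G'$} on $B$ coincides, under this coupling, with the event that in the execution on $D$ some message is sent along one of the four specific ports that got rewired. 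Let $E_f$ denote, for an ordered pair $f=(e,e')$ of edges with $e\in G$, $e'\in G'$, the event (in the run on $D$) that $\cA$ ever sends a message along one of the four ports incident to the endpoints of $e$ and $e'$ that would become bridge ports.

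Next I would do the counting. On $D$, by assumption $\cA$ sends at most $\mu=o(n^2)$ messages total (this holds deterministically in the worst case, but here I only need it in expectation, or whp, over $\cA$'s randomness on $D$ — I'd use the worst-case bound directly). Each message sent in the run on $D$ is sent along exactly one (node, port) pair; each such pair lies inside exactly one of the two components $G,G'$ and hence is incident to at most $\Theta(n)$ of the $\Theta(n^2)$ candidate edges on its side. More carefully: a message sent by node $x$ along the port leading to node $y$ "blocks" only those bridge graphs in which the edge $(x,y)$ was selected to be replaced on $x$'s side. The number of edges on each side is $(n/2)^2/4 = \Theta(n^2)$ — wait, I should be careful: $G$ is complete bipartite on $n/2$ nodes so it has $(n/4)^2 = \Theta(n^2)$ edges, and $B$ is chosen by picking one edge uniformly from $G$ and one uniformly from $G'$. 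So for a fixed $(x,y)$ with $(x,y)\in G$, the probability over the random choice of $B$ that the rewired edge of $G$ equals $(x,y)$ is $O(1/n^2)$. Taking a union bound over all $\le\mu$ messages sent in the run on $D$, the probability (over the choice of $B$, for any fixed execution-randomness of $\cA$) that \emph{some} sent message lies on a rewired edge is at most $\mu \cdot O(1/n^2) = o(1)$. Averaging over $\cA$'s internal randomness as well gives $\Prob{\text{$G\leftrightarrow G'$}} = o(1)$, i.e.\ $\Prob{\text{$G\not\leftrightarrow G'$}} = 1-o(1)$, as claimed.

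The one subtlety I need to handle carefully is the order of quantifiers and the indistinguishability up to the crossing time: I should phrase it as "consider the execution on $D$ with the same random tape; let $\tau$ be the first round in which $\cA$ would send a message along a would-be-bridge port; up to round $\tau$ the executions on $D$ and $B$ agree, so the first bridge-crossing message in the $B$-execution, if any, corresponds to a message sent along a rewired port in the $D$-execution." This lets me bound the bad event on $B$ entirely in terms of the (randomness-fixed) message pattern on $D$, which is what makes the averaging over the choice of $B$ legitimate. I expect the main obstacle to be exactly this indistinguishability bookkeeping — making precise that "nothing a node does before the first bridge crossing depends on which edges were rewired," since the rewiring does change a port's \emph{destination} but not its \emph{number}, and $\cA$ only learns a port's destination by communicating across it. Once that is nailed down, the counting is routine: $o(n^2)$ messages, each hitting a $1/\Theta(n^2)$ fraction of bridge graphs, union bound, done.
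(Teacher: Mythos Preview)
Your approach is essentially the paper's: both exploit that $B$ and $D$ are indistinguishable until the first bridge crossing, combined with the $o(n^2)$ message bound and a counting argument over the $\Theta(n^2)$ possible bridge placements. The paper phrases this via deferred decisions directly on $B$ (each message is a draw, four bridge ports among $\Theta(n^2)$ ports, hypergeometric distribution, then Markov); your coupling with $D$ is an equivalent formalization of the same indistinguishability.

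There is, however, one real gap. You assert that ``on $D$, by assumption $\cA$ sends at most $\mu=o(n^2)$ messages total,'' but the contradiction hypothesis only supplies the message bound on \emph{connected} inputs, and $D$ is disconnected; nothing stops $\cA$ from sending arbitrarily many messages on $D$. The paper sidesteps this by working on $B$ throughout. Your coupling can be repaired easily: instead of all $D$-messages, restrict attention to the first $\mu$ messages of the $D$-execution (a set that depends only on $\cA$'s random tape, not on the choice of $B$). If a bridge crossing ever occurs on $B$, then the \emph{first} such crossing lies among these first $\mu$ $D$-messages, because up to the crossing time the two executions coincide and the $B$-execution sends at most $\mu$ messages in total (this is where connectivity of $B$ is actually used). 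Your union bound over $\le \mu$ (node, port) pairs, each rewired with probability $O(1/n^2)$, then goes through verbatim. With this fix the argument is complete and matches the paper's.
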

\onlyLong{
\begin{proof}
We start out by observing that nodes do not have any prior knowledge regarding the choice of the bridge edges when $\cA$ is executed on graph $B$.
Moreover, as long as no message was sent across a bridge edge, every unused port (i.e., over which the algorithm has not sent a message yet) has the same probability of being connected to a bridge edge. 
It follows that discovering a bridge edge corresponds to sampling without replacement, which we can model using the hypergeometric distribution, where we have exactly $4$ bridge ports among the edge set of $B$ and one draw per message sent by the algorithm.
Observe that the total number of ports is $2|E(B)| = 4\left(\tfrac{n}{4}\right)^2 = \tfrac{n^2}{4}$. Hence, by the properties of the hypergeometric distribution, the expected number of bridge edges discovered when the algorithm sends at most $\mu$ messages is  
$\Theta(\frac{\mu}{n^2})$.
Using Markov's Inequality and the fact that $\mu = o(n^2)$, it follows that the probability of discovering at least $1$ bridge edge is $o(1)$.
\end{proof}
}

A crucial property of our construction is that, as long as no bridge edge is discovered, the algorithm behaves the same on $B$ as it does on $D$.
The following lemma can be shown by induction over the number of rounds. 

\begin{lemma} \label{lem:sameview}
  Let $Y$ be any event that is a function of the communication and computation performed by algorithm $\cA$.
  Then, $\Prob{ Y \mid \text{$G\not\leftrightarrow G'$}} = 
         \Prob{ Y \mid \text{on $D$}}$.
\end{lemma}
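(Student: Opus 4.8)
The plan is to prove \Cref{lem:sameview} by a coupling/induction argument over the rounds of the synchronous execution. First I would couple the randomness of $\cA$ on the two instances: run $\cA$ on $D$ and on $B$ using the \emph{same} random bits at each node (this is legitimate since the two graphs have the same node set and, crucially, the same port-numbering — the bridge construction reuses exactly the ports that $e$ and $e'$ occupied). Under this coupling I would prove by induction on the round number $r$ the following invariant: conditioned on the event $\text{$G\not\leftrightarrow G'$}$ (no message ever crosses a bridge edge through round $r$), every node has the identical state — identical received-message history, identical local memory, identical sent-message history — in the run on $D$ and in the run on $B$ after round $r$.

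The base case $r=0$ is immediate: initial states depend only on (anonymous) inputs and the node's own port list, which is the same in $D$ and $B$. For the inductive step, assume the invariant holds after round $r$. By the induction hypothesis every node computes the same messages to send in round $r+1$ on both graphs. A message sent over a port that is an internal (non-bridge) edge is delivered to the same neighbor in $D$ and in $B$, since those endpoints coincide; a message over a bridge port is, by the conditioning on $\text{$G\not\leftrightarrow G'$}$, never sent (if it were, $\text{$G\leftrightarrow G'$}$ would occur). Hence the multiset of messages each node receives in round $r+1$ is identical in the two runs, so after processing, every node's state is again identical — closing the induction. Since any event $Y$ that is "a function of the communication and computation performed by $\cA$" is determined by the collection of node states/transcripts, and these transcripts agree pointwise on the two runs under the coupling, the indicator of $Y$ is the same random variable on $\{\text{$G\not\leftrightarrow G'$}\}$ in both cases; integrating over the (shared) randomness gives $\Prob{Y \mid \text{$G\not\leftrightarrow G'$}} = \Prob{Y \mid \text{on }D}$. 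Here I would note that on $D$ there are no bridge edges at all, so conditioning on $\text{$G\not\leftrightarrow G'$}$ is vacuous there, which is why the right-hand side is the unconditional-on-$D$ probability.

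The main obstacle — really the only subtle point — is making precise that "no bridge message is ever sent" is a well-defined event that can be used as conditioning in a way that doesn't smuggle in future information: since $\cA$ is deterministic given its random bits, the entire transcript on $B$ is a deterministic function of the random bits, so $\text{$G\not\leftrightarrow G'$}$ is a measurable event in the product probability space and the conditioning is legitimate; the inductive argument shows that on this event the transcript literally equals the $D$-transcript bit for bit under the coupling. A secondary point to state carefully is that the claim is about a \emph{fixed} bridge graph $B$ (or, after taking the union over the uniform choice in \Cref{lem:bc}, still fine, since the coupling and induction work for each fixed $B$); I would phrase \Cref{lem:sameview} for an arbitrary fixed $B$ and let the randomness over the choice of $B$ be handled separately where it is combined with \Cref{obs:sameprob} and \Cref{lem:bc}.
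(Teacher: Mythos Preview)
Your coupling-plus-induction approach is exactly what the paper has in mind; the paper omits the proof and says only that the lemma ``can be shown by induction over the number of rounds.'' The invariant you state and the inductive step are correct.

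There is, however, a real gap in your last step. The coupling shows that on $E=\{G\not\leftrightarrow G'\}$ the $B$-transcript and the $D$-transcript coincide, which gives $\Prob{Y\text{ on }B\mid E}=\Prob{Y\text{ on }D\mid E}$. You then assert that conditioning on $E$ is ``vacuous on $D$'' because $D$ has no bridge edges. That is not right: under the coupling, $E$ is precisely the event that in the $D$-run no message is ever sent through the four ports that carry the bridge edges in $B$---a nontrivial event in the algorithm's probability space that can be correlated with $Y(T_D)$. For instance, for your fixed $B$, take $Y$ to be ``a message is sent over port $p$ at node $u$,'' where $(u,p)$ is one of the bridge ports; then $\Prob{Y\text{ on }D\mid E}=0$ while $\Prob{Y\mid\text{on }D}$ is typically positive. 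So the passage to the unconditional $\Prob{Y\mid\text{on }D}$ does not follow from your argument, and the lemma as literally stated need not hold for arbitrary $Y$. For the paper's application this does no damage: what is actually used downstream is only that the four output configurations $LL',LR',RL',RR'$ are equiprobable given $E$, and that follows from the same symmetry underlying Observation~\ref{obs:sameprob}, since the uniformly random choice of bridge edges makes $E$ itself symmetric with respect to swapping the halves. You should either carry the conditioning through on the right-hand side and invoke this symmetry where the lemma is applied, or restrict the statement to events $Y$ that are invariant under the relevant automorphisms of $D$.
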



We are now ready to prove \Cref{thm:msg-lb-mis}.
Consider a run of algorithm $\cA$ on a uniformly at random chosen bridge graph $B$. 
Let ``$\cA$ succ.'' denote the event that $\cA$ correctly outputs an MIS.
Observe that $\cA$ succeeds when executed on $B$ if and only if we arrive at an output configuration corresponding $LR'$ or $RL'$.
It follows that
\begin{align*}
    \Prob{\text{$\cA$ succ.} } = \!\!\!
  \sum_{W \in \{LR',RL'\}}\!\!\!\! \Prob{ W \mid \text{$G\not\leftrightarrow G'$ }}\cdot \Prob{\text{$G\not\leftrightarrow G'$}}
    + \Prob{\text{$\cA$ succ.} \mid \text{$G\leftrightarrow G'$}} \cdot
\Prob{\text{$G\leftrightarrow G'$}}
\ge 1 - \epsilon.
\end{align*}
{\Cref{lem:bc} tells us that $\Prob{\text{$G\leftrightarrow G'$}}=o(1)$ and, using  $\Prob{\text{$G\not\leftrightarrow G'$}}\le 1$, allows us to rewrite the above inequality as}
\onlyLong{
\begin{align*}
  \sum_{W \in \{LR',RL'\}}\! \Prob{ W \mid \text{$G\not\leftrightarrow G'$ }} 
\ge 1 - \epsilon - o(1).
\end{align*}
}
\onlyShort{$\sum_{W \in \{LR',RL'\}}\! \Prob{ W \mid \text{$G\not\leftrightarrow G'$ }}
\ge 1 - \epsilon - o(1).$}
Applying \Cref{lem:sameview} to the terms in the sum, we get
\onlyLong{
\begin{align}
  \sum_{W \in \{LR',RL'\}}\! \Prob{ W \mid \text{on $D$}} \label{eq:epsilon}
\ge 1 - \epsilon - o(1).
\end{align}
}
\onlyShort{$\sum_{W \in \{LR',RL'\}}\! \Prob{ W \mid \text{on $D$}} \label{eq:epsilon}
\ge 1 - \epsilon - o(1).$}
By Observation~\ref{obs:sameprob}, we know that
\onlyLong{
  \[ 
  \Prob{ LR' \mid \text{on $D$}} + \Prob{ RL' \mid \text{on $D$}} \le \tfrac{1}{2},
\]
}
\onlyShort{
  $\Prob{ LR' \mid \text{on $D$}} + \Prob{ RL \mid \text{on $D$}} \le \tfrac{1}{2},$
}
which we can plug into\onlyLong{ \eqref{eq:epsilon}}\onlyShort{ the previously obtained bound on $\sum_{W \in \{LR',RL'\}}\! \Prob{ W \mid \text{on $D$}}$} to obtain $\epsilon \ge \tfrac{1}{2} - o(1)$, yielding a contradiction. 

Finally, we can remove the restriction of not having unique IDs by arguing that the algorithm can generate unique IDs with high probability, since we assume that nodes know $n$; see the proof of \Cref{thm:msg-lb-rulingset}\onlyShort{ in the full paper} for a similar argument.
This completes the proof of \Cref{thm:msg-lb-mis}.
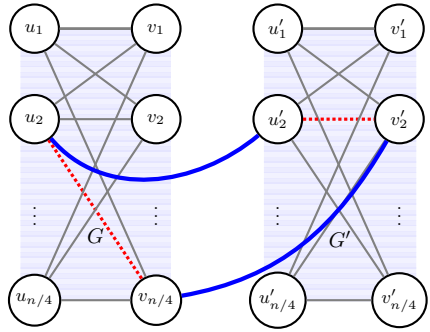
\begin{figure}[t]
\begin{center}
\pgfdeclarelayer{background}
\pgfdeclarelayer{inbetween}
\pgfsetlayers{background,inbetween,main}
\begin{tikzpicture}[node distance=1.5cm,scale=0.8, every node/.style={scale=0.8}]
\tikzstyle{v}=[circle,draw=black,fill=white!30,thick,inner sep=2pt,minimum size=8mm]
\tikzstyle{link}=[-,black!50,thick,auto]
\tikzstyle{arc}=[draw=black,rectangle,rounded corners,fill=lightgray,font=\normalsize]
\tikzstyle{arcNeighborhoodZ}=[semitransparent,thick,draw=none,rectangle,rounded corners,fill=blue!20,pattern=horizontal lines light blue]
\tikzstyle{arcNeighborhoodZ1}=[arcNeighborhoodZ,fill=red!20,dashed]
\footnotesize 
\node[v]  (u001)                  {$u_{1}$};
\node[v]  (u002)  [below of=u001] {$u_{2}$};
\node[]  (udots1)[below of=u002] {$\vdots$};
\node[v]  (u00n4) [below of=udots1] {$u_{n/4}$};

\node[v]  (v001)  [xshift=2cm]{$v_{1}$};
\node[v]  (v002)  [below of=v001] {$v_{2}$};
\node[]  (vdots1)[below of=v002] {$\vdots$};
\node[v]  (v00n4) [below of=vdots1] {$v_{n/4}$};

\node[v]  (u001d)  [xshift=4cm]{$u_{1}'$};
\node[v]  (u002d)  [below of=u001d] {$u_{2}'$};
\node[]  (udots1d)[below of=u002d] {$\vdots$};
\node[v]  (u00n4d) [below of=udots1d] {$u_{n/4}'$};

\node[v]  (v001d)  [xshift=6cm]{$v_{1}'$};
\node[v]  (v002d)  [below of=v001d] {$v_{2}'$};
\node[]  (vdots1d)[below of=v002d] {$\vdots$};
\node[v]  (v00n4d) [below of=vdots1d] {$v_{n/4}'$};

\draw[link] (u001) to (v001);
\draw[link] (u001) to (v002);
\draw[link] (u001) to (v00n4);
\draw[link] (u001) to (v001);
\draw[link] (u001) to (v001);
\draw[link] (u002) to (v001);
\draw[link] (u002) to (v002);
\draw[link,very thick,densely dotted,red] (u002) to (v00n4);
\draw[link] (u00n4) to (v001);
\draw[link] (u00n4) to (v002);
\draw[link] (u00n4) to (v00n4);

\draw[link] (u001d) to (v001d);
\draw[link] (u001d) to (v002d);
\draw[link] (u001d) to (v00n4d);
\draw[link] (u001d) to (v001d);
\draw[link] (u001d) to (v001d);
\draw[link] (u002d) to (v001d);
\draw[link,very thick,densely dotted,red] (u002d) to (v002d);
\draw[link] (u002d) to (v00n4d);
\draw[link] (u00n4d) to (v001d);
\draw[link] (u00n4d) to (v002d);
\draw[link] (u00n4d) to (v00n4d);

\draw[link,blue,ultra thick,in=240,out=10] (v00n4) to (v002d);
\draw[link,blue,ultra thick,in=220,out=310] (u002) to (u002d);

\begin{pgfonlayer}{background}
  \node[arcNeighborhoodZ,fit=(u001) (v00n4),label={[yshift=1.30cm,font=\normalsize]below:$G$}] (C0) {};
  \node[arcNeighborhoodZ,fit=(u001d) (v00n4d),label={[yshift=1.30cm,font=\normalsize]below:$G'$}] (C0) {};
\end{pgfonlayer}
\begin{pgfonlayer}{inbetween}
\end{pgfonlayer}
\end{tikzpicture}
\end{center}
\caption{The lower bound graph $B(G,G')$ for \Cref{thm:msg-lb-rulingset} with bridge edges $(u_2,u_{2}')$ and $(v_{n/4},v_2')$. The disconnected graph $D$ is given by replacing the bridge edges with the dashed edges.}
\label{fig:messagelb}
\end{figure}

\vspace{-0.2in}
\section{Conclusion}
\label{sec:conc}
We studied symmetry breaking problems in the \textsc{Congest} model and presented time- and message-efficient algorithms for ruling sets.
Several key open questions remain. First, can the MIS lower bounds in the \textsc{Local} model
shown by Kuhn et al.~\cite{KuhnMoscibrodaWattenhoferFull} be extended to 2-ruling sets?
In an orthogonal direction, can we derive time lower bounds for MIS in the \textsc{Congest} model, that are stronger than their \textsc{Local}-model counterparts?
And on the algorithms side, can we build on the techniques
presented here to improve the time bounds in the  \textsc{Congest} model? 
For example, can we solve the 2-ruling set problem in $O(\log^\alpha n)$ rounds for
some constant $\alpha < 1$.

Second, although we have presented near-tight message bounds for 2-ruling sets, we don't have a good understanding 
of the message-time {\em tradeoffs}.
In particular, a key question is whether we can design a 2-ruling set algorithm that uses 
$O(n \polylog n)$ messages, while running in $O(\polylog n)$ rounds? Such an algorithm will 
also lead to better algorithms for other distributed computing models
such as the $k$-machine model \cite{soda15}.
More generally, can we obtain tradeoff relationships that characterizes the dependence 
of one measure on the other or obtain lower bounds on the complexity of one measure while fixing the other measure.

\bibliography{mis}

\begin{thebibliography}{10}

\bibitem{AlonBabaiItai}
Noga Alon, L{\'a}szl{\'o} Babai, and Alon Itai.
\newblock A fast and simple randomized parallel algorithm for the maximal
  independent set problem.
\newblock {\em J. Algorithms}, 7(4):567--583, 1986.

\bibitem{awerbuch-optimal}
B.~Awerbuch.
\newblock Optimal distributed algorithms for minimum weight spanning tree,
  counting, leader election, and related problems.
\newblock In {\em Proceedings of the 19th ACM Symposium on Theory of Computing
  (STOC)}, pages 230--240, 1987.

\bibitem{awerbuch}
Baruch Awerbuch, Oded Goldreich, David Peleg, and Ronen Vainish.
\newblock A trade-off between information and communication in broadcast
  protocols.
\newblock {\em J.~{ACM}}, 37(2):238--256, 1990.

\bibitem{BarenboimEKSICOMP2014}
Leonid Barenboim, Michael Elkin, and Fabian Kuhn.
\newblock Distributed $(\delta+1)$-coloring in linear (in $\delta$) time.
\newblock {\em SIAM Journal on Computing}, 43(1):72--95, 2014.

\bibitem{BEPS12FOCS}
Leonid Barenboim, Michael Elkin, Seth Pettie, and Johannes Schneider.
\newblock The locality of distributed symmetry breaking.
\newblock In {\em 53rd Annual {IEEE} Symposium on Foundations of Computer
  Science, {FOCS} 2012, New Brunswick, NJ, USA, October 20-23, 2012}, pages
  321--330, 2012.

\bibitem{BarenboimEPSJACM2016}
Leonid Barenboim, Michael Elkin, Seth Pettie, and Johannes Schneider.
\newblock The locality of distributed symmetry breaking.
\newblock {\em J.~ACM}, 63(3):20:1--20:45, June 2016.

\bibitem{Beck1991}
J\'{o}zsef Beck.
\newblock An algorithmic approach to the {L}ov\'{a}sz {L}ocal {L}emma. {I}.
\newblock {\em Random Struct. Algorithms}, 2(4):343--365, December 1991.

\bibitem{BishtKPPODC2014}
Tushar Bisht, Kishore Kothapalli, and Sriram Pemmaraju.
\newblock Brief announcement: Super-fast t-ruling sets.
\newblock In {\em Proceedings of the 2014 ACM Symposium on Principles of
  Distributed Computing}, PODC '14, pages 379--381, New York, NY, USA, 2014.
  ACM.

\bibitem{chin-almostlinear}
F.~Chin and H.F. Ting.
\newblock An almost linear time and $\textit{O}(n \log n + e)$ messages
  distributed algorithm for minimum-weight spanning trees.
\newblock In {\em Proceedings of the 26th {IEEE} Symposium on Foundations of
  Computer Science (FOCS)}, pages 257--266, 1985.

\bibitem{Algorithm:Cormen}
T.~Cormen, C.~Leiserson, and R.~Rivest.
\newblock {\em Introduction to Algorithms}.
\newblock The MIT Press, 1990.

\bibitem{dubhashi}
Devdatt Dubhashi and Alessandro Panconesi.
\newblock {\em Concentration of Measure for the Analysis of Randomized
  Algorithms}.
\newblock Cambridge University Press, New York, NY, USA, 1st edition, 2009.

\bibitem{elkin-faster}
M.~Elkin.
\newblock A faster distributed protocol for constructing minimum spanning tree.
\newblock {\em Journal of Computer and System Sciences}, 72(8):1282--1308,
  2006.

\bibitem{gafni-election}
E.~Gafni.
\newblock Improvements in the time complexity of two message-optimal election
  algorithms.
\newblock In {\em Proceedings of the 4th Symposium on Principles of Distributed
  Computing (PODC)}, pages 175--185, 1985.

\bibitem{DistMst:Gallager}
R.~Gallager, P.~Humblet, and P.~Spira.
\newblock A distributed algorithm for minimum-weight spanning trees.
\newblock {\em ACM Transactions on Programming Languages and Systems},
  5(1):66--77, January 1983.

\bibitem{DistMst:Garay}
J.~Garay, S.~Kutten, and D.~Peleg.
\newblock A sublinear time distributed algorithm for minimum-weight spanning
  trees.
\newblock {\em SIAM Journal on Computing}, 27(1):302--316, February 1998.

\bibitem{GhaffariSODA16}
Mohsen Ghaffari.
\newblock An improved distributed algorithm for maximal independent set.
\newblock In {\em Proceedings of the Twenty-Seventh Annual {ACM-SIAM} Symposium
  on Discrete Algorithms, {SODA} 2016, Arlington, VA, USA, January 10-12,
  2016}, pages 270--277, 2016.

\bibitem{GPS87}
A.~Goldberg, S.~Plotkin, and G.~Shannon.
\newblock Parallel symmetry breaking in sparse graphs.
\newblock {\em SIAM J. Discrete Math.}, 1:434--446, 1989.

\bibitem{podc15}
James~W. Hegeman, Gopal Pandurangan, Sriram~V. Pemmaraju, Vivek~B. Sardeshmukh,
  and Michele Scquizzato.
\newblock Toward optimal bounds in the congested clique: Graph connectivity and
  {MST}.
\newblock In {\em Proceedings of the 34th {ACM} Symposium on Principles of
  Distributed Computing (PODC)}, pages 91--100, 2015.

\bibitem{danupon}
Monika Henzinger, Sebastian Krinninger, and Danupon Nanongkai.
\newblock A deterministic almost-tight distributed algorithm for approximating
  single-source shortest paths.
\newblock In {\em Proceedings of the 48th Annual {ACM} {SIGACT} Symposium on
  Theory of Computing, {STOC} 2016, Cambridge, MA, USA, June 18-21, 2016},
  pages 489--498, 2016.

\bibitem{KingKT15}
Valerie King, Shay Kutten, and Mikkel Thorup.
\newblock Construction and impromptu repair of an {MST} in a distributed
  network with o(m) communication.
\newblock In {\em Proceedings of the 2015 {ACM} Symposium on Principles of
  Distributed Computing, {PODC} 2015, Donostia-San Sebasti{\'{a}}n, Spain, July
  21 - 23, 2015}, pages 71--80, 2015.

\bibitem{soda15}
Hartmut Klauck, Danupon Nanongkai, Gopal Pandurangan, and Peter Robinson.
\newblock Distributed computation of large-scale graph problems.
\newblock In {\em Proceedings of the 26th Annual {ACM-SIAM} Symposium on
  Discrete Algorithms (SODA)}, pages 391--410, 2015.

\bibitem{KP12}
Kishore Kothapalli and Sriram~V. Pemmaraju.
\newblock Super-fast 3-ruling sets.
\newblock In {\em FSTTCS}, pages 136--147, 2012.

\bibitem{KuhnMoscibrodaWattenhoferFull}
Fabian Kuhn, Thomas Moscibroda, and Roger Wattenhofer.
\newblock Local computation: Lower and upper bounds.
\newblock {\em J. ACM}, 63(2):17:1--17:44, March 2016.

\bibitem{DBLP:journals/jacm/KuttenPP0T15}
Shay Kutten, Gopal Pandurangan, David Peleg, Peter Robinson, and Amitabh
  Trehan.
\newblock On the complexity of universal leader election.
\newblock {\em J. {ACM}}, 62(1):7:1--7:27, 2015.
\newblock \href {http://dx.doi.org/10.1145/2699440}
  {\path{doi:10.1145/2699440}}.

\bibitem{LE-tcs}
Shay Kutten, Gopal Pandurangan, David Peleg, Peter Robinson, and Amitabh
  Trehan.
\newblock Sublinear bounds for randomized leader election.
\newblock {\em Theor. Comput. Sci.}, 561:134--143, 2015.

\bibitem{kutten-domset}
Shay Kutten and David Peleg.
\newblock Fast distributed construction of small $k$-dominating sets and
  applications.
\newblock {\em J. Algorithms}, 28(1):40--66, 1998.

\bibitem{LenzenWattenhofer}
Christoph Lenzen and Roger Wattenhofer.
\newblock {MIS} on trees.
\newblock In {\em Proceedings of the 30th annual ACM SIGACT-SIGOPS symposium on
  Principles of distributed computing}, PODC '11, pages 41--48, New York, NY,
  USA, 2011. ACM.
\newblock \href {http://dx.doi.org/http://doi.acm.org/10.1145/1993806.1993813}
  {\path{doi:http://doi.acm.org/10.1145/1993806.1993813}}.

\bibitem{LubySICOMP86}
M.~Luby.
\newblock A simple parallel algorithm for the maximal independent set.
\newblock {\em SIAM Journal on Computing}, 15:1036--1053, 1986.

\bibitem{icdcn17}
Ali Mashreghi and Valerie King.
\newblock Time-communication trade-offs for minimum spanning tree construction.
\newblock In {\em Proceedings of the 18th International Conference on
  Distributed Computing and Networking, Hyderabad, India, January 5-7, 2017},
  page~8, 2017.

\bibitem{Upfalbook}
M.~Mitzenmacher and E.~Upfal.
\newblock {\em Probability and Computing: Randomized Algorithm and
  Probabilistic Analysis}.
\newblock Cambridge University Press, first edition, 2005.

\bibitem{PS96}
A.~Panconesi and A.~Srinivasan.
\newblock On the complexity of distributed network decomposition.
\newblock {\em J. Algorithms}, 20(2):356--374, 1996.

\bibitem{sirocco16}
Gopal Pandurangan, David Peleg, and Michele Scquizzato.
\newblock Message lower bounds via efficient network synchronization.
\newblock In {\em Structural Information and Communication Complexity - 23rd
  International Colloquium, {SIROCCO} 2016, Helsinki, Finland, July 19-21,
  2016, Revised Selected Papers}, pages 75--91, 2016.
\newblock \href {http://dx.doi.org/10.1007/978-3-319-48314-6_6}
  {\path{doi:10.1007/978-3-319-48314-6_6}}.

\bibitem{stoc17}
Gopal Pandurangan, Peter Robinson, and Michele Scquizzato.
\newblock A time- and message-optimal distributed algorithm for minimum
  spanning trees.
\newblock In {\em Proceedings of the 49th Annual {ACM} {SIGACT} Symposium on
  Theory of Computing, {STOC} 2017; to appear}, volume abs/1607.06883, 2017.

\bibitem{PelegBook}
David Peleg.
\newblock {\em Distributed computing: a locality-sensitive approach}.
\newblock Society for Industrial and Applied Mathematics, 2000.

\bibitem{PemmarajuRiazOPODIS2016}
Sriram~V. Pemmaraju and Talal Riaz.
\newblock Using read-$k$ inequalities to analyze a distributed mis algorithm.
\newblock In {\em Principles of The 20th International Conference on Principles
  of Distributed Systems}, 2016.

\bibitem{SchneiderEW13}
Johannes Schneider, Michael Elkin, and Roger Wattenhofer.
\newblock Symmetry breaking depending on the chromatic number or the
  neighborhood growth.
\newblock {\em Theoretical Computer Science}, 509:40--50, 2013.

\bibitem{wattenhofer2010podc}
Johannes Schneider and Roger Wattenhofer.
\newblock A new technique for distributed symmetry breaking.
\newblock In {\em ACM Symp. on Principles of Distributed Computing (PODC)},
  pages 257--266, 2010.

\end{thebibliography}

\appendix
\section{Concentration Bounds} \label{sec:concentration}
\begin{theorem}[Chernoff Bound \cite{Upfalbook}] \label{thm:chernoff}
Let the random variables $X_1, X_2, \ldots, X_n$ be independently distributed in $[0, 1]$ and let $X = \sum_i X_i$.
Then, for $0 < \varepsilon < 1$,
$$Pr(X < (1-\varepsilon) E[X]) \le \exp\left(-\frac{\varepsilon^2}{2} E[X]\right).$$
\end{theorem}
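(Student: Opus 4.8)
The plan is to prove this by the standard exponential-moment (Chernoff) method, specialized to the lower tail. Write $\mu = E[X]$. First I would introduce a free parameter $t > 0$ and apply Markov's inequality to the nonnegative random variable $e^{-tX}$, obtaining
\[
  \Pr\!\left(X < (1-\varepsilon)\mu\right) = \Pr\!\left(e^{-tX} > e^{-t(1-\varepsilon)\mu}\right) \le e^{t(1-\varepsilon)\mu}\, E\!\left[e^{-tX}\right].
\]
Since the $X_i$ are independent, $E[e^{-tX}] = \prod_{i} E[e^{-tX_i}]$, so the remaining task is to control a single factor.

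To bound $E[e^{-tX_i}]$, I would use convexity of $x \mapsto e^{-tx}$ on the interval $[0,1]$: because $X_i \in [0,1]$, pointwise $e^{-tX_i} \le 1 - X_i + X_i e^{-t} = 1 + X_i(e^{-t}-1)$, and taking expectations together with $1 + y \le e^{y}$ gives $E[e^{-tX_i}] \le \exp\!\left(E[X_i](e^{-t}-1)\right)$. Multiplying over $i$ yields $E[e^{-tX}] \le \exp(\mu(e^{-t}-1))$, hence
\[
  \Pr\!\left(X < (1-\varepsilon)\mu\right) \le \exp\!\left(\mu\left[t(1-\varepsilon) + e^{-t} - 1\right]\right).
\]

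It remains to choose $t$ well. Differentiating shows the exponent $t(1-\varepsilon) + e^{-t} - 1$ is minimized over $t > 0$ at $t = -\ln(1-\varepsilon) > 0$, and substituting gives the clean bound $\exp\!\left(\mu\left[-\varepsilon - (1-\varepsilon)\ln(1-\varepsilon)\right]\right)$. The proof then finishes by verifying the elementary inequality $-\varepsilon - (1-\varepsilon)\ln(1-\varepsilon) \le -\varepsilon^2/2$ for $0 < \varepsilon < 1$, which I would do by expanding $\ln(1-\varepsilon) = -\sum_{k \ge 1}\varepsilon^k/k$ and simplifying to $(1-\varepsilon)\ln(1-\varepsilon) = -\varepsilon + \sum_{k \ge 2}\varepsilon^k/(k(k-1))$, a sum whose tail is nonnegative and hence at least $\varepsilon^2/2$ from the $k=2$ term alone.

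The only genuinely delicate point is this last analytic estimate; everything before it is the textbook recipe. If one prefers to sidestep the series manipulation, a slightly lossy route also works: plug in the suboptimal $t = \varepsilon$ and use the easily verified $e^{-\varepsilon} \le 1 - \varepsilon + \varepsilon^2/2$, which already makes the exponent at most $-\mu\varepsilon^2/2$. I expect this alternative to be the cleanest to write in full.
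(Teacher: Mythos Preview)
Your argument is correct; both the optimal choice $t = -\ln(1-\varepsilon)$ with the series estimate and the alternative $t = \varepsilon$ with the quadratic bound on $e^{-\varepsilon}$ go through as you describe. Note, however, that the paper does not actually prove this statement: it appears in the appendix as a cited concentration inequality (from \cite{Upfalbook}) with no accompanying argument, so there is no ``paper's proof'' to compare against. What you have written is the standard textbook derivation and would be entirely appropriate if a self-contained proof were required.
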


\begin{theorem}[Bernstein's Inequality \cite{dubhashi}] \label{thm:bernstein}
Let the random variables $X_1, X_2, \ldots, X_n$ be independent with $X_i - E[X_i] \le b$ for each $i$, $1 \le i \le n$.
Let $X := \sum_i X_i$ and let $\sigma^2 := \sum_i \sigma_i^2$ be the variance of $X$.
Then for any $t > 0$,
$$Pr(X > E[X] + t) \le \exp\left(\frac{t^2}{2\sigma^2(1 + bt/3\sigma^2)}\right).$$
\end{theorem}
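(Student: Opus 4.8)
The plan is to prove this by the standard exponential-moment (Chernoff) method, taking care that the hypothesis is the one-sided bound $X_i - E[X_i] \le b$ rather than a two-sided bound $|X_i - E[X_i]| \le b$. First I would recenter: set $Y_i := X_i - E[X_i]$, so that $E[Y_i] = 0$, $Y_i \le b$ almost surely, $\var{X_i} = \sigma_i^2$, and $Y := \sum_i Y_i = X - E[X]$ has variance $\sigma^2 = \sum_i \sigma_i^2$. For any $\lambda > 0$, Markov's inequality applied to $e^{\lambda Y}$ gives $\Pr(Y > t) \le e^{-\lambda t}\, E[e^{\lambda Y}] = e^{-\lambda t}\prod_i E[e^{\lambda Y_i}]$, where the factorization uses independence of the $X_i$ (hence the $Y_i$). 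The whole argument then reduces to bounding each single-variable factor $E[e^{\lambda Y_i}]$ and optimizing over $\lambda$.

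The heart of the argument is a per-variable exponential-moment estimate. I would use the fact that the function $h(y) := (e^y - 1 - y)/y^2$, with $h(0) := \tfrac12$, is nondecreasing on all of $\mathbb{R}$. This I would verify by writing $h'(y) = g(y)/y^3$ with $g(y) = (y-2)e^y + y + 2$ and checking $g''(y) = y e^y$, so that $g'$ attains its minimum value $0$ at $y=0$; this forces $g(y) \ge 0$ for $y \ge 0$ and $g(y) \le 0$ for $y \le 0$, whence $h'(y) \ge 0$ everywhere. Applying monotonicity with argument $\lambda Y_i$ under the constraint $\lambda Y_i \le \lambda b$ yields the pointwise inequality $e^{\lambda Y_i} - 1 - \lambda Y_i \le Y_i^2 (e^{\lambda b} - 1 - \lambda b)/b^2$. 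Taking expectations, using $E[Y_i]=0$ and $E[Y_i^2]=\sigma_i^2$ and then $1 + x \le e^x$, gives $E[e^{\lambda Y_i}] \le \exp\!\big(\sigma_i^2 (e^{\lambda b} - 1 - \lambda b)/b^2\big)$. Multiplying over $i$ produces the Bennett-type bound $\Pr(Y > t) \le \exp\!\big(-\lambda t + \sigma^2 (e^{\lambda b} - 1 - \lambda b)/b^2\big)$.

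To pass from this to the stated Bernstein form, I would bound the transcendental term by a factorial comparison: from $k! \ge 2\cdot 3^{k-2}$ for all $k \ge 2$ (an easy induction) one obtains $e^{x} - 1 - x = \sum_{k\ge2} x^k/k! \le \tfrac{x^2/2}{1 - x/3}$ for $0 < x < 3$, and hence $(e^{\lambda b} - 1 - \lambda b)/b^2 \le \tfrac{\lambda^2/2}{1 - \lambda b/3}$. Substituting gives the exponent $-\lambda t + \tfrac{\sigma^2\lambda^2/2}{1 - \lambda b/3}$; choosing $\lambda = t/(\sigma^2 + bt/3)$, which satisfies $\lambda b = 3bt/(3\sigma^2 + bt) < 3$, makes $1 - \lambda b/3 = \sigma^2/(\sigma^2 + bt/3)$ and collapses the exponent to $-t^2/\big(2(\sigma^2 + bt/3)\big) = -t^2/\big(2\sigma^2(1 + bt/3\sigma^2)\big)$, which is exactly the (negative) exponent in the claimed bound.

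The main obstacle I anticipate is obtaining the single-variable exponential-moment estimate under the one-sided hypothesis $Y_i \le b$ alone. The textbook truncation argument that bounds $E[Y_i^k] \le \sigma_i^2 b^{k-2}$ tacitly assumes $|Y_i| \le b$, which is not available here; the monotonicity of $h$ is precisely what repairs this, since it uses only the upper bound $\lambda Y_i \le \lambda b$ and imposes no control on how negative $Y_i$ may be. Establishing that monotonicity (via the $g$-computation above) is the one genuinely delicate step. The remaining pieces---the factorial comparison and the explicit optimization over $\lambda$---are routine algebra, and I would verify en route that the chosen $\lambda$ respects the constraint $\lambda b < 3$ needed for the factorial bound to apply.
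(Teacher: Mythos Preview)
The paper does not give its own proof of this statement: Bernstein's inequality is quoted in the appendix as a standard concentration tool, with a citation to Dubhashi--Panconesi, and is used as a black box in the message-complexity analysis. So there is no ``paper's proof'' to compare against.

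That said, your argument is correct and is essentially the textbook derivation one finds in the cited reference: Cram\'er--Chernoff via Markov on $e^{\lambda Y}$, the per-variable bound $E[e^{\lambda Y_i}] \le \exp\!\big(\sigma_i^2(e^{\lambda b}-1-\lambda b)/b^2\big)$ obtained from the monotonicity of $y \mapsto (e^y-1-y)/y^2$, the factorial comparison $k! \ge 2\cdot 3^{k-2}$ to pass from Bennett to Bernstein form, and finally the explicit choice $\lambda = t/(\sigma^2 + bt/3)$. Your emphasis that the monotonicity argument is what makes the \emph{one-sided} hypothesis $Y_i \le b$ suffice (rather than $|Y_i| \le b$) is exactly the right point, and your computation of $g$, $g'$, $g''$ checks out. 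You also implicitly flag a typo in the paper's displayed bound: the exponent should carry a minus sign, i.e.\ $\exp\!\big(-t^2/(2\sigma^2(1+bt/3\sigma^2))\big)$; as written in the paper the bound exceeds $1$ and is vacuous.
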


\end{document}